\documentclass[11pt,numbers,sort&compress]{elsarticle}
\journal{}

\makeatletter
\def\ps@pprintTitle{%
\let\@oddhead\@empty
\let\@evenhead\@empty
\def\@oddfoot{\hfill\thepage}%
\let\@evenfoot\@oddfoot}
\makeatother

\usepackage{amsmath,amssymb,amsthm,graphicx,hyperref}
\usepackage[labelfont=bf]{caption}
\providecommand{\doi}[1]{\href{https://doi.org/#1}{DOI:#1}}
\usepackage{xurl} % Loads url package with options to break everywhere
\renewcommand{\doi}[1]{%
\href{https://doi.org/#1}{\nolinkurl{DOI:#1}}%
}

\usepackage{natbib} % for \citep and \citet
\usepackage{appendix} % for the appendices
\usepackage[ruled,vlined]{algorithm2e} % for algorithm environments
\usepackage{booktabs} % for \top\mid\bottomrule in tables
\usepackage{multirow} % for \multirow command in tables
\usepackage{float} % more flexible figure placement
\usepackage{enumerate} % For more flexible item labeling
\usepackage{dsfont} % For indicator functions : command \ind
\usepackage{xcolor} % For colored comments \fred and \Guanjie
\usepackage{geometry} % To control the margins
\numberwithin{equation}{section} % equation labeling within each section
\numberwithin{table}{section} % table labeling within each section
\numberwithin{figure}{section} % figure labeling within each section

\theoremstyle{plain}
\newtheorem{theorem}{Theorem}[section]
\newtheorem{proposition}[theorem]{Proposition}
\newtheorem{lemma}[theorem]{Lemma}

\newtheorem{corollary}[theorem]{Corollary}
\newtheorem{assumption}{Assumption}

\newcommand{\EE}{\mathsf{E}}
\newcommand{\PP}{\mathsf{P}}
\newcommand{\Var}{\mathsf{Var}}
\newcommand{\rd}{\mathrm{d}}
\newcommand{\bb}[1]{\boldsymbol{#1}}
\newcommand{\ind}{\mathds{1}}
\newcommand{\OO}{\mathcal{O}}
\newcommand{\oo}{\mathrm{o}}

\allowdisplaybreaks

\makeatletter
\newcommand{\shortname}[1]{\@eadauthor={#1}}
\makeatother

\begin{document}

\begin{frontmatter}
\title{Tweedie-based nonparametric estimation for semicontinuous mixed densities\vspace{-3mm}}

\author[a1]{Guanjie Lyu}\shortname{G.~Lyu}\ead{glyu@dal.ca}
\author[a2]{Fr\'ed\'eric Ouimet}\shortname{F.~Ouimet}\ead{frederic.ouimet2@uqtr.ca}
\author[a1]{Cindy Feng}\shortname{C.~Feng}\ead{cindy.feng@dal.ca}

\address[a1]{Department of Community Health and Epidemiology, Faculty of Medicine, Dalhousie University, Canada}
\address[a2]{D\'epartement de math\'ematiques et d'informatique, Universit\'e du Qu\'ebec \`a Trois-Rivi\`eres, Canada\vspace{-5mm}}

\begin{abstract}
Semicontinuous outcomes occur frequently in health services, insurance, and cost studies. Standard nonparametric density estimators are not well suited to such data because they do not naturally accommodate the mixed structure, the nonnegative support, or the pronounced boundary effects near zero. To address these limitations, we introduce an asymmetric kernel estimator for mixed densities on $[0,\infty)$ based on the Tweedie distribution. For a power parameter $p\in(1,2)$, the Tweedie kernel itself has a point mass at zero and an absolutely continuous component on $(0,\infty)$, yielding a unified smoothing construction that preserves the atom at zero and smooths the positive component using the full semicontinuous sample. We establish pointwise bias and variance expansions, derive asymptotic formulae for the mean squared error and mean integrated squared error, obtain optimal bandwidth rates, and prove asymptotic normality. We propose a profile least-squares cross-validation procedure to jointly select the bandwidth and the power parameter. Simulation results show competitive performance, particularly in challenging boundary-spike and heavy-tailed settings, and an application to emergency department length-of-stay data illustrates the practical value of the method.
\end{abstract}

\begin{keyword}
Asymmetric kernels \sep mixed density estimation \sep Tweedie distribution.
\end{keyword}

\end{frontmatter}

\thispagestyle{empty}

\section{Introduction}

Semicontinuous outcomes, characterized by a point mass at zero together with a continuously distributed positive component, arise naturally in many applied settings where no-event or nonuse observations coexist with positive, often strongly right-skewed realizations. Such data have long been recognized in statistics; an early treatment is due to \citet{Aitchison1955semicontinuous}. Since then, a substantial modeling literature has developed, including two-part and marginalized two-part formulations for cross-sectional and longitudinal settings; see, among many others, \citet{Olsen2001TwoPart}, \citet{Smith2014marginalized}, \citet{Smith2017Semicts}, and \citet{Yiu2018two}, as well as the survey by \citet{Min2022semicontinuous}. From a distributional point of view, however, semicontinuous data remain intrinsically challenging because the point mass at zero, the nonnegative support, and the shape of the positive component must all be accommodated simultaneously. Much of this literature has focused on parametric or semiparametric modeling of semicontinuous responses, particularly through two-part formulations, whereas direct nonparametric estimation of the full mixed density has received comparatively less attention. In this setting, flexible nonparametric approaches are valuable because they represent the underlying distribution in a data-driven way, avoid restrictive assumptions, facilitate model assessment, and capture complex features such as boundary behavior and tail structure.

From a nonparametric perspective, this leaves an important gap. Direct estimation of the full mixed density is challenging because the target combines a point mass at zero with a continuously distributed positive component on a nonnegative support. Any nonparametric approach must therefore address two related difficulties simultaneously: the boundary behavior of the estimator near zero and the need to accommodate the discrete and continuous parts within a single distributional target. Classical Rosenblatt-type symmetric kernel estimators \citep{Rosenblatt56} do not respect the support constraint and therefore suffer from boundary bias in neighborhoods of the origin \citep[see, e.g.,][]{Schuster1985,Jones1993,MarronRuppert1994}. This limitation has motivated an extensive literature on boundary correction; see, for example, \citet{Cowling1996Boundary} and the discussion by \citet{KarunamuniAlberts2005}. An alternative strategy is to use asymmetric kernel smoothers whose support matches that of the target density and whose shape varies with the point of estimation. Early positive-support constructions of this type include the smoothed-histogram approach of~\citet{Gawronski1980,GawronskiStadtmuller1981}, while \citet{Chen1999,Chen2000} introduced Beta and Gamma kernel estimators for densities supported on $[0,1]$ and $[0,\infty)$, respectively. Subsequent work has established consistency and refined asymptotic properties for related beta- and gamma-based smoothers, including exact $L_1$ results, weak and uniform consistency, bias correction, multivariate extensions, and density-derivative estimation; see, e.g., \cite{Bouezmarni2003consistency,Bouezmarni2005consistency,FunkeKawka2015MBC,Hirukawa2022UniformBeta,FunkeHirukawa2024Derivative,Funke2025UniformConsistency}. Related asymmetric-kernel methods have also been developed for discontinuity testing and threshold estimation in positively supported skewed densities \citep{FunkeHirukawa2019Discontinuity,FunkeHirukawa2025Splicing}. However, these developments are aimed primarily at fully continuous densities. By contrast, the present problem is genuinely semicontinuous, so that the point mass at zero and the positive component must be handled jointly within a single nonparametric construction.

A natural way to bridge these perspectives is to exploit a distributional family whose intrinsic support and internal structure already mirror those of semicontinuous data. For a power parameter $p\in(1,2)$, Tweedie exponential dispersion models are compound Poisson--Gamma laws on $[0,\infty)$ with a positive point mass at zero and an absolutely continuous component on $(0,\infty)$, as noted by \citet{Tweedie1984Original} and developed systematically by \citet{Jorgensen1987ExponentialDispersion,Jorgensen1997theory}. This observation suggests a kernel construction that differs fundamentally from standard asymmetric kernel smoothers: rather than adapting a purely continuous kernel to the positive half-line, one may use a kernel whose own distributional form already respects both the boundary and the mixed discrete--continuous character of the target. Motivated by this idea, we develop a Tweedie-based nonparametric estimator for the full semicontinuous density, allowing the local amount and shape of smoothing to vary through the Tweedie variance function while retaining the atom at zero within a unified framework.

In this paper, we introduce a Tweedie kernel estimator for the full mixed density $g$ on $[0,\infty)$. Unlike conventional two-part smoothing procedures, which estimate the positive component using only the positive observations, the proposed estimator is built from the entire semicontinuous sample within a single kernel-averaging construction. At $x = 0$, it coincides exactly with the empirical proportion of zeros; for $x > 0$, it yields asymmetric smoothing on the positive half-line, with the zero observations continuing to affect estimation near the boundary. This feature is particularly important near zero, where standard symmetric kernel smoothers suffer from boundary bias and where the coexistence of an atom at zero with small positive observations makes the shape of the continuous component especially difficult to recover. The asymptotic analysis establishes pointwise bias and variance expansions, derives asymptotic formulae for both the mean squared error (MSE) and mean integrated squared error (MISE) along with their corresponding optimal bandwidth rates, and proves asymptotic normality under regularity conditions tailored to the behavior of the positive component near the origin. A profile least-squares cross-validation procedure is also developed for the joint selection of the bandwidth and the Tweedie power parameter, allowing the method to adapt both the amount of smoothing and the kernel shape. Finite-sample performance is then studied under both correctly specified and misspecified semicontinuous models, and an application to emergency department length-of-stay data illustrates the estimator's practical use.

The remainder of the paper is organized as follows. Section~\ref{sec:estimator} introduces the semicontinuous framework and defines the Tweedie kernel estimator. Section~\ref{sec:main.results} develops the asymptotic theory. Section~\ref{sec:simulations} investigates finite-sample performance through simulation studies. Section~\ref{sec:real.data.application} presents the emergency department data application. Section~\ref{sec:conclusion} presents concluding remarks. The appendices supply supplementary technical and empirical details. Specifically, \ref{app:auxiliary.results} collects necessary auxiliary mathematical results, such as near-zero bounds and moment conditions, which serve as the foundation for the detailed proofs of the main theoretical results presented in \ref{app:proofs.main.results}. Furthermore, \ref{app:additional.tables} provides comprehensive numerical tables detailing the Monte Carlo performance metrics. Finally, to ensure full reproducibility, a link to the complete \textsf{R} code used to generate all figures, run the simulation studies, and conduct the real-data application is provided at the end of the manuscript.

\newpage
\section{Semicontinuous data and Tweedie kernel estimator}\label{sec:estimator}

Formally, a random variable $X$ is \emph{semicontinuous} if it admits the representation
\[
X \stackrel{d}{=}
\begin{cases}
0, & \text{with probability } p_0, \\
Y, & \text{with probability } 1 - p_0,
\end{cases}
\]
where $p_0\in(0,1)$ and $Y$ has distribution function $F$ supported on $(0,\infty)$ with density $f$ with respect to the Lebesgue measure. Equivalently, the cumulative distribution function $G$ of $X$ has the form
\begin{equation}\label{eq:semicontinuous}
G(x) = p_0 \ind(x \geq 0) + (1 - p_0) F(x), \qquad x\in \mathbb{R},
\end{equation}
where $\ind(\cdot)$ is the indicator function. The corresponding mixed density can be written as
\[
g(x) = p_0 \ind(x = 0) + (1 - p_0) f(x) \ind(x > 0), \qquad x\in \mathbb{R}.
\]
For $x\in (0,\infty)$, we write $g_{+}(x) = g(x) = (1-p_0) f(x)$.

The presence of an atom at zero violates the assumptions underlying classical kernel density estimators, which are typically constructed for fully continuous densities and rely on local smoothness in a neighborhood of each evaluation point. In particular, standard symmetric kernels produce boundary bias near zero and are ill-suited for distributions combining discrete and continuous components. Consequently, estimation for semicontinuous data requires methods that explicitly account for the mixed nature of the underlying distribution.

A natural nonparametric approach is to focus on estimating the continuous component density $f$ on $(0,\infty)$ while treating the point mass $p_0$ separately. This perspective aligns with two-part modeling frameworks and allows the use of kernel-based techniques adapted to the positive support~\citep{Aitchison1955semicontinuous, Min2022semicontinuous}. In what follows, we consider nonparametric estimation of $g$ through a single kernel-averaging construction and develop a kernel estimator based on the Tweedie family~\citep{Tweedie1984Original} that is adapted to semicontinuous distributions.

Let $X_1,\ldots,X_n$ be independent and identically distributed (i.i.d.) observations from the semicontinuous distribution in~\eqref{eq:semicontinuous}. For each evaluation point $x\geq 0$, let $K_h(\cdot; x)$ denote the Tweedie kernel with mean parameter $\mu = x$, dispersion parameter $h > 0$, and fixed power parameter $p\in(1,2)$. When $x > 0$, this kernel is a compound Poisson--Gamma distribution supported on $[0,\infty)$, with a positive atom at zero and an absolutely continuous component on $(0,\infty)$. Writing
\[
\lambda_x = \frac{x^{2 - p}}{h(2 - p)},\qquad \alpha = \frac{2 - p}{p - 1},\qquad \beta_x = h(p - 1)x^{p - 1},
\]
the kernel is given by
\[
K_h(t; x) = e^{-\lambda_x } \, \ind(t = 0) + e^{-\lambda_x} \sum_{j=1}^{\infty} \frac{\lambda_x ^j}{j!} \, \frac{t^{j\alpha - 1}e^{-t/\beta_x }}{\Gamma(j\alpha)\beta_x ^{j\alpha}} \, \ind(t > 0), \qquad t\geq 0, ~x > 0.
\]
Here $K_h(0; x) = e^{-\lambda_x}$ denotes the point mass assigned to zero, whereas $K_h(t; x)$ for $t > 0$ denotes the Lebesgue subdensity of the absolutely continuous part. When $x = 0$, the Tweedie distribution degenerates at zero; accordingly, the kernel is defined by
\[
K_h(t; 0) = \ind(t = 0),\qquad t\ge0.
\]
The dispersion parameter $h$ plays the role of a bandwidth and controls the degree of local smoothing. The Tweedie kernel estimator of the mixed density $g$ is defined on the full support $[0,\infty)$ by
\[
\widehat{g}_h(x) = \frac{1}{n}\sum_{i=1}^n K_h(X_i; x), \qquad x\geq 0.
\]
Thus, the estimator is obtained by averaging Tweedie kernel contributions over the full semicontinuous sample. For $x > 0$, each observation contributes a location-dependent asymmetric kernel adapted to the positive support and the boundary at zero, while the special definition at $x = 0$ ensures that the atom at the origin is treated exactly within the same construction.

The definition of the kernel at the origin implies that the estimator treats the point mass at zero exactly. Indeed, since $K_h(t; 0) = \ind(t = 0)$, the induced estimator of the zero probability is
\[
\widehat{p}_0 = \widehat{g}_h(0) = \frac{1}{n}\sum_{i=1}^n K_h(X_i; 0) = \frac{1}{n}\sum_{i=1}^n \ind(X_i = 0).
\]
Thus, the implied estimator of the atom at zero coincides exactly with the empirical proportion of zeros.

For $x > 0$, the estimator $\widehat{g}_h(x)$ is still constructed from the full semicontinuous sample. Positive observations contribute through the continuous part of the Tweedie kernel, while zero observations contribute through the atom $K_h(0; x) = e^{-\lambda_x }$. In contrast to conventional two-part procedures, which estimate the positive component using only the positive observations after estimating the zero mass separately, the proposed Tweedie kernel estimator preserves the mixed structure within a single smoothing framework. This distinction is especially relevant near the origin, where the boundary, the atom at zero, and small positive observations interact most strongly. The effect is illustrated below and examined more systematically in the simulation study.

To make this distinction concrete, we generate a small sample of size $n = 15$ from a Tweedie distribution with power parameter $p = 1.5$, mean $\mu = 2$, and dispersion parameter $\phi$ chosen so that the probability of an exact zero is approximately $0.25$. This setting yields a semicontinuous sample containing roughly three or four zero observations, with the remaining observations distributed over the positive half-line. Figure~\ref{fig:illustration} displays the individual kernel contributions (dashed curves) for the two competing approaches, together with their aggregate estimate (dotted curve). In the left panel, the Tweedie kernel estimator assigns a Tweedie kernel to every observation, including the zero-valued observations, shown by the red dashed curves and red rug marks. As a result, the atom at zero contributes directly to estimation of the positive-component density near the boundary. In the right panel, the conventional two-part Gamma kernel estimator discards the zero observations and constructs the density estimate using only the positive observations, represented by the blue dashed curves.

\medskip
\begin{figure}[ht]
\centering
\includegraphics[width=0.49\textwidth, trim=2mm 1mm 2mm 1mm, clip]{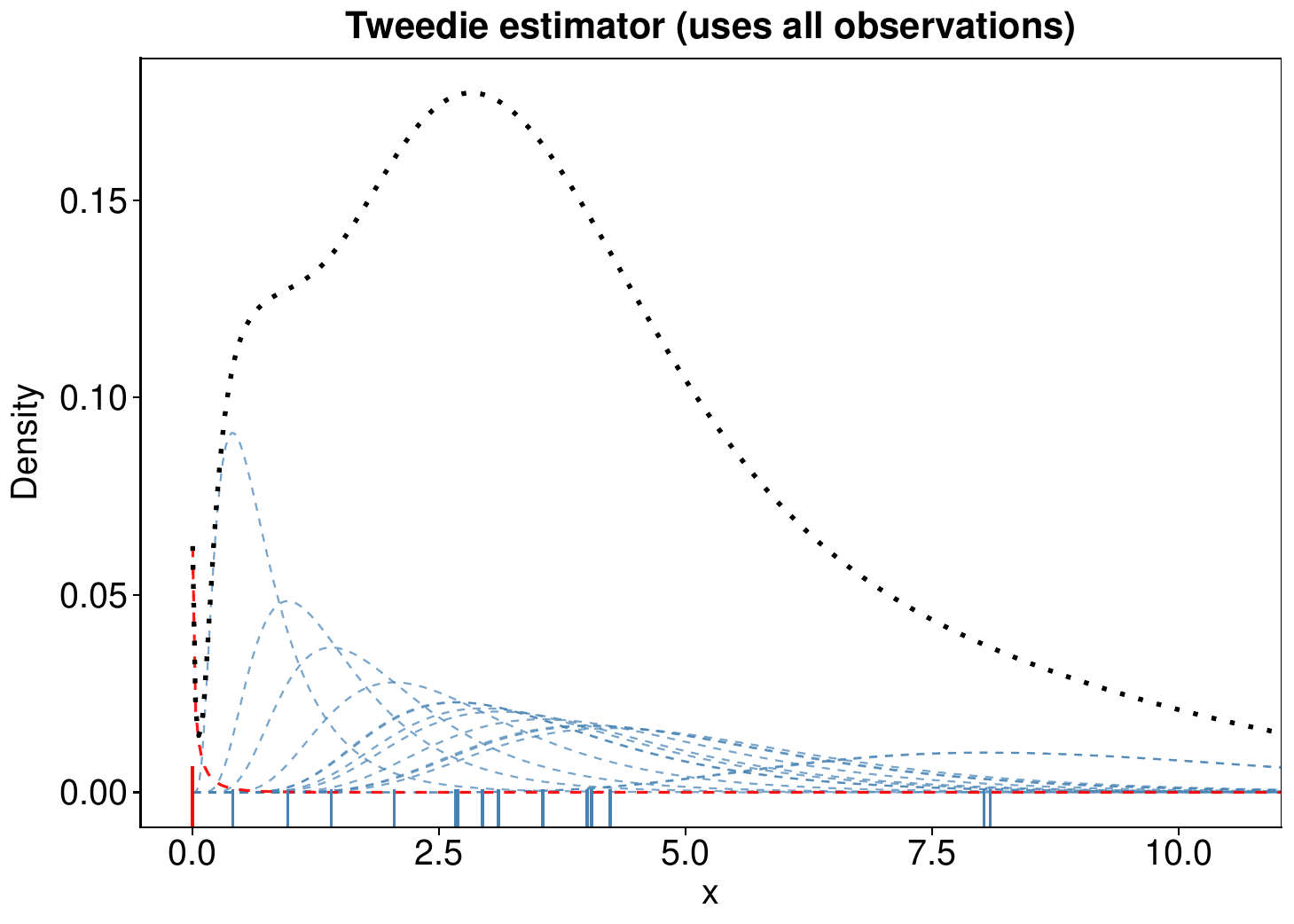}
\hfill
\includegraphics[width=0.49\textwidth, trim=2mm 1mm 2mm 1mm, clip]{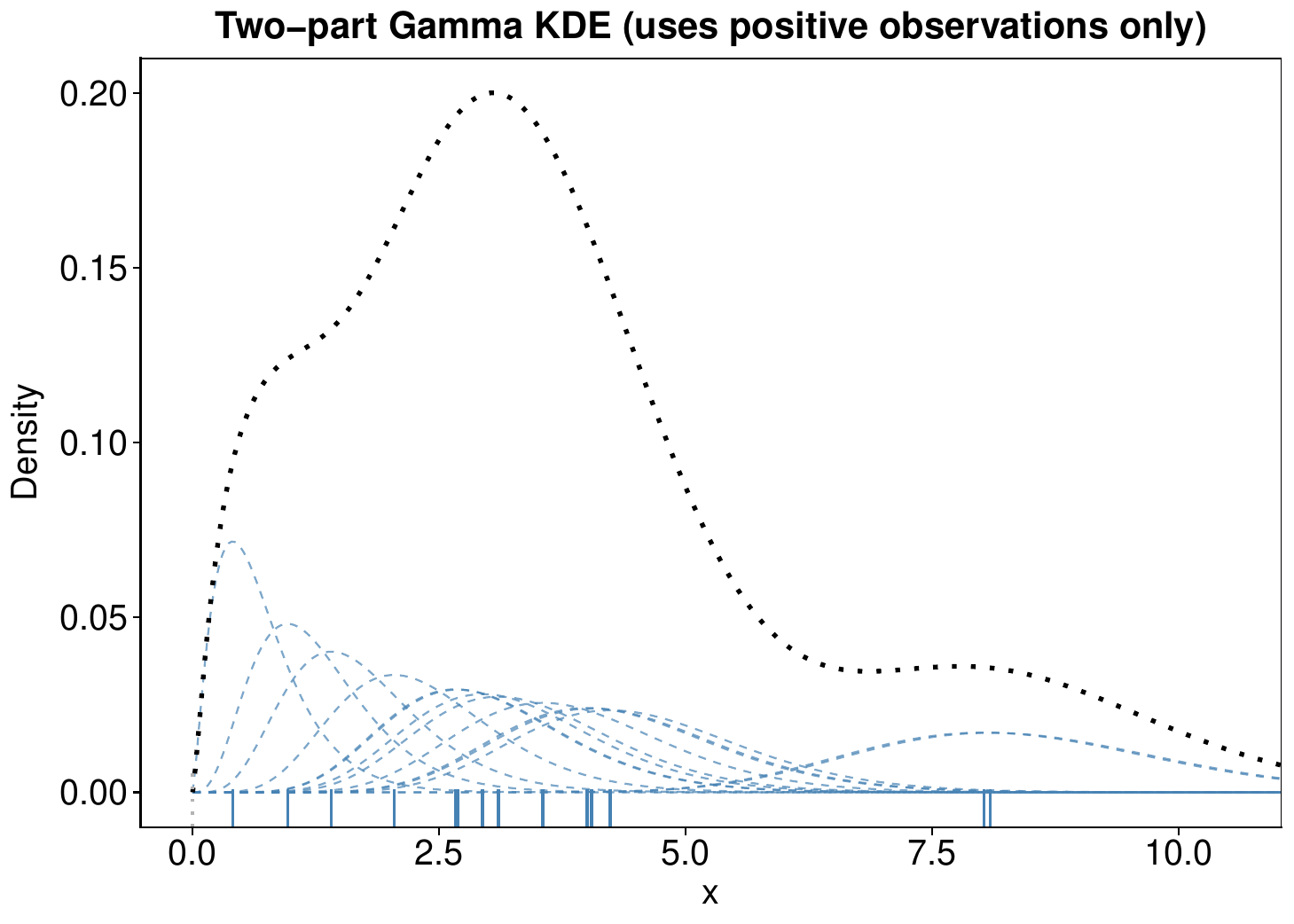}
\caption{Illustration of the estimator construction under a semicontinuous sample with positive observations and an atom at zero. The left panel illustrates the proposed Tweedie kernel estimator as the average of Tweedie kernels centered at each observation, including both zero and positive values, whereas the right panel shows a conventional two-part approach based only on the positive observations. The blue dashed curves correspond to the individual kernel contributions from the positive observations, and the black dotted curve represents the corresponding aggregate estimate. The red marks on the horizontal axis of the left panel indicate the locations of the zero observations.}
\label{fig:illustration}
\end{figure}

\section{Main results}\label{sec:main.results}

Because $\widehat{g}_h(0) = \widehat{p}_0$, the only nontrivial asymptotic analysis concerns the case $x > 0$. The results in this section are therefore pointwise at a fixed interior location $x$, but they must still account for the influence of the boundary at zero through the support of the Tweedie kernel. The assumptions below separate these two aspects: local smoothness around the evaluation point and integrability of the positive component near the origin.

\begin{assumption}\label{ass:smooth}
The density $f$ of the positive component is twice continuously differentiable in a neighborhood of any fixed $x > 0$, and $f''$ is bounded on that neighborhood.
\end{assumption}

\begin{assumption}\label{ass:parameter}
As $n \to \infty$, we have $h = h(n) \to 0$.
\end{assumption}

\begin{assumption}\label{ass:parameter2}
As $n \to \infty$, we have $nh^{1/2} \to \infty$.
\end{assumption}

\begin{assumption}\label{ass:boundary}
Let $x\in(0,\infty)$ be fixed, and let $r = 2 + \delta$ for some $\delta > 0$. Define $\alpha = (2 - p)/(p - 1)$. Assume that there exists $\varepsilon\in(0,x/2)$ such that
\[
\int_0^{\varepsilon} u^{\, r(\alpha - 1)} f(u)\, \rd u < \infty.
\]
\end{assumption}

A convenient sufficient condition for Assumption~\ref{ass:boundary} is that there exist constants $C_f > 0$, $\varepsilon\in(0,x/2)$, and $\kappa > -1$ such that
\[
f(u)\leq C_f u^{\kappa}, \qquad 0 < u\leq \varepsilon,
\]
with $\kappa > r(1 - \alpha) - 1$. Indeed, under this bound,
\[
u^{r(\alpha - 1)}f(u)\leq C_f u^{\kappa + r(\alpha - 1)},
\]
and the right-hand side is integrable near zero precisely when $\kappa + r(\alpha - 1) > -1$. When $p\leq 3/2$, one has $\alpha\ge1$, so $u^{r(\alpha - 1)}$ is bounded near zero and therefore Assumption~\ref{ass:boundary} is automatically satisfied for any positive-component density $f$.

The roles of the assumptions are as follows. Assumption~\ref{ass:smooth} provides the local smoothness needed for the Taylor expansion at the evaluation point, Assumption~\ref{ass:parameter} places the estimator in the vanishing-bandwidth regime, and Assumption~\ref{ass:parameter2} enters the asymptotic normality result through the effective sample size $nh^{1/2}$. Assumption~\ref{ass:boundary} is specific to the mixed Tweedie construction and controls the contribution from the near-zero region. Although the asymptotic analysis is carried out at a fixed interior point $x > 0$, the Tweedie kernel has support on $[0,\infty)$, and observations close to the boundary can contribute through the left tail of the kernel.

This boundary contribution is the main additional technical feature relative to more classical positive-support asymmetric-kernel estimators. Near zero, the absolutely continuous component of the Tweedie kernel behaves like $u^{\alpha - 1}$, where $\alpha = (2 - p)/(p - 1)$. Consequently, the contribution from a neighborhood of the origin depends on the integrability of $u^{r(\alpha - 1)}f(u)$, which is precisely the role of Assumption~\ref{ass:boundary}. The requirement becomes more restrictive as $p\uparrow 2$: since $\alpha\downarrow 0$, the sufficient condition $\kappa > r(1 - \alpha) - 1$ approaches $\kappa > r - 1 = 1 + \delta$, whereas for $p\le3/2$ one has $\alpha\ge1$, so boundedness of $f$ near zero is enough. Proposition~\ref{prop:near-zero} shows that, under this condition, the near-zero contribution is exponentially small in $h^{-1}$. This allows the fixed-$x$ bias and variance expansions below to retain the same leading-order form as in the gamma-kernel case, despite the additional boundary difficulty introduced by the mixed Tweedie kernel.

The first result gives the pointwise bias expansion. Its leading term is of order $h$, with coefficient determined by the local curvature of $g$ and the Tweedie variance structure through the factor $x^p$.

\begin{theorem}\label{thm:bias_tkde}
Suppose that Assumptions~\ref{ass:smooth},~\ref{ass:parameter} and~\ref{ass:boundary} hold. Then, for any fixed $x\in (0,\infty)$, as $n \to \infty$,
\[
\EE\{\widehat{g}_h(x)\} - g(x) = \frac{1}{2} \, h \, x^{p} g''(x) + \oo(h).
\]
\end{theorem}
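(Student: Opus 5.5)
The plan is to write the expectation as an integral against the mixed law of $X$ and exploit the moment structure of the Tweedie kernel. For fixed $x>0$,
\[
\EE\{\widehat g_h(x)\} = p_0\, e^{-\lambda_x} + \int_0^\infty K_h(u;x)\, g_+(u)\,\rd u .
\]
The first term is exponentially small in $h^{-1}$, since $\lambda_x = x^{2-p}/\{h(2-p)\}\to\infty$, and is therefore $\oo(h)$. For the integral, I would use the key symmetry of the kernel: as a function of $u$ for fixed $x$, $K_h(u;x)$ is the Tweedie density, but here it is integrated against $g_+(u)$, which is a function of the data point $u$. So the natural object is $T_x(u) := K_h(u;x)$ viewed as a function of $u$.

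The difficulty is that $\int K_h(u;x)g_+(u)\,\rd u$ is not directly an expectation of $g_+$ under a random variable with mean $x$. As in Chen's gamma-kernel analysis, I would therefore swap roles and write the integral as $\EE\{g_+(\xi_{h,x})\}\cdot c_h(x)$ for a suitable random variable $\xi_{h,x}$ whose density in $u$ is proportional to $K_h(u;x)$. This variable carries the Tweedie law with mean $x$ and variance $hx^p$, and $c_h(x)=\int_0^\infty K_h(u;x)\,\rd u = 1-e^{-\lambda_x} = 1-\oo(h)$, because the continuous part has total mass $1-e^{-\lambda_x}$. This view works precisely because the estimator evaluates the kernel at data point $u$ with mean parameter $x$: the map $u\mapsto K_h(u;x)$ is the continuous part of a Tweedie law $\mathrm{Tw}_p(x,h)$. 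Hence
\[
\int_0^\infty K_h(u;x)g_+(u)\,\rd u = \EE\{g_+(Z)\ind(Z>0)\}, \qquad Z\sim \mathrm{Tw}_p(x,h),
\]
with $\EE Z = x$ and $\Var Z = hx^p$.

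Next I would split the domain into three regions: $\{|u-x|\le\eta\}$ for small $\eta<x/2$ inside the smoothness neighbourhood of Assumption~\ref{ass:smooth}, the near-zero region $(0,\varepsilon]$, and the remaining far region.
\begin{itemize}
\item \emph{Near region.} Taylor expansion gives $g_+(u)=g(x)+g'(x)(u-x)+\tfrac12 g''(\tilde u)(u-x)^2$. Integrating, the linear term vanishes up to tail corrections, and the quadratic term yields $\tfrac12 g''(x)hx^p+\oo(h)$, using continuity of $g''$ and $\EE[(Z-x)^2\ind\{|Z-x|>\eta\}]=\oo(h)$.
\item \emph{Near-zero region.} This piece is controlled by Proposition~\ref{prop:near-zero} under Assumption~\ref{ass:boundary}: Hölder's inequality with exponent $r$ together with the $u^{\alpha-1}$ behaviour of the kernel near zero gives a bound that is exponentially small in $h^{-1}$.
\item \emph{Far region.} Here I would bound $\PP(|Z-x|>\eta)$ via exponential (Chernoff) bounds from the Tweedie cumulant generating function, which give $\exp(-c/h)$. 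Combined with integrability of $g_+$, this requires some uniform control of the kernel's sup over the far region; a Hölder argument paralleling the near-zero step seems appropriate.
\end{itemize}

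The main obstacle is the tail and near-zero control. The kernel does not have a bounded density near $u=0$ when $\alpha<1$, and $g_+$ is only assumed integrable there, so one cannot simply bound $g_+$ by a constant. I would lean on Proposition~\ref{prop:near-zero} for the region near the origin and on a large-deviation bound for $Z$ for the far tail, treating the kernel supremum on $[\varepsilon,\infty)\setminus[x-\eta,x+\eta]$ via the saddlepoint/series representation, which is $O(h^{-1/2}e^{-c/h})$.

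Collecting the pieces, $\EE\{\widehat g_h(x)\} = g(x) + \tfrac12 h x^p g''(x) + \oo(h)$, which is the claim of Theorem~\ref{thm:bias_tkde}.
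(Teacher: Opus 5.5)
Your proposal is correct and is essentially the paper's proof. It uses the same representation $\EE\{\widehat g_h(x)\}=p_0\,e^{-\lambda_x}+\EE\{g_{+}(Z)\ind(Z>0)\}$ with $Z\sim\mathrm{Tw}_p(x,h)$ and the same three-region split: Proposition~\ref{prop:near-zero} with Assumption~\ref{ass:boundary} near the origin, a Taylor expansion using $\EE Z=x$ and $\Var Z=hx^p$ on the local region, and the saddlepoint supremum bound of Proposition~\ref{prop:away-zero} on the far region, which is indeed needed because $g_{+}$ need not be bounded there.

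Your deviations are minor and valid. For the truncated moments $\EE\{|Z-x|^m\ind(|Z-x|>\eta)\}$, $m=0,1,2$, you use Chernoff bounds from the explicit Tweedie cumulant generating function plus Cauchy--Schwarz, where the paper uses the pointwise kernel bounds. For $\int_0^{\varepsilon}u^{\alpha-1}f(u)\,\rd u<\infty$ you use H\"older, where the paper splits cases on $\alpha$. Your one slip, the claim that the normalized variable $\xi_{h,x}$ has mean $x$ (its mean is actually $x/(1-e^{-\lambda_x})$), is harmless because you then work with $Z$ itself.
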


The next theorem provides the matching pointwise variance expansion. Its leading order, $n^{-1}h^{-1/2}$, determines the stochastic scale of the estimator and will later induce the normalization used in the asymptotic normality result.

\begin{theorem}\label{thm:var}
Suppose that Assumptions~\ref{ass:smooth},~\ref{ass:parameter} and~\ref{ass:boundary} hold. Then, for any fixed $x\in (0,\infty)$, as $n \to \infty$,
\[
\Var\bigl\{\widehat{g}_h(x)\bigr\} = \frac{g(x)}{2\sqrt{\pi} \, x^{p/2}} n^{-1}h^{-1/2} + \oo({n}^{-1}h^{-1/2}).
\]
\end{theorem}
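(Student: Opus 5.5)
Since the $X_i$ are i.i.d., I start from
\[
\Var\{\widehat{g}_h(x)\} = n^{-1}\bigl[\EE\{K_h^2(X_1;x)\} - (\EE\{K_h(X_1;x)\})^2\bigr].
\]
Theorem~\ref{thm:bias_tkde} gives $\EE\{K_h(X_1;x)\} = g(x)+\OO(h) = \OO(1)$, so the squared-mean term is $\OO(n^{-1})$. This is $\oo(n^{-1}h^{-1/2})$ because $h\to0$. The work is therefore in
\[
\EE\{K_h^2(X_1;x)\} = p_0 e^{-2\lambda_x} + (1-p_0)\int_0^\infty K_h^2(u;x) f(u)\,\rd u .
\]
The atom term $p_0e^{-2\lambda_x}$ is exponentially small in $h^{-1}$, since $\lambda_x\asymp h^{-1}$. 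I split the integral into $(0,\varepsilon)$ and $[\varepsilon,\infty)$, with $\varepsilon$ as in Assumption~\ref{ass:boundary}.

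\textbf{Near-zero piece.} On $(0,\varepsilon)$ I use the near-zero bound of Proposition~\ref{prop:near-zero}. The continuous part of the kernel behaves like $u^{\alpha-1}$ times a factor that is exponentially small in $h^{-1}$ when $u<\varepsilon<x/2$. By H\"older's inequality with exponent $r/2$ (or directly, using that bound with $r=2+\delta\ge 2$), $\int_0^\varepsilon K_h^2(u;x)f(u)\,\rd u$ is controlled by $\int_0^\varepsilon u^{r(\alpha-1)}f(u)\,\rd u<\infty$ times an exponentially small factor. So this piece is $\oo(h^{-1/2})$.

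\textbf{Main piece.} On $[\varepsilon,\infty)$ I write $K_h^2(u;x)=A_h(x)\,L_h(u;x)$, where $L_h(\cdot;x)=K_h^2(\cdot;x)/\int K_h^2$ is a probability density on $(0,\infty)$ that concentrates at $x$. The goal is to show
\[
A_h(x)=\int_0^\infty K_h^2(u;x)\,\rd u \sim \frac{1}{2\sqrt{\pi}\,x^{p/2}}h^{-1/2},
\]
and that $\int_\varepsilon^\infty L_h(u;x) f(u)\,\rd u\to f(x)$ by continuity of $f$ at $x$. Multiplying by $(1-p_0)$ then turns $f(x)$ into $g(x)$, which gives the stated leading term.

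For the asymptotics of $A_h$, I use the local normal approximation of the Tweedie law with mean $x$ and variance $hx^p$. Uniformly for $|u-x|\le h^{1/2-\eta}$,
\[
K_h(u;x) = (2\pi h x^p)^{-1/2}\exp\{-(u-x)^2/(2hx^p)\}\,(1+\oo(1)).
\]
This can be obtained either from the saddlepoint/exponential-dispersion representation $K_h(u;x)=a(u,h)\exp\{-d(u,x)/(2h)\}$ with unit deviance $d$, or by Stirling's formula on the dominant Poisson terms $j\approx\lambda_x$. Squaring and integrating the Gaussian gives $(2\pi hx^p)^{-1}\sqrt{\pi h x^p}=(2\sqrt{\pi}\,x^{p/2})^{-1}h^{-1/2}$. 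Outside the window $|u-x|\le h^{1/2-\eta}$, but still on $[\varepsilon,\infty)$, I bound $K_h$ by its supremum, which is $\OO(h^{-1/2})$, times a Chernoff/large-deviation tail of the Tweedie law, and this is exponentially small. Continuity of $f$ in a neighbourhood of $x$ (Assumption~\ref{ass:smooth}) then gives $\int L_h f\to f(x)$.

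\textbf{Main obstacle.} I expect the hardest step to be making the local Gaussian approximation of the compound Poisson--Gamma subdensity uniform on the window, together with the matching sup bound $\sup_{u\ge\varepsilon}K_h(u;x)=\OO(h^{-1/2})$. I would first try the saddlepoint form with the unit deviance $d(u,x)$, Taylor-expanding $d(u,x)=(u-x)^2/x^p+\OO(|u-x|^3)$. The prefactor $a(u,h)$ I would control via its known asymptotic $a(u,h)\sim(2\pi h u^p)^{-1/2}$ as $h\to0$, uniformly for $u$ in compact subsets of $(0,\infty)$; this is the Jørgensen small-dispersion asymptotics. Because the uniformity holds only on compacts, I would restrict the sup bound to $[\varepsilon,M]$. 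The region $u>M$ I would handle separately by the exponential tail of the kernel.
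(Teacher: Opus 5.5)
Your skeleton agrees with the paper's: the atom term is exponentially small, the squared mean is $\OO(1)$ by Theorem~\ref{thm:bias_tkde}, the near-zero piece is handled by Proposition~\ref{prop:near-zero} plus H\"older, and the centre uses a local Gaussian window. Writing the main piece as $A_h\int L_h f$, with a Chernoff bound off a shrinking window, is a legitimate alternative to the paper's split into an $M\sqrt{h}$ window and an intermediate region. However, two steps in that main piece fail as written.

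First, $A_h(x)=\int_0^\infty K_h^2(u;x)\,\rd u$ need not be finite. Every term of the series defining $k_h$ is nonnegative, so $k_h(u;x)\geq e^{-\lambda_x}\lambda_x u^{\alpha-1}e^{-u/\beta_x}/\{\Gamma(\alpha)\beta_x^{\alpha}\}$. Hence $k_h^2(u;x)$ is bounded below near the origin by a positive multiple of $u^{2(\alpha-1)}$, which is not integrable when $\alpha\leq 1/2$, i.e.\ when $p\geq 5/3$. For such $p$, $A_h(x)=\infty$ for every $h$, so $L_h$ is undefined and the claimed asymptotic $A_h\sim h^{-1/2}/(2\sqrt{\pi}\,x^{p/2})$ is false. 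This is exactly why the paper never integrates $k_h^2$ near zero without the weight $f$, and why Assumption~\ref{ass:boundary} is stated for $u^{r(\alpha-1)}f(u)$. You must normalize over $[\varepsilon,\infty)$ only, which fits the rest of your argument.

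Second, ``continuity of $f$ at $x$ gives $\int L_h f\to f(x)$'' does not follow from what you prove. The sup bound $\OO(h^{-1/2})$ times a Chernoff tail probability only shows that the \emph{mass} of $L_h$ outside a neighbourhood of $x$ vanishes. But $f$ is merely a density and may be unbounded away from $x$. You therefore need $\sup\{L_h(u;x): u\geq\varepsilon,\ |u-x|>\rho\}\to 0$, that is, a \emph{pointwise} bound $K_h^2(u;x)=\oo(h^{-1/2})$ uniformly on that set, including $u>M$; then $\int f\leq 1$ finishes. Your sup bound only gives $L_h=\OO(h^{-1/2})$ there, and a tail probability for $u>M$ is not a pointwise bound. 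The paper supplies this through Proposition~\ref{prop:away-zero}, whose tail part rests on Lemma~\ref{lem:wright-bound}.

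Within your framework there is a shorter fix. Your representation is an exact identity with prefactor $a(u,h)=k_h(u;u)$, and Tweedie scaling gives $k_h(u;u)=u^{-1}k_{hu^{p-2}}(1;1)$. The one-point asymptotic $k_\phi(1;1)\sim(2\pi\phi)^{-1/2}$ then yields $K_h(u;x)\leq C h^{-1/2}u^{-p/2}e^{-d_p(u,x)/(2h)}$ uniformly for $u\geq\varepsilon$. This also removes the compactness restriction you flag as the main obstacle. (Minor: the uniform Gaussian approximation on $|u-x|\leq h^{1/2-\eta}$ requires $\eta<1/6$, so that the cubic deviance term $\OO(|u-x|^3/h)$ vanishes.)
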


Combining the preceding bias and variance expansions yields the usual asymptotic risk approximations. The next two corollaries provide the pointwise $\mathrm{MSE}$ and $\mathrm{MISE}$ expansions, together with the corresponding optimal bandwidth rates.

\begin{corollary}\label{cor:mse-opt}
Suppose that Assumptions~\ref{ass:smooth},~\ref{ass:parameter} and~\ref{ass:boundary} hold. Then, for any fixed $x\in (0,\infty)$, as $n \to \infty$,
\[
\mathrm{MSE}\{\widehat{g}_h(x)\} = \frac{1}{4} x^{2p} \{g''(x)\}^2 h^2 + \frac{g(x)}{2\sqrt{\pi} \, x^{p/2}} n^{-1} h^{-1/2} + \oo\bigl(h^2 + n^{-1} h^{-1/2}\bigr).
\]
The asymptotically MSE-optimal bandwidth is
\[
h_{\mathrm{opt}}(x) = \left(\frac{g(x)}{2\sqrt{\pi} \, x^{5p/2} \{g''(x)\}^2}\right)^{2/5} n^{-2/5},
\]
and the corresponding optimal MSE satisfies
\[
\mathrm{MSE}\{\widehat{g}_{h_{\mathrm{opt}}(x)}(x)\} = \frac{5}{4} \left(\frac{g(x)^4 \{g''(x)\}^2}{16\pi^2}\right)^{1/5} n^{-4/5} + \oo(n^{-4/5}).
\]
\end{corollary}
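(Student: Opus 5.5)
The corollary follows from Theorems~\ref{thm:bias_tkde} and~\ref{thm:var} through the decomposition $\mathrm{MSE}\{\widehat{g}_h(x)\} = [\EE\{\widehat{g}_h(x)\} - g(x)]^2 + \Var\{\widehat{g}_h(x)\}$, followed by a scalar optimization in $h$.

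\textbf{Squared bias.} By Theorem~\ref{thm:bias_tkde}, the bias equals $b(x)h + \oo(h)$ with $b(x) = \tfrac12 x^p g''(x)$. Squaring gives $\tfrac14 x^{2p}\{g''(x)\}^2 h^2 + 2b(x)h\cdot\oo(h) + \oo(h^2)$. The cross term is $\oo(h^2)$ because $b(x)$ is a fixed finite constant for fixed $x$; here $g''(x) = (1-p_0)f''(x)$ is finite by Assumption~\ref{ass:smooth}.

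\textbf{Variance and the MSE expansion.} Theorem~\ref{thm:var} gives the variance term $g(x)(2\sqrt{\pi}\,x^{p/2})^{-1}n^{-1}h^{-1/2} + \oo(n^{-1}h^{-1/2})$. Adding the two pieces yields the stated MSE expansion with remainder $\oo(h^2) + \oo(n^{-1}h^{-1/2}) = \oo(h^2 + n^{-1}h^{-1/2})$. Both remainders are nonnegative-scale orders, so they combine into a single $\oo$ of the sum.

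\textbf{Optimal bandwidth.} Write the leading part as $\Phi(h) = A h^2 + B n^{-1}h^{-1/2}$, where
\[
A = \tfrac14 x^{2p}\{g''(x)\}^2, \qquad B = g(x)/(2\sqrt{\pi}\,x^{p/2}),
\]
and assume $g''(x)\neq 0$ and $g(x)>0$. Setting $\Phi'(h) = 2Ah - \tfrac12 B n^{-1}h^{-3/2} = 0$ gives $h^{5/2} = B/(4An)$, so $h_{\mathrm{opt}} = \{B/(4A)\}^{2/5}n^{-2/5}$. Substituting, $B/(4A) = g(x)/(2\sqrt{\pi}\,x^{p/2}x^{2p}\{g''(x)\}^2)$, which is the displayed formula since $x^{p/2}x^{2p} = x^{5p/2}$. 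The function $\Phi$ is strictly convex on $(0,\infty)$, so this critical point is the unique minimizer.

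\textbf{Optimal MSE.} At the optimum the first-order condition gives $B n^{-1}h^{-1/2} = 4Ah^2$, so $\Phi(h_{\mathrm{opt}}) = 5A h_{\mathrm{opt}}^2 = 5A\{B/(4A)\}^{4/5}n^{-4/5} = 5\cdot 4^{-4/5}A^{1/5}B^{4/5}n^{-4/5}$. Inserting $A$ and $B$, the powers of $x$ cancel because $x^{2p/5}\cdot x^{-2p/5} = 1$. The remaining constant is $\tfrac54\bigl(g(x)^4\{g''(x)\}^2/(16\pi^2)\bigr)^{1/5}$, obtained by collecting $4^{-4/5}4^{-1/5}2^{-4/5}\pi^{-2/5}$ with the factor $5$; this is a routine check.

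\textbf{Remainder at the optimal bandwidth.} This is the only step that is not pure algebra. Since $h_{\mathrm{opt}}\asymp n^{-2/5}\to 0$, Assumption~\ref{ass:parameter} holds along this deterministic sequence. The theorems therefore apply and give a remainder $\oo(h_{\mathrm{opt}}^2 + n^{-1}h_{\mathrm{opt}}^{-1/2}) = \oo(n^{-4/5})$, because both terms are of order $n^{-4/5}$. Theorems~\ref{thm:bias_tkde} and~\ref{thm:var} are stated for arbitrary $h\to 0$, so no additional uniformity argument is needed.
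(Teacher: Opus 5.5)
Your proposal is correct and follows the route the paper takes implicitly. The paper gives no separate proof, only that "combining the preceding bias and variance expansions yields the usual asymptotic risk approximations": you add the squared bias from Theorem~\ref{thm:bias_tkde} to the variance from Theorem~\ref{thm:var}, then minimize the leading term $Ah^2 + Bn^{-1}h^{-1/2}$, and your constants for $h_{\mathrm{opt}}(x)$ and the optimal MSE check out. Two details you add are ones the paper leaves unstated but which are correct: the requirement $g(x)>0$, $g''(x)\neq 0$, and the observation that the remainder at $h_{\mathrm{opt}}(x)\asymp n^{-2/5}$ is $\oo(n^{-4/5})$.
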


\begin{corollary}\label{cor:imse}
Suppose that Assumptions~\ref{ass:smooth},~\ref{ass:parameter} and~\ref{ass:boundary} hold. In addition, assume that $g$ satisfies
\[
R_p(g'') = \int_0^{\infty} x^{2p} \{g''(x)\}^2 \, \rd x < \infty \qquad \text{and} \qquad S_p(g) = \int_0^{\infty} x^{-p/2} g(x) \, \rd x < \infty.
\]
Then, as $n \to \infty$,
\[
\mathrm{MISE}\{\widehat{g}_h\} = \frac{1}{4} h^2 R_p(g'') + \frac{1}{2\sqrt{\pi}} \, n^{-1} h^{-1/2} S_p(g) + \oo(h^2 + n^{-1} h^{-1/2}).
\]
The asymptotically MISE-optimal bandwidth is
\[
h_{\mathrm{opt}}^{\mathrm{MISE}} = \left(\frac{S_p(g)}{2\sqrt{\pi} \, R_p(g'')}\right)^{2/5} n^{-2/5},
\]
and the corresponding optimal MISE satisfies
\[
\mathrm{MISE}\{\widehat{g}_{h_{\mathrm{opt}}^{\mathrm{MISE}}}\} = \frac{5}{4} \left(\frac{R_p(g'') \, S_p(g)^4}{16 \pi^2}\right)^{1/5} n^{-4/5} + \oo(n^{-4/5}).
\]
\end{corollary}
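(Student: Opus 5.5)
The statement to prove is Corollary~\ref{cor:imse}. My plan is to integrate the pointwise expansions of Theorems~\ref{thm:bias_tkde} and~\ref{thm:var} over $x\in(0,\infty)$, justify the exchange of limit and integral, and then optimize a two-term function of $h$. The atom at $0$ has Lebesgue measure zero, and in any case $\widehat{g}_h(0)=\widehat{p}_0$ has no bandwidth dependence. So I would take the MISE to mean $\int_0^\infty \mathrm{MSE}\{\widehat{g}_h(x)\}\,\rd x$. For each fixed $x>0$, Corollary~\ref{cor:mse-opt} gives
\[
\mathrm{MSE}\{\widehat{g}_h(x)\}=\tfrac14 h^2 x^{2p}\{g''(x)\}^2+\tfrac{1}{2\sqrt\pi}n^{-1}h^{-1/2}x^{-p/2}g(x)+\oo(h^2+n^{-1}h^{-1/2}).
\]
Integrating the leading terms formally produces $\tfrac14h^2R_p(g'')+\tfrac{1}{2\sqrt\pi}n^{-1}h^{-1/2}S_p(g)$. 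Both constants are finite by hypothesis.

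The substantive step is to show that the pointwise remainders integrate to $\oo(h^2+n^{-1}h^{-1/2})$. I would handle the bias and variance parts separately, normalizing them as $h^{-2}\{\mathrm{Bias}(x)\}^2$ and $nh^{1/2}\Var\{\widehat{g}_h(x)\}$, and apply dominated convergence (or a generalized version).
\begin{itemize}
\item \emph{Variance part.} Use $\Var\le n^{-1}\EE\{K_h(X;x)^2\}\le n^{-1}\sup_t K_h(t;x)\,\EE\{K_h(X;x)\}$. A uniform local-limit bound of the form $\sup_t K_h(t;x)\lesssim (h x^{p})^{-1/2}$ then yields a dominating function proportional to $x^{-p/2}\EE\{K_h(X;x)\}$. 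The integrals of this bound converge to $S_p(g)$, which is what generalized dominated convergence requires.
\item \emph{Bias part.} Write $h^{-1}\mathrm{Bias}(x)$ through the Taylor expansion with integral remainder and dominate it by a multiple of $x^{p}\sup_{|y-x|\le\eta}|g''(y)|$. The near-zero piece is controlled by Proposition~\ref{prop:near-zero}, which makes it exponentially small in $h^{-1}$.
\end{itemize}
The main obstacle is that the pointwise theorems are stated only for fixed $x$. Uniformity near $x=0$, where $\lambda_x\to0$ and the kernel degenerates, and as $x\to\infty$ is not automatic. I expect to need an extra tacit regularity condition, so I would strengthen the hypotheses to uniform integrability of the normalized squared bias and variance. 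This is the standard practice in the gamma-kernel literature (Chen 2000), and I would state this implicitly. One could also split the range into $(0,\epsilon)$, $[\epsilon,M]$ and $(M,\infty)$: on the compact middle piece the convergence is uniform, and the two outer pieces are made small using the finiteness of $R_p$ and $S_p$.

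Finally, set $A=\tfrac14R_p(g'')$ and $B=S_p(g)/(2\sqrt\pi)$, and minimize $\Phi(h)=Ah^2+Bn^{-1}h^{-1/2}$. The first-order condition is $2Ah=\tfrac12Bn^{-1}h^{-3/2}$. This gives $h^{5/2}=B/(4An)=S_p(g)/\{2\sqrt\pi R_p(g'')n\}$, hence the stated $h_{\mathrm{opt}}^{\mathrm{MISE}}$. At the optimum the variance term equals $4Ah^2$, so $\Phi=5Ah^2$. Substituting gives
\[
\tfrac54R_p\bigl(S_p^2/(4\pi R_p^2)\bigr)^{2/5}n^{-4/5}=\tfrac54\bigl(R_pS_p^4/(16\pi^2)\bigr)^{1/5}n^{-4/5}.
\]
Since $h_{\mathrm{opt}}\to0$, Assumption~\ref{ass:parameter} holds and the $\oo$ terms become $\oo(n^{-4/5})$.
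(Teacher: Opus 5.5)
\textbf{Where the proposal stands.} The paper gives no separate proof of this corollary; it is presented as the formal integration of Theorems~\ref{thm:bias_tkde} and~\ref{thm:var}. Your route is therefore the intended one, and your minimisation of $Ah^2+Bn^{-1}h^{-1/2}$ is correct. The gap is the interchange of limit and integral, which is the entire content of the statement beyond that algebra.

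\textbf{The atom near $x=0$.} Your justification of the interchange misses the dominant near-zero effect, which is the atom. For every $x>0$ the zero observations contribute $K_h(0;x)=e^{-\lambda_x}$, so $\EE\{\widehat g_h(x)\}\ge p_0e^{-\lambda_x}$. This is $\oo(h)$ only for fixed $x$. On $0<x\le x_h:=\{(2-p)h\}^{1/(2-p)}$ one has $\lambda_x\le1$. So if, say, $g(x)\to0$ as $x\downarrow0$, the bias exceeds $p_0/(2e)$ there for small $h$, and $\int_0^\infty\{\mathrm{Bias}(x)\}^2\,\rd x\ge c\,p_0^2h^{1/(2-p)}$.

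Hence $h^{-2}\int_0^\infty\{\mathrm{Bias}(x)\}^2\,\rd x\gtrsim h^{1/(2-p)-2}$. This diverges for $p<3/2$, and at $p=3/2$ it adds a positive amount to the $h^2$ coefficient. Consequently no $h$-free integrable function dominates the normalised squared bias. The uniform-integrability condition you propose to add is therefore violated, not merely unverified.

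Proposition~\ref{prop:near-zero} cannot repair this. It bounds $k_h(u;x)$ for $u$ near $0$ at a \emph{fixed} $x$. Its constants $\varepsilon<x/2$ and $c_0=x^{2-p}/\{2(2-p)\}$ degenerate as $x\downarrow0$, and it never touches the atom term. Likewise, finiteness of $R_p(g'')$ and $S_p(g)$ controls only the leading terms on $(0,\epsilon)$, not the actual MSE there.

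In fact the displayed expansion is false as stated in this range. Take the zero-inflated $\mathrm{Gamma}(1.3,6)$ of M.2 with $p\in(1.2,3/2)$. It meets every hypothesis: Assumption~\ref{ass:boundary} is automatic since $\alpha\ge1$, and $x^{2p}f''(x)^2\asymp x^{2p-3.4}$ near $0$, so $R_p(g'')<\infty$. Yet for $h\asymp n^{-2/5}$ the MISE is $\gtrsim n^{-2/\{5(2-p)\}}\gg n^{-4/5}$. Your appeal to Chen (2000) does not transfer: the gamma kernel has no atom, and its boundary-region bias is $\OO(b)$, not $\OO(1)$.

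\textbf{The variance step.} Your variance step fails in the complementary range $p\in(3/2,2)$, where the atom term contributes only $\OO(h^{1/(2-p)})=\oo(h^2)$. There $\alpha<1$, and $k_h(t;x)\sim e^{-\lambda_x}\lambda_x\beta_x^{-\alpha}t^{\alpha-1}/\Gamma(\alpha)\to\infty$ as $t\downarrow0$. So $\sup_tK_h(t;x)=\infty$ for every $x>0$, and the bound $\Var\{\widehat g_h(x)\}\le n^{-1}\sup_tK_h(t;x)\,\EE\{K_h(X;x)\}$ is vacuous.

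You would instead need the H\"older argument based on Assumption~\ref{ass:boundary} from the proof of Theorem~\ref{thm:var}. That argument requires near-zero and tail kernel bounds holding uniformly in $x$, both as $x\downarrow0$ and as $x\to\infty$, and neither you nor the paper supply them.

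A correct proof therefore requires restricting the statement, for example to $p\in(3/2,2)$ or to $\int_{x_0}^\infty$ for a fixed $x_0>0$, and then proving such uniform bounds. Adding uniform integrability of the normalised bias and variance as a hypothesis simply assumes the conclusion.
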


The final result of this section establishes asymptotic normality. The normalization $n^{1/2} h^{1/4}$ is dictated by the variance rate in Theorem~\ref{thm:var}, while the additional condition $nh^{5/2}\to \lambda$ determines whether the leading bias term remains visible in the limiting distribution after centering at $g(x)$.

\begin{theorem}\label{thm:asymp-normal}
Suppose that Assumptions~\ref{ass:smooth}-\ref{ass:boundary} hold. Then, for any fixed $x\in(0,\infty)$,
\[
n^{1/2} h^{1/4} \big(\widehat{g}_h(x) - \EE\{\widehat{g}_h(x)\}\big) \stackrel{d}{\longrightarrow} \mathcal{N}\left(0,\frac{g(x)}{2\sqrt{\pi} \, x^{p/2}}\right).
\]
If, in addition, $n h^{5/2}\to \lambda\in[0,\infty)$, then
\[
n^{1/2} h^{1/4} \big(\widehat{g}_h(x) - g(x)\big) \stackrel{d}{\longrightarrow} \mathcal{N}\left(\frac{1}{2} x^p g''(x)\sqrt{\lambda}, \frac{g(x)}{2\sqrt{\pi} \, x^{p/2}}\right).
\]
\end{theorem}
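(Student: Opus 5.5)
The plan is to prove the first display with the Lindeberg--Feller central limit theorem for triangular arrays, in its Lyapunov form. The second display then follows by adding the deterministic bias. For fixed $x>0$, write
\[
n^{1/2}h^{1/4}\bigl(\widehat{g}_h(x)-\EE\{\widehat{g}_h(x)\}\bigr)=\sum_{i=1}^n Z_{n,i},\qquad Z_{n,i}=n^{-1/2}h^{1/4}\bigl(K_h(X_i;x)-\EE\{K_h(X_1;x)\}\bigr).
\]
The $Z_{n,i}$ are i.i.d.\ within each row and centred. By Theorem~\ref{thm:var},
\[
\sum_{i=1}^n\Var(Z_{n,i})=nh^{1/2}\Var\{\widehat{g}_h(x)\}\to g(x)/(2\sqrt{\pi}\,x^{p/2}).
\]
It therefore suffices to verify the Lyapunov condition with exponent $r=2+\delta$, the same $\delta$ as in Assumption~\ref{ass:boundary}:
\[
\sum_{i=1}^n\EE|Z_{n,i}|^{r}=n^{1-r/2}h^{r/4}\,\EE\bigl|K_h(X_1;x)-\EE K_h(X_1;x)\bigr|^{r}\to0 .
\]

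Since $\EE K_h(X_1;x)\to g(x)$ by Theorem~\ref{thm:bias_tkde}, the centring constant is bounded. It is therefore enough to bound $\EE\{K_h(X_1;x)^r\}$. I split this expectation into three pieces.
\begin{itemize}
\item The atom contributes $p_0e^{-r\lambda_x}$, which is exponentially small.
\item The positive part splits into $\int_0^\varepsilon K_h(u;x)^r(1-p_0)f(u)\,\rd u$ and $\int_\varepsilon^\infty K_h(u;x)^r(1-p_0)f(u)\,\rd u$.
\end{itemize}
For the near-zero piece, $K_h(u;x)$ behaves like $C_h u^{\alpha-1}$ with a prefactor that is exponentially small in $h^{-1}$. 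This is exactly the content of Proposition~\ref{prop:near-zero} combined with Assumption~\ref{ass:boundary}, applied with exponent $r$, so this piece is $\OO(e^{-c/h})$ for some $c>0$.

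For the piece on $[\varepsilon,\infty)$, I use the bound $\sup_{u\ge\varepsilon}K_h(u;x)\le C h^{-1/2}$. This comes from the saddlepoint (local normal) approximation of the Tweedie density, whose peak height near $u=x$ is of order $(2\pi h x^p)^{-1/2}$. The bound should be available from the auxiliary results used to prove Theorem~\ref{thm:var}. It gives
\[
\int_\varepsilon^\infty K_h^r g\le (Ch^{-1/2})^{r-1}\int_\varepsilon^\infty K_h(u;x)g(u)\,\rd u=\OO\bigl(h^{-(r-1)/2}\bigr),
\]
where the last integral is $\OO(1)$ by the bias computation. Hence $\EE K_h^r=\OO(h^{-(r-1)/2})$. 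The Lyapunov sum is then
\[
\OO\bigl(n^{1-r/2}h^{r/4-(r-1)/2}\bigr)=\OO\bigl((nh^{1/2})^{-(r/2-1)}\bigr)=\OO\bigl((nh^{1/2})^{-\delta/2}\bigr),
\]
which tends to $0$ by Assumption~\ref{ass:parameter2}. This proves the first display.

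For the second display, write
\[
n^{1/2}h^{1/4}\bigl(\widehat{g}_h(x)-g(x)\bigr)=n^{1/2}h^{1/4}\bigl(\widehat{g}_h(x)-\EE\widehat{g}_h(x)\bigr)+n^{1/2}h^{1/4}\bigl(\EE\widehat{g}_h(x)-g(x)\bigr).
\]
By Theorem~\ref{thm:bias_tkde}, the second term equals $n^{1/2}h^{5/4}\{\tfrac12x^pg''(x)+\oo(1)\}$. Since $n^{1/2}h^{5/4}=(nh^{5/2})^{1/2}\to\sqrt{\lambda}$, this term converges to $\tfrac12x^pg''(x)\sqrt{\lambda}$. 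Slutsky's lemma then gives the second display.

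The main obstacle is the uniform sup-bound $K_h(u;x)=\OO(h^{-1/2})$ on $u\ge\varepsilon$. It is not automatic, because the compound Poisson--Gamma series is awkward to bound directly. I would first try to obtain it from the series representation. The $j$-th term is a Gamma density with shape $j\alpha$, and its height is at most of order $\beta_x^{-1}(j\alpha)^{-1/2}$. Weighting by the Poisson probabilities concentrated at $j\approx\lambda_x\asymp h^{-1}$ should give height $\asymp\beta_x^{-1}\lambda_x^{-1/2}\asymp h^{-1/2}$. The small-$j$ terms are exponentially negligible because their Poisson weights are $e^{-\lambda_x}\lambda_x^j/j!$ and $\beta_x\asymp h$.

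If this direct bound proves delicate, the fallback is to reuse the local expansion from the variance proof. That expansion gives $\int K_h^2 g\asymp h^{-1/2}$, and the same saddlepoint argument should extend to $\int K_h^{r}g\asymp h^{-(r-1)/2}$.
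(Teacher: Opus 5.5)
Your argument is correct and has the same skeleton as the paper's proof. Both use Lyapunov's condition with $r=2+\delta$, the moment bound $\EE\{K_h(X;x)^r\}=\OO(h^{-(r-1)/2})$, the resulting rate $\OO((nh^{1/2})^{-\delta/2})$, and Slutsky with Theorem~\ref{thm:bias_tkde} for the second display.

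The genuine difference is how you bound the moment on $[\varepsilon,\infty)$. The paper (Lemma~\ref{lem:moment-condition}) splits this range into three pieces:
\begin{itemize}
\item a $\sqrt{h}$-window around $x$, handled by the local Gaussian approximation;
\item an intermediate shell, handled by the quadratic lower bound $d_p(u,x)\ge c_\ast(u-x)^2$;
\item a far region, where the kernel is exponentially small.
\end{itemize}
You instead interpolate: $\int_\varepsilon^\infty k_h^r f\le(\sup_{u\ge\varepsilon}k_h(u;x))^{r-1}\int_0^\infty k_h f$. The last integral is $\OO(1)$ because $(1-p_0)\int_0^\infty k_h f\le\EE\{K_h(X;x)\}\to g(x)$. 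This is shorter and needs only an $L^\infty$ bound plus the first moment.

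The step you flag as the main obstacle is in fact immediate from the paper's auxiliary results. Proposition~\ref{prop:away-zero} gives $k_h(t;x)\le C_1h^{-1/2}\exp\{-d_p(t,x)/(4h)\}\le C_1h^{-1/2}$ for $t\ge\varepsilon$, since $d_p\ge0$. So you can drop both the series heuristic and the fallback; the series heuristic, as sketched, would not yet constitute a proof.

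Finally, you state Lyapunov for the unnormalised array with $\sum_i\Var(Z_{n,i})\to g(x)/(2\sqrt{\pi}\,x^{p/2})\ge0$. This also covers $g(x)=0$, where the sum simply tends to $0$ in $L^2$. The paper treats that case separately after normalising by $s_n$. You should say this explicitly, since textbook statements of Lindeberg--Feller usually assume a strictly positive limiting variance.
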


\section{Simulation study}\label{sec:simulations}

This section investigates the finite-sample performance of the proposed Tweedie kernel estimator for semicontinuous data. The simulation study is designed to assess how well the estimator recovers the positive component of the target density across a range of settings, including the Tweedie case and several misspecified zero-inflated alternatives with unimodal and bimodal shapes. We compare the proposed method with Gaussian kernel density estimators using plug-in and least-squares cross-validation bandwidth selection, implemented via the \texttt{kde} function in the \textbf{ks} package~\citep{R-ks}, as well as with the Gamma kernel estimator, implemented via the \texttt{kdensity} function in the \textbf{kdensity} package~\citep{R-kdensity}. Performance is evaluated on the positive half-line using integrated squared and absolute error criteria, under varying sample sizes and zero-mass levels, and we also examine the data-driven selection of the smoothing parameter $h$ and the Tweedie power index $p$.

\subsection{Models and performance criteria}

We consider four data-generating processes (DGPs) to evaluate both the performance of the proposed estimator under correct specification and its robustness to model misspecification. Scenario M.1 corresponds to correct specification for the Tweedie kernel estimator, whereas scenarios M.2--M.4 represent increasingly challenging misspecification settings with non-Tweedie positive components, including both unimodal and bimodal shapes. In all cases, we set the sample size $n\in \{100, 200, 500\}$ and the structural-zero probability $p_0\in \{0.15, 0.30, 0.45\}$.

\begin{enumerate}[M.1.]
\item \textit{Tweedie DGP}. We generate $X\sim \mathrm{Tw}_p(\mu,\phi)$ with power index $p = 1.1$ and mean $\mu = 2$. Then $\phi = \mu^{2 - p}/\{(2 - p)(-\log p_0)\}$. For this family, $\PP(X = 0) = p_0(\mu,\phi,p) > 0$. The target function on $(0,\infty)$ is
\[
g_{+}(x) = f_{\mathrm{Tw}}(x; \mu,\phi,p), \qquad x > 0,
\]
where $f_{\mathrm{Tw}}(\cdot; \mu,\phi,p)$ denotes the Tweedie density on $(0,\infty)$.

\item \textit{Zero-inflated Gamma.} We generate
\[
X =
\begin{cases}
0, & \text{with probability } p_0, \\
X_{+}, & \text{with probability } 1 - p_0,
\end{cases}
\qquad X_{+}\sim \mathrm{Gamma}(a,b),
\]
where $a = 1.3$ is the shape parameter and $b = 6$ is the rate parameter. Then the target on $(0,\infty)$ is
\[
g_{+}(x) = (1 - p_0) \, f_{\Gamma}(x; a,b), \qquad x > 0,
\]
with $f_{\Gamma}(\cdot; a,b)$ the Gamma density.

\item \textit{Zero-inflated two-component Gamma mixture (boundary spike and heavy right tail).} We generate
\[
X =
\begin{cases}
0, & \text{with probability } p_0, \\
X_{+}, & \text{with probability } 1 - p_0,
\end{cases}
\qquad X_{+}\sim 0.55 \, \mathrm{Gamma}(2,6) + 0.45 \, \mathrm{Gamma}(15,1),
\]
so that the positive component is sharply peaked near the boundary and has a heavy right tail. The target function on $(0,\infty)$ is
\[
g_{+}(x) = (1 - p_0)\left\{0.55 f_{\Gamma}(x; 2,6) + 0.45 f_{\Gamma}(x; 15,1)\right\}, \qquad x > 0.
\]
\item \textit{Zero-inflated two-component Gamma mixture (bimodal with separate modes).} This scenario is identical to Scenario~M.3, except that the mixture components are selected to yield a bimodal positive component. Specifically,
\[
X_{+}\sim 0.35 \, \mathrm{Gamma}(4,6) + 0.65 \, \mathrm{Gamma}(20,3).
\]
The target function on $(0,\infty)$ is
\[
g_{+}(x) = (1 - p_0)\left\{0.35 f_{\Gamma}(x; 4,6) + 0.65 f_{\Gamma}(x; 20,3)\right\}, \qquad x > 0.
\]
\end{enumerate}
Figure~\ref{fig:true_densities} displays the true positive-component target functions $g_{+}(x)$ under the four simulation scenarios, with the structural-zero probability fixed at $p_0 = 0.3$. Scenario M.1 corresponds to the Tweedie case and exhibits a unimodal shape with moderate right skewness. Scenario M.2 is a zero-inflated Gamma model that is likewise unimodal, but more sharply concentrated near the origin and supported over a shorter range. Scenario M.3 represents a substantially more challenging misspecification setting, with a pronounced boundary spike near zero together with a long and relatively diffuse right tail. Finally, Scenario M.4 is a zero-inflated Gamma mixture with two clearly separated modes, providing a bimodal alternative that departs qualitatively from the Tweedie shape.

All methods are based on the same semicontinuous sample $\{X_i\}_{i=1}^n$. Since the estimated point mass at zero is identical across the competing methods, namely
\[
\widehat{p}_0 = \frac{1}{n}\sum_{i=1}^n \ind(X_i = 0),
\]
comparisons may be restricted to the positive component. Accordingly, for an estimator $\widehat{g}_{+; h}$ with smoothing parameter $h$, we evaluate performance mainly using the integrated squared error over $(0,\infty)$,
\[
\mathrm{ISE}_{+}\left(\widehat{g}_{+; h}\right) = \int_{0}^{\infty}\left\{\widehat{g}_{+; h}(x) - g_{+}(x)\right\}^2 \, \rd x.
\]
We also consider the integrated absolute error,
\[
\mathrm{IAE}_{+}\left(\widehat{g}_{+; h}\right) = \int_{0}^{\infty}\left|\widehat{g}_{+; h}(x) - g_{+}(x)\right| \, \rd x.
\]

\begin{figure}[ht]
\centering
\includegraphics[trim={2mm 3mm 2mm 1mm}, clip, width=0.80\textwidth]{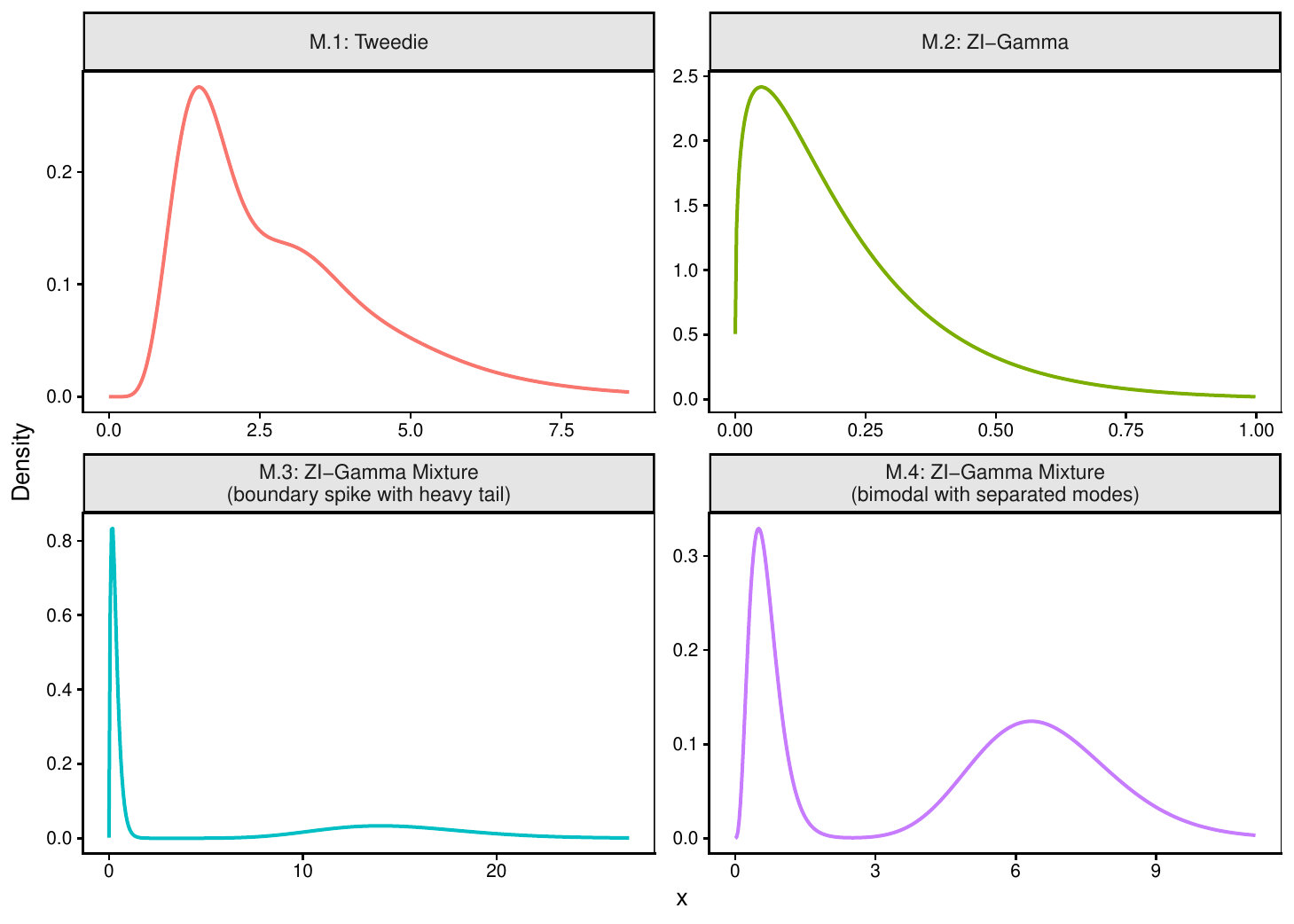}
\caption{True positive-component target functions $g_{+}(x)$ for the four simulation scenarios with $p_0 = 0.3$. M.1: Tweedie distribution. M.2: Zero-inflated Gamma distribution. M.3: Zero-inflated Gamma mixture exhibiting a sharp boundary spike near zero and a heavy right tail. M.4: Zero-inflated Gamma mixture with two well-separated modes.}
\label{fig:true_densities}
\end{figure}

\subsection{Bandwidth selection for the Tweedie kernel estimator via least-squares cross-validation}\label{subsec:tkde-lscv}

For smoothing-parameter selection, given a fixed power index $p$, we choose $h$ by minimizing a least-squares cross-validation (LSCV) criterion designed to target the integrated squared error on the positive half-line. To match the simulation criterion
\[
\mathrm{ISE}_{+}\left(\widehat{g}_{+; h}\right) = \int_{0}^{\infty}\left\{\widehat{g}_{+; h}(x) - g_{+}(x)\right\}^2 \, \rd x,
\]
we restrict the LSCV objective to $(0,\infty)$. Specifically, for an equally spaced grid $\{x_{\ell}\}_{\ell=1}^m \subseteq (0,\infty)$ with common spacing $\Delta x = x_{\ell + 1} - x_{\ell}$, we consider
\[
\mathrm{LSCV}_{+}(h) = \underbrace{\int_{0}^{\infty}\widehat{g}_{+; h}(x)^2 \, \rd x}_{\text{Term 1}} - \underbrace{\frac{2}{n}\sum_{i=1}^n \widehat{g}_{+; h}^{(-i)}(X_i) \, \ind(X_i > 0)}_{\text{Term 2}},
\]
which is approximated numerically by
\[
\widehat{\mathrm{LSCV}}_{+}(h) = \sum_{\ell=1}^m \widehat{g}_{+; h}(x_{\ell})^2 \, \Delta x - \frac{2}{n}\sum_{i:X_i > 0}\widehat{g}_{+; h}^{(-i)}(X_i).
\]
In our implementation, Term~1 is computed by evaluating $\widehat{g}_{+; h}(x_{\ell})$ on the prescribed grid and approximating the integral by a Riemann sum. For Term~2, we avoid recomputing the Tweedie kernel estimator for each leave-one-out sample. Indeed, since
\[
\widehat{g}_{+; h}(x) = \frac{1}{n}\sum_{j=1}^n K_h(X_j; x), \qquad x > 0,
\]
the corresponding leave-one-out estimator admits the representation
\begin{equation}\label{eq:LOO.identity}
\widehat{g}_{+; h}^{(-i)}(x) = \frac{1}{n - 1}\sum_{j\ne i} K_h(X_j; x) = \frac{n}{n - 1}\widehat{g}_{+; h}(x) - \frac{1}{n - 1}K_h(X_i; x).
\end{equation}
This identity permits direct evaluation of $\widehat{g}_{+; h}^{(-i)}(X_i)$ for observations with $X_i > 0$, thereby avoiding repeated refitting and substantially reducing the computational burden of the LSCV procedure.

For each dataset, and for a fixed power index $p$, the bandwidth $h$ is selected by minimizing $\widehat{\mathrm{LSCV}}_{+}(h; p)$ over a prespecified grid $h \in [h_{\min},h_{\max}]$. The resulting minimizer, denoted by $h^{\star}(p)$, is then used to construct the Tweedie kernel estimator corresponding to that value of $p$. To allow for data-driven selection of the power index, we next repeat this bandwidth-selection step over a prespecified grid of candidate values $p \in [p_{\min},p_{\max}]$. Specifically, for each candidate $p$, we compute $h^{\star}(p)$ as described above and then evaluate the corresponding cross-validation criterion. The final power--bandwidth pair is chosen as
\[
(p^{\star},h^{\star}) = \arg\min_{p}\ \widehat{\mathrm{LSCV}}_{+}\bigl(h^{\star}(p); p\bigr),
\]
where $h^{\star} = h^{\star}(p^{\star})$. The selected pair is then used to construct the final Tweedie kernel estimator and to compute the corresponding value of $\mathrm{ISE}_{+}$. The corresponding selection procedure is summarized in Algorithm~\ref{alg:tkde-ph-lscv}. Throughout the simulation study, the bandwidth and power parameters are selected over prespecified grids with $N_h = 20$ candidate values for $h$ and $N_p = 18$ candidate values for $p$.

\begin{algorithm}[ht]
\caption{Profile LSCV selection of the power--bandwidth pair $(p,h)$ for the Tweedie kernel estimator}
\label{alg:tkde-ph-lscv}
\DontPrintSemicolon
\KwIn{Semicontinuous sample $X_1,\ldots,X_n$; evaluation grid $\{x_{\ell}\}_{\ell=1}^m \subseteq (0,\infty)$ with spacing $\Delta x$; candidate power grid $\mathcal{P} = \{p_1,\ldots,p_{N_p}\} \subseteq [p_{\min},p_{\max}]$; candidate bandwidth grid $\mathcal{H} = \{h_1,\ldots,h_{N_h}\}\subseteq [h_{\min},h_{\max}]$.}
\KwOut{Selected pair $(p^{\star},h^{\star})$.}

\For{$k = 1,\ldots,N_p$}{
Set $p \leftarrow p_k$.

\For{$j = 1,\ldots,N_h$}{
Set $h \leftarrow h_j$.

Compute the Tweedie kernel estimator $\widehat{g}_h(\cdot; p)$.

Evaluate
\[
\widehat{\mathrm{LSCV}}_{+}(h; p) = \sum_{\ell=1}^m \widehat{g}_h(x_{\ell}; p)^2 \, \Delta x - \frac{2}{n}\sum_{i:X_i > 0}\widehat{g}_h^{(-i)}(X_i; p),
\]
where $\widehat{g}_h^{(-i)}(X_i; p)$ is computed using the leave-one-out identity \eqref{eq:LOO.identity}.
}

Determine
\[
h^{\star}(p) = \arg\min_{h\in \mathcal{H}}\widehat{\mathrm{LSCV}}_{+}(h; p)
\]
and record
\[
\mathrm{CV}^{\star}(p) = \widehat{\mathrm{LSCV}}_{+}\bigl(h^{\star}(p); p\bigr).
\]
}

Select
\[
p^{\star} = \arg\min_{p\in \mathcal{P}}\mathrm{CV}^{\star}(p), \qquad h^{\star} = h^{\star}(p^{\star}).
\]

\Return $(p^{\star},h^{\star})$
\end{algorithm}

\subsection{Results and discussion}

Figures~\ref{fig:simulationISE} and~\ref{fig:simulationIAE} present a coherent picture of the comparative performance of the four estimators under $\mathrm{ISE}_{+}$ and $\mathrm{IAE}_{+}$. In both figures, estimation accuracy improves with sample size, and the proposed Tweedie kernel estimator is consistently competitive across all scenarios. Under the Tweedie design (M.1) and the unimodal zero-inflated Gamma design (M.2), it performs at least as well as the competing Gaussian and Gamma kernel estimators in most settings. The largest advantage appears in Scenario~M.3, where the positive component exhibits a pronounced boundary spike and a heavy right tail. In this setting, the proposed estimator uniformly attains markedly smaller errors and lower variability under both criteria. This pattern is consistent with the discussion in Section~\ref{sec:estimator}: because the Tweedie kernel estimator is constructed from the full semicontinuous sample, including the observations at zero, it is better able to capture the distributional structure in a neighborhood of the origin, where the interaction between the atom at zero and the continuous positive component is strongest. In the bimodal setting M.4, it again performs very favorably and is typically the best or nearly the best method. The favorable performance in Scenarios~M.3 and~M.4 may also be attributed to the additional flexibility of the proposed procedure, which jointly selects the Tweedie power parameter $p$ and the bandwidth $h$. In contrast to conventional kernel estimators that tune only a smoothing parameter, the proposed estimator adapts both the degree of smoothing and the kernel shape. This added flexibility is especially useful in complex misspecification settings, where the positive component departs substantially from standard unimodal forms.

\begin{figure}[ht]
\centering
\includegraphics[trim = 2mm 1mm 2mm 1mm, clip, width = 0.49\linewidth, height = 0.30\linewidth]{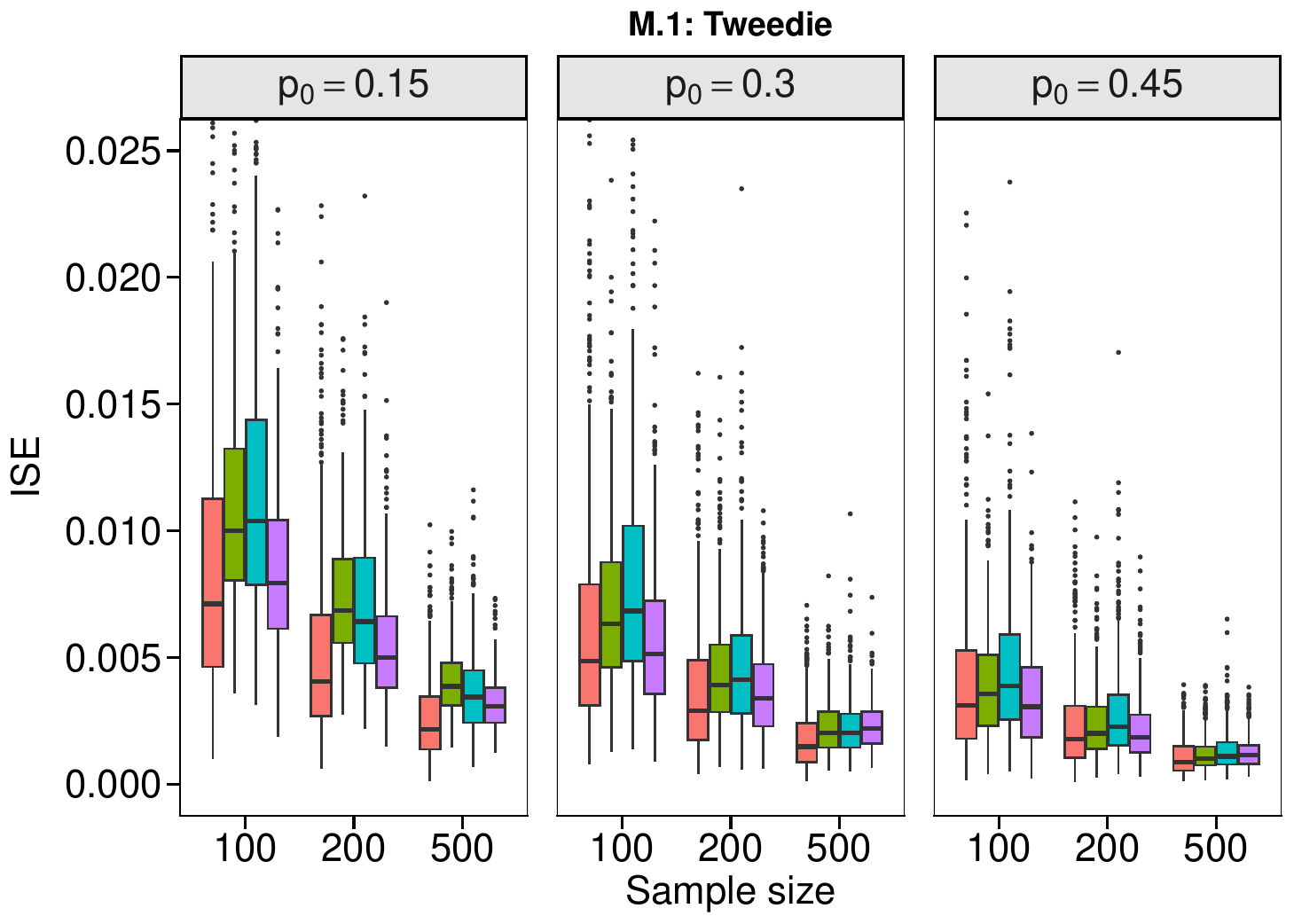}
\hfill
\includegraphics[trim = 2mm 1mm 2mm 1mm, clip, width = 0.49\linewidth, height = 0.30\linewidth]{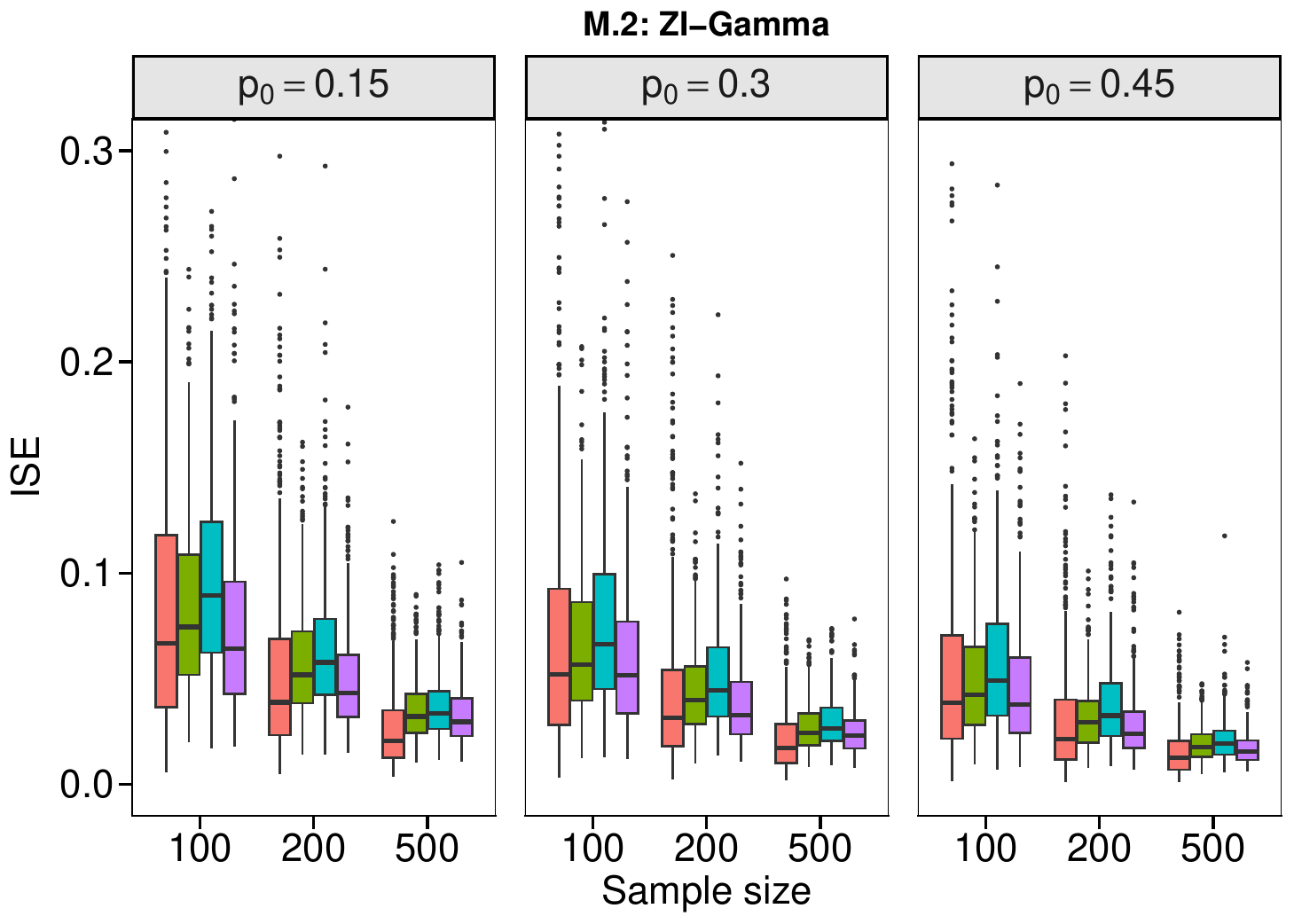}
\includegraphics[trim = 2mm 5mm 2mm 1mm, clip, width = 0.49\linewidth, height = 0.336\linewidth]{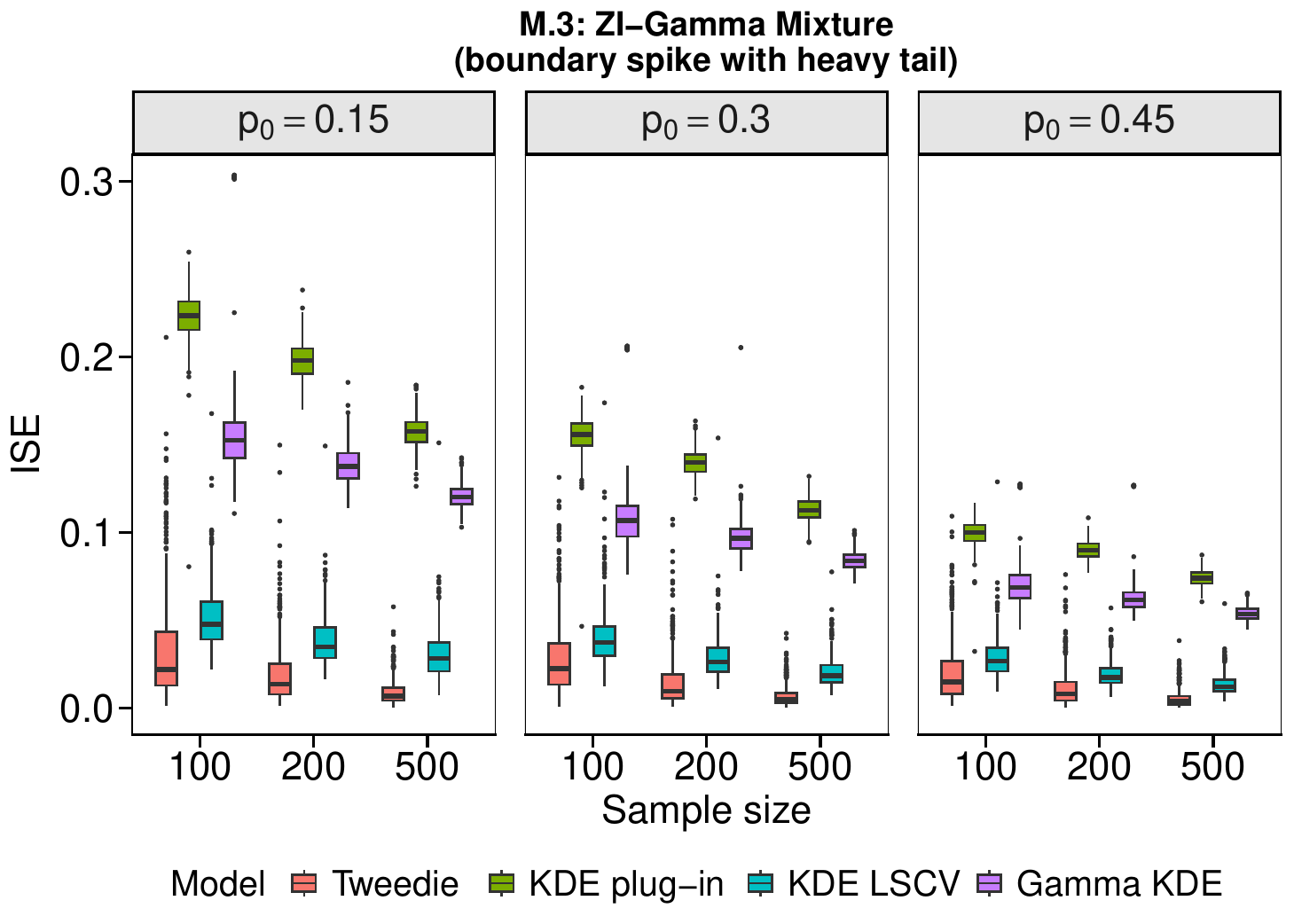}
\hfill
\includegraphics[trim = 2mm 5mm 2mm 1mm, clip, width = 0.49\linewidth, height = 0.336\linewidth]{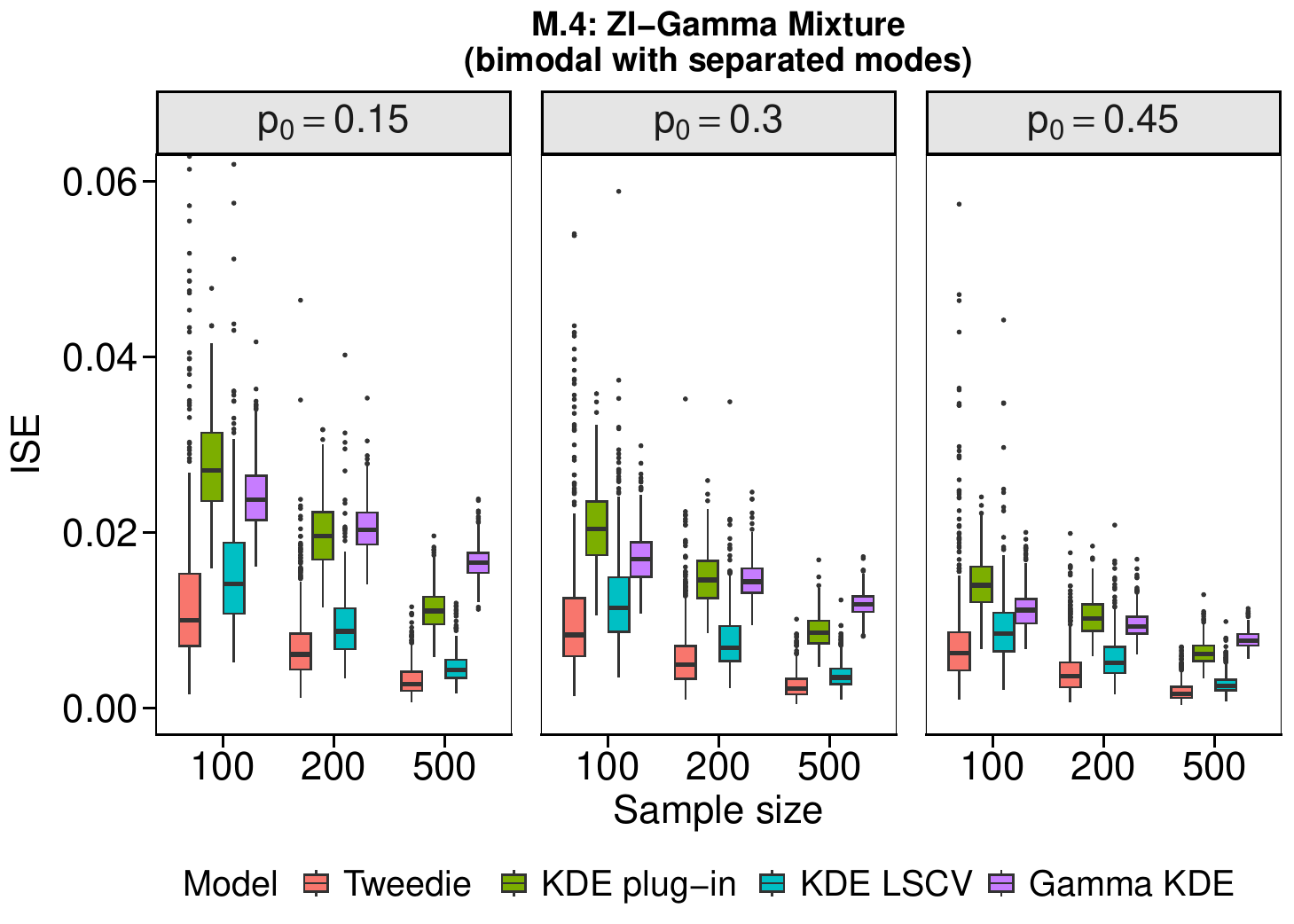}
\caption{Monte Carlo boxplots (500 replicates) of the integrated squared error ($\mathrm{ISE}_{+}$) of the density estimators on the positive half-line $(0,\infty)$ under the four simulation scenarios. Each panel corresponds to a data-generating mechanism (M.1--M.4), with columns showing different values of the zero mass $p_0 \in \{0.15,0.30,0.45\}$ and sample sizes $n \in \{100,200,500\}$. The proposed Tweedie kernel estimator is compared with the Gaussian KDE using plug-in bandwidth selection, the Gaussian KDE with least-squares cross-validation (LSCV), and the Gamma KDE.}
\label{fig:simulationISE}
\end{figure}

\begin{figure}[ht]
\centering
\includegraphics[trim = 2mm 1mm 2mm 1mm, clip, width = 0.49\linewidth, height = 0.30\linewidth]{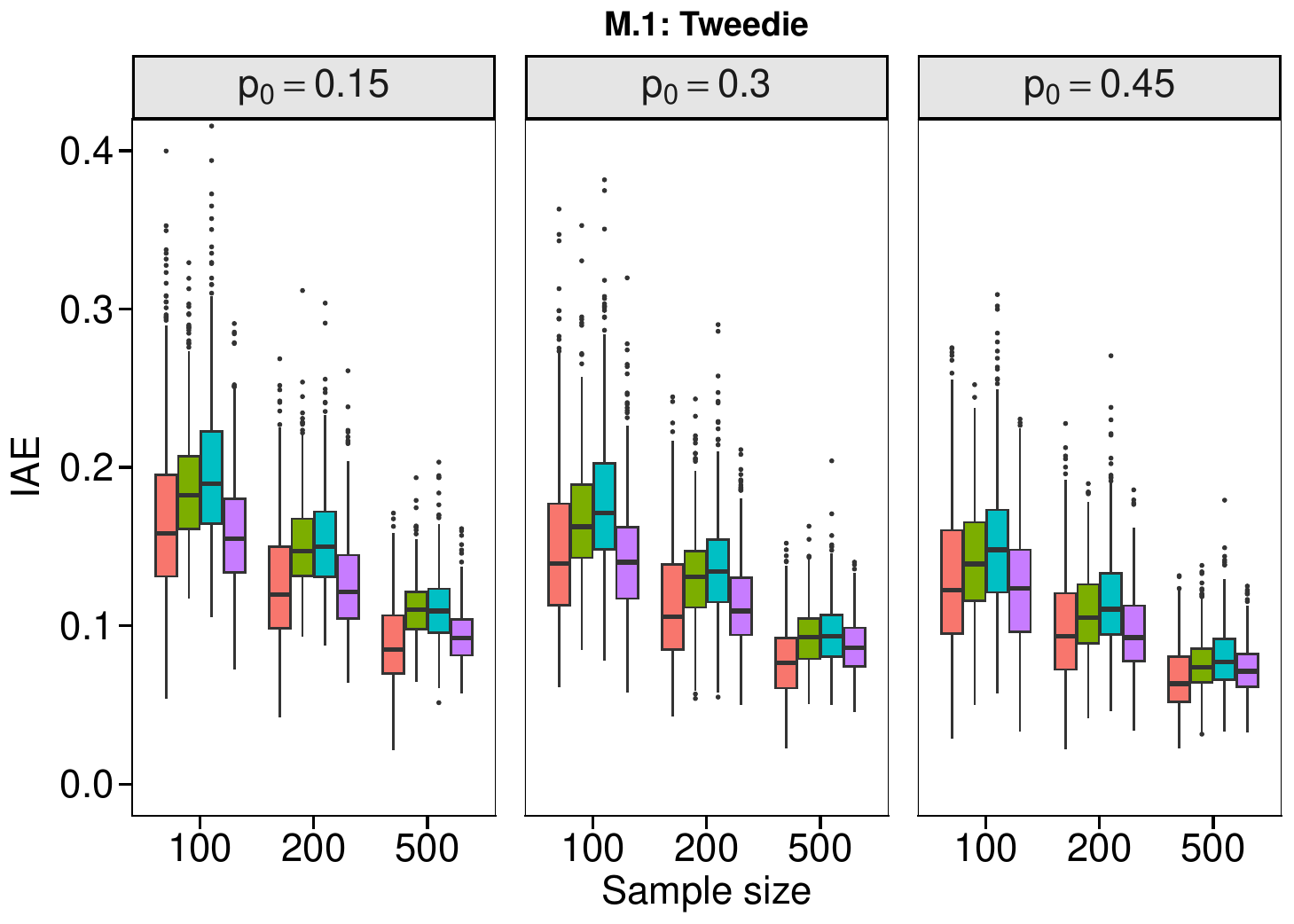}
\hfill
\includegraphics[trim = 2mm 1mm 2mm 1mm, clip, width = 0.49\linewidth, height = 0.30\linewidth]{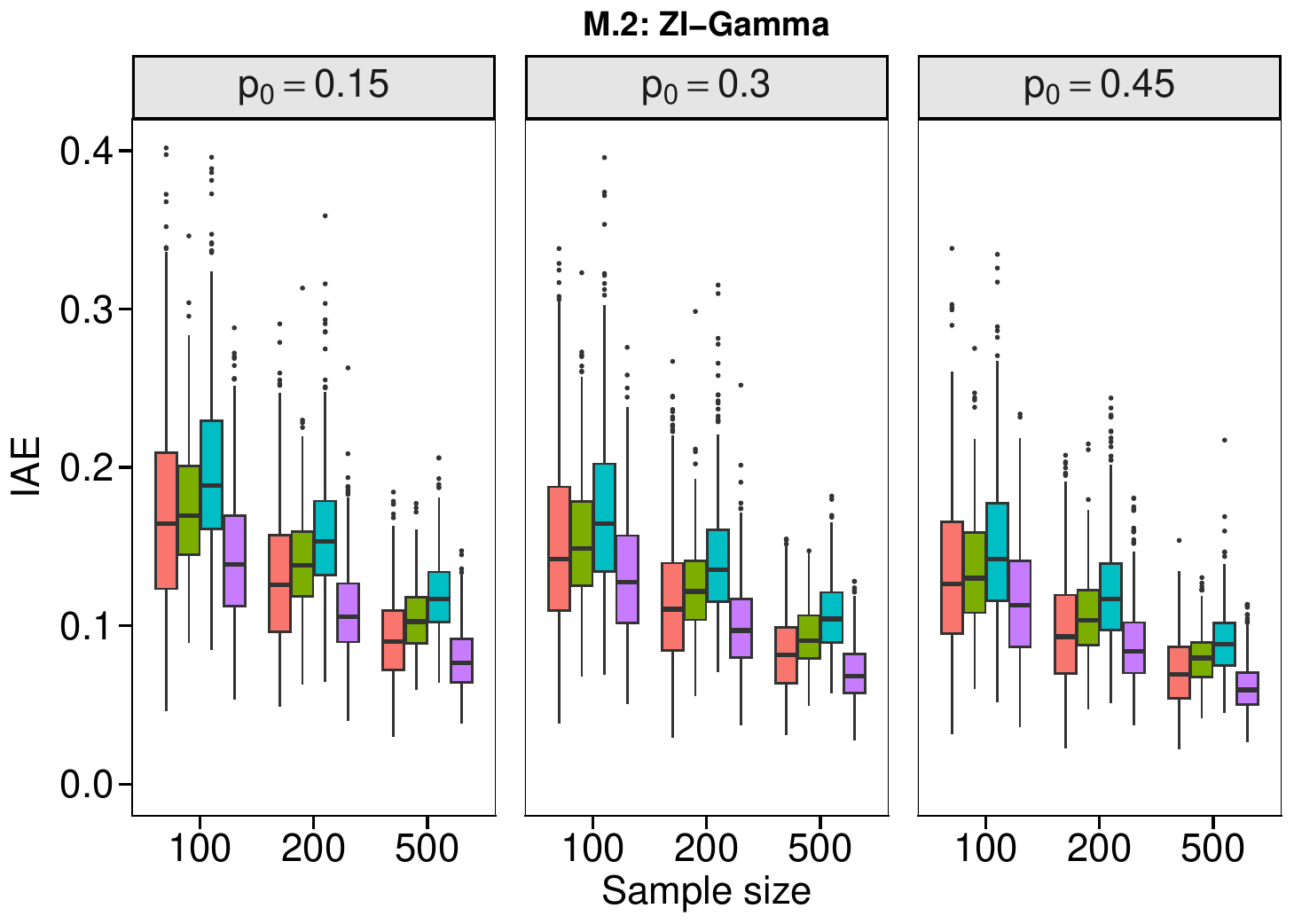} \\
\includegraphics[trim = 2mm 5mm 2mm 1mm, clip, width = 0.49\linewidth, height = 0.336\linewidth]{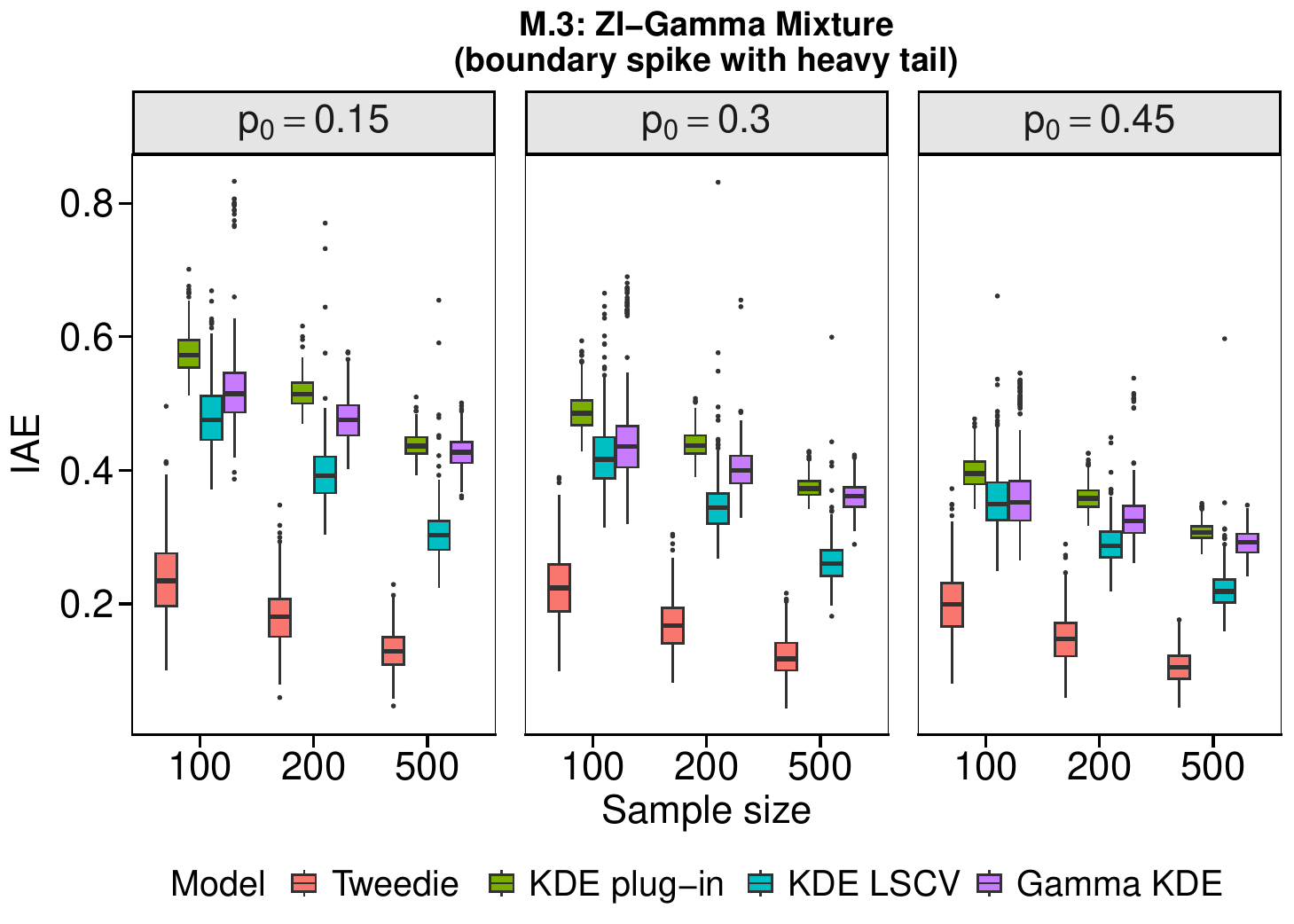}
\hfill
\includegraphics[trim = 2mm 5mm 2mm 1mm, clip, width = 0.49\linewidth, height = 0.336\linewidth]{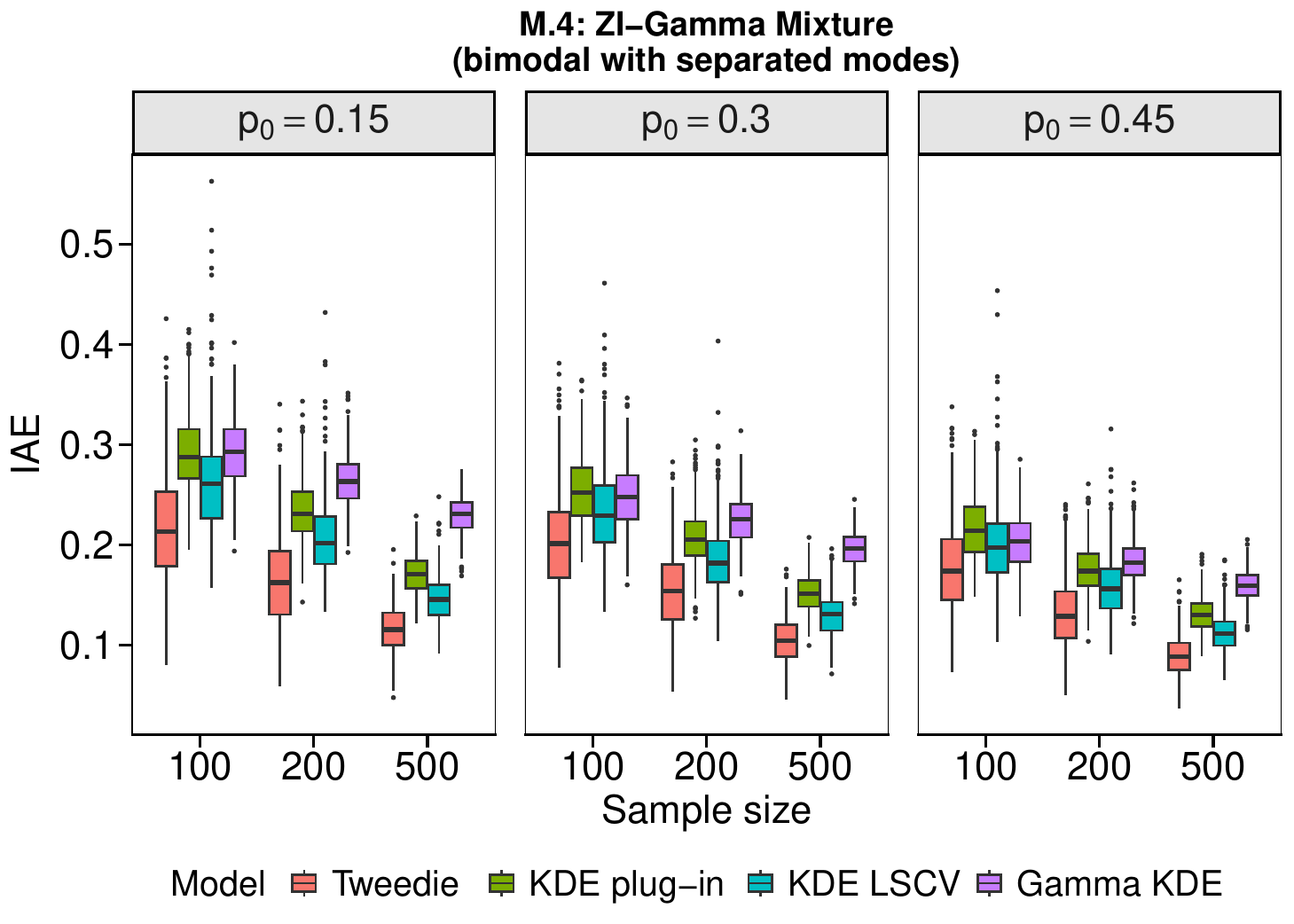}
\caption{Monte Carlo boxplots (500 replicates) of the integrated absolute error ($\mathrm{IAE}_{+}$) of the density estimators on the positive half-line $(0,\infty)$ under the four simulation scenarios. Each panel corresponds to a data-generating mechanism (M.1--M.4), with columns showing different values of the zero mass $p_0 \in \{0.15,0.30,0.45\}$ and sample sizes $n \in \{100,200,500\}$. The proposed Tweedie kernel estimator is compared with the Gaussian KDE using plug-in bandwidth selection, the Gaussian KDE with least-squares cross-validation (LSCV), and the Gamma KDE.}
\label{fig:simulationIAE}
\end{figure}

For completeness, the corresponding numerical summaries are reported in \ref{app:additional.tables}, where Tables~\ref{tab:Model1-2} and~\ref{tab:Model3-4} present the Monte Carlo means of the integrated squared error $\mathrm{ISE}_{+}$ and integrated absolute error $\mathrm{IAE}_{+}$ based on $R = 500$ independent replicates. Overall, the tabulated results are consistent with the graphical patterns in Figures~\ref{fig:simulationISE} and~\ref{fig:simulationIAE}. Across all scenarios, both error measures decrease with sample size for every method, and for a fixed $n$, they also tend to decrease as $p_0$ increases, reflecting the smaller scale of the positive component as more mass is concentrated at zero. Under the Tweedie design M.1, the proposed estimator performs consistently well, generally improving on the two Gaussian KDE competitors and remaining close to the Gamma KDE, which occasionally attains slightly smaller errors. Under the zero-inflated Gamma design M.2, the proposed estimator remains clearly preferable to the Gaussian KDE methods, although the Gamma KDE often performs best, as expected under this data-generating mechanism. The strongest advantage appears in the challenging misspecification setting M.3, where the positive component exhibits both a sharp boundary spike and a heavy right tail; in this case, the proposed estimator yields substantially smaller errors than the Gaussian and Gamma alternatives. In the bimodal setting M.4, the proposed estimator again achieves the smallest mean $\mathrm{ISE}_{+}$ and mean $\mathrm{IAE}_{+}$ in all settings, although the gains are somewhat less pronounced than in M.3.

\subsection{Sensitivity to power- and bandwidth-grid resolution}

In Table~\ref{tab:grid_sensitivity}, the labels ``coarse'', ``medium'', and ``fine'' denote the grid resolutions used in the profile search. Specifically, for the power parameter $p$, we use $N_p = 9$, $18$, and $27$ grid points, whereas for the bandwidth parameter $h$, we use $N_h = 20$, $40$, and $60$ grid points, corresponding to coarse, medium, and fine grids, respectively.

The results in Table~\ref{tab:grid_sensitivity} indicate that the proposed estimator is largely insensitive to the resolutions of the $p$- and $h$-grids used in the profile LSCV procedure. The reported values of the mean and standard deviation (SD) of $\mathrm{ISE}_{+}$ and $\mathrm{IAE}_{+}$ remain nearly unchanged across all nine grid combinations. In particular, the mean $\mathrm{ISE}_{+}$ remains essentially constant at $0.0079$ or $0.0080$; $\mathrm{SD}(\mathrm{ISE}_{+})$ varies only slightly, and the mean $\mathrm{IAE}_{+}$ changes negligibly. These results indicate that the tuning-parameter selection procedure is numerically stable and that coarse or medium grids are sufficient to deliver performance comparable to that obtained with finer grids.

\begin{table}[ht]
\centering
\caption{Effect of the power- and smoothing-parameter grid resolutions on estimator variability for Scenario M.4 (Zero-inflated Gamma Mixture, $n = 100$, $p_0 = 0.45$).}
\label{tab:grid_sensitivity}
\begin{tabular}{lcccccccccc}
\toprule[1.5pt]
& \multicolumn{3}{c}{$h$ grid coarse} & \multicolumn{3}{c}{$h$ grid medium} & \multicolumn{3}{c}{$h$ grid fine} \\
\cmidrule(lr){2-4} \cmidrule(lr){5-7} \cmidrule(lr){8-10}
$p$ grid & Coarse & Medium & Fine & Coarse & Medium & Fine & Coarse & Medium & Fine \\
\midrule[1.2pt]
mean $\mathrm{ISE}_{+}$ & 0.0079 &0.0079 & 0.0079& 0.0080 & 0.0079& 0.0079 & 0.0080& 0.0079&0.0079 \\
SD $\mathrm{ISE}_{+}$ & 0.0067 &0.0067 &0.0066 & 0.0064& 0.0064& 0.0064 & 0.0064& 0.0064&0.0064 \\
mean $\mathrm{IAE}_{+}$ &0.1773 &0.1777 &0.1778 & 0.1767&0.1765 & 0.1767& 0.1765& 0.1765& 0.1766\\
SD $\mathrm{IAE}_{+}$ &0.0460 &0.0447 &0.0443 & 0.0467&0.0459 & 0.0454& 0.0469& 0.0461& 0.0457\\
\bottomrule[1.5pt]
\end{tabular}
\end{table}

\subsection{Performance of the goodness-of-fit test}

We assess the finite-sample behavior of a density-based goodness-of-fit test under Scenario M.1. In this setting, the null model is the Tweedie distribution with mean parameter $\mu = 2$ and power parameter $p = 1.1$. Let $p_0 = 0.3$ denote the structural-zero probability used in the simulation design. Then the corresponding dispersion parameter is
\[
\phi = \frac{\mu^{\, 2 - p}}{(2 - p)(-\log p_0)} = \frac{2^{0.9}}{0.9(-\log p_0)},
\]
and the null parameter vector is
\[
\bb{\theta} = (\mu,\phi,p) = \left(2,\frac{2^{0.9}}{0.9(-\log 0.3)},1.1\right) \approx (2,1.722,1.1).
\]
Accordingly, the null and alternative hypotheses are
\[
\mathcal{H}_0: X \sim f_{\mathrm{Tw}}(\cdot; \bb{\theta}) \qquad \text{against} \qquad \mathcal{H}_1: X \not\sim f_{\mathrm{Tw}}(\cdot; \bb{\theta}).
\]
To measure departure from the null Tweedie model, we use a Cram\'er--von Mises-type statistic defined by the squared $L^2$-distance between the Tweedie kernel estimator and the null density on $(0, \infty)$, namely
\[
T_n = \int_0^{\infty} \bigl\{\widehat{g}_h(x) - f_{\mathrm{Tw}}(x; \bb{\theta})\bigr\}^2 \, \rd x.
\]
For practical implementation, the test is calibrated by Monte Carlo under $\mathcal{H}_0$. For each sample, we first compute the observed value of $T_n$. We then generate $B = 500$ Monte Carlo samples from the null Tweedie model, and for each generated sample recompute the statistic using the same smoothing procedure as for the original data. The critical value is taken as the empirical $0.95$-quantile of these $B$ simulated statistics. The null hypothesis is rejected whenever the observed value of $T_n$ exceeds this critical value. To study power, for each configuration we generate $R = 500$ independent Monte Carlo datasets and apply the above testing procedure to each dataset. The empirical rejection rate is then the proportion of rejections among these $R$ runs.

Figure~\ref{fig:GoF} reports the resulting rejection rates under the M.1 data-generating process. In the left panel, the alternative is indexed by the Tweedie power parameter $p$, with the null corresponding to $p = 1.1$. In the right panel, the alternative is indexed by the mean parameter $\mu$, with the null corresponding to $\mu = 2$. In both panels, the rejection rates under the null configuration are close to the nominal $5\%$ level. Away from the null, the rejection rates increase with the size of the departure, and the increase is sharper for larger sample sizes. For any fixed alternative, the rejection rates are higher when $n$ is larger, indicating that the test becomes more sensitive to departures from the null as the sample size increases. This behavior is consistent with the increasing sensitivity expected from the proposed procedure.

\medskip
\begin{figure}[H]
\centering
\includegraphics[trim={2mm 1mm 2mm 1mm}, clip, width=0.49\textwidth]{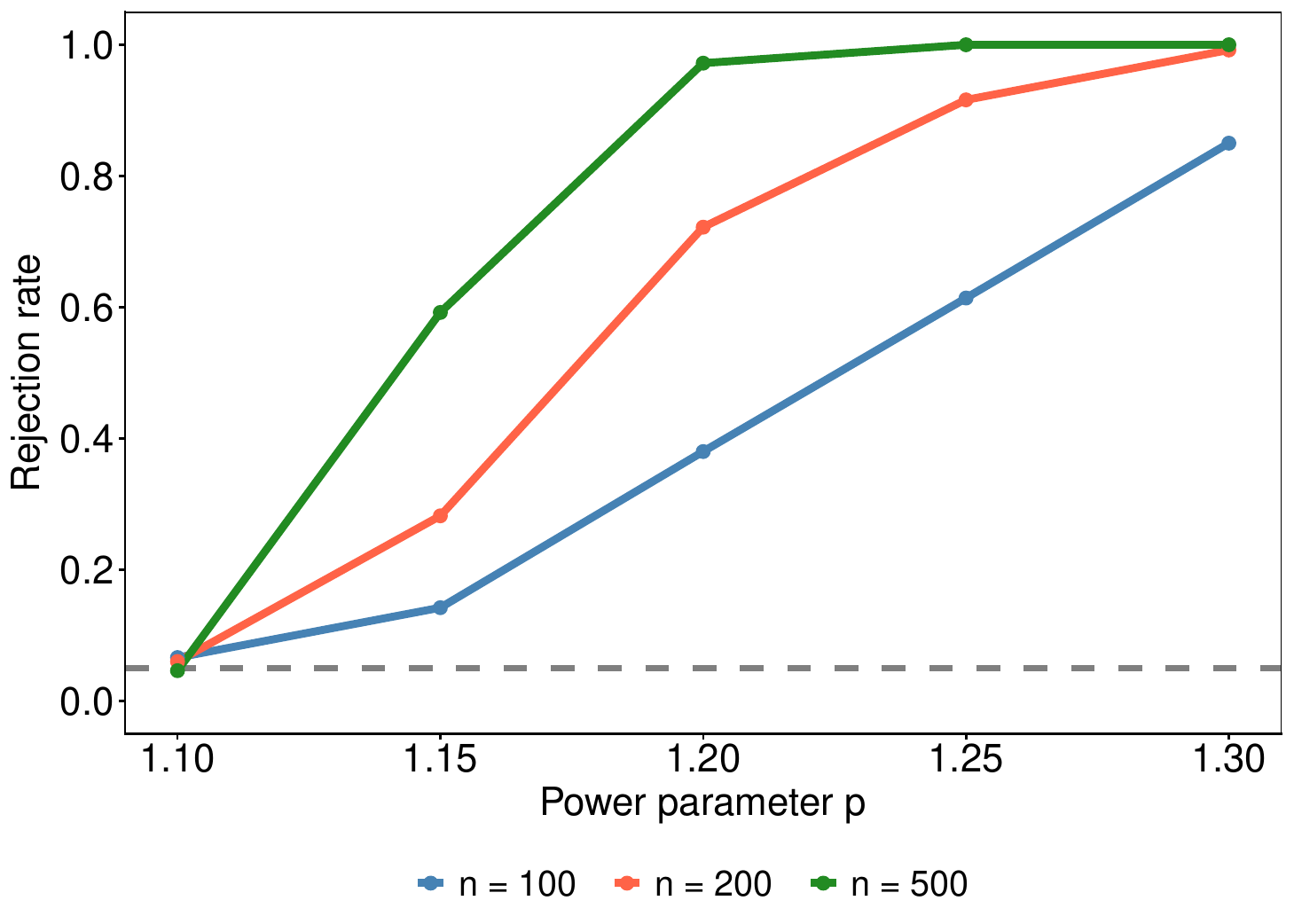}
\hfill
\includegraphics[trim={2mm 1mm 2mm 1mm}, clip, width=0.49\textwidth]{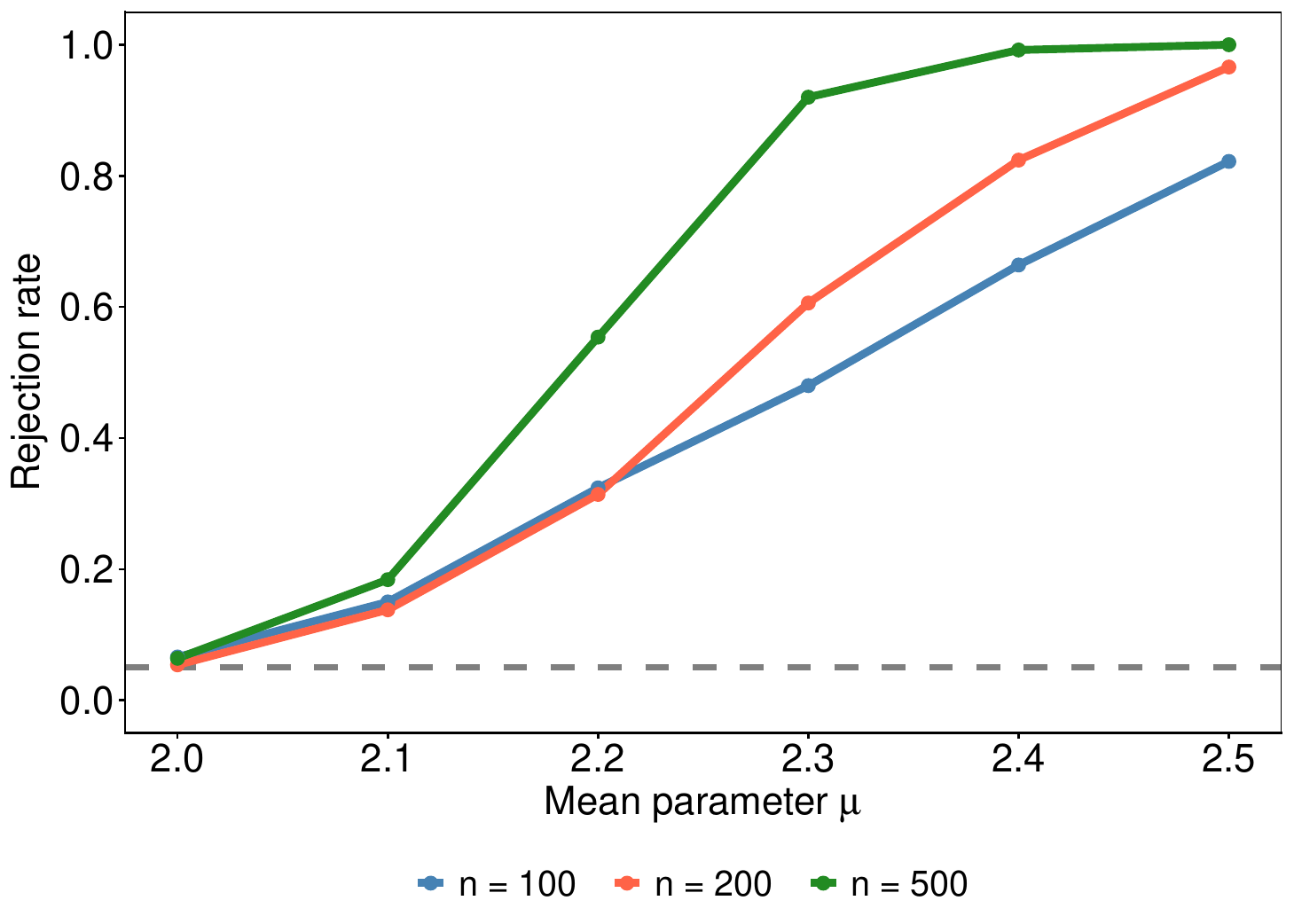}
\caption{Empirical rejection rates of the Tweedie estimator-based goodness-of-fit test under the M.1 data-generating process. The left panel plots rejection rates against the Tweedie power parameter $p$, and the right panel plots rejection rates against the mean parameter $\mu$, for sample sizes $n = 100, 200,$ and $500$. The dashed horizontal line marks the nominal $5\%$ significance level.}
\label{fig:GoF}
\end{figure}

\section{Real-data application}\label{sec:real.data.application}

For the empirical illustration, we analyze 732 mental health-related emergency department visits recorded in March 2023. The data were obtained from Nova Scotia Health's centralized Emergency Department Information System (EDIS), which uses standardized reporting procedures across hospitals. The data come from four principal emergency departments in the Central Zone of Nova Scotia in Canada: QEII Health Sciences Centre, Dartmouth General Hospital, Cobequid Community Health Centre, and Hants Community Hospital, which together account for most of the region's emergency care volume. The Central Zone serves approximately 461,000 residents, representing about 48\% of the provincial population \citep{Statcan2021zone4}. The outcome of interest is excess emergency department length of stay (LOS), defined as
\[
Y = \max(\text{LOS} - 4 \text{ hours},\, 0),
\]
where LOS denotes the total time from arrival to discharge. Thus, visits with $\text{LOS} \leq 4 ~\text{hours}$ have $Y = 0$, while visits exceeding 4 hours have positive values equal to the amount of delay beyond this threshold. The 4-hour threshold is used as a practically meaningful benchmark, as it aligns with an established emergency care performance target and distinguishes visits with no excess delay from those with prolonged stays. The resulting variable is semicontinuous, with a point mass at zero and a positively skewed continuous component, making it well suited for illustrating the proposed method.

Figure~\ref{fig:histogram} highlights two salient features of the excess length-of-stay distribution for mental health visits. The left panel shows a substantial point mass at zero together with a strongly right-skewed positive component, indicating that many visits do not exceed the 4-hour benchmark, while positive exceedances are concentrated near zero and taper off gradually over a long right tail. In the right panel, the density on the positive half-line peaks near the boundary and then decreases smoothly, a shape that is well captured by the proposed Tweedie kernel estimator without imposing a strict parametric form. Among the fitted parametric competitors, the zero-inflated Lognormal model appears to provide the closest overall approximation, followed by the zero-inflated Gamma model, whereas the Tweedie MLE shows the largest discrepancy. This ranking is also supported by the integrated squared distances from the proposed nonparametric estimate: 0.022376 for the zero-inflated Lognormal fit, 0.080832 for the zero-inflated Gamma fit, and 0.242282 for the Tweedie MLE. Overall, the figure indicates pronounced zero inflation and substantial positive skewness, while also showing that a flexible nonparametric estimator can serve as a useful benchmark for assessing the adequacy of the parametric fits, particularly in the near-boundary region of the positive component.

\medskip
\begin{figure}[H]
\centering
\includegraphics[trim={2mm 1mm 2mm 1mm}, clip, width=0.49\textwidth]{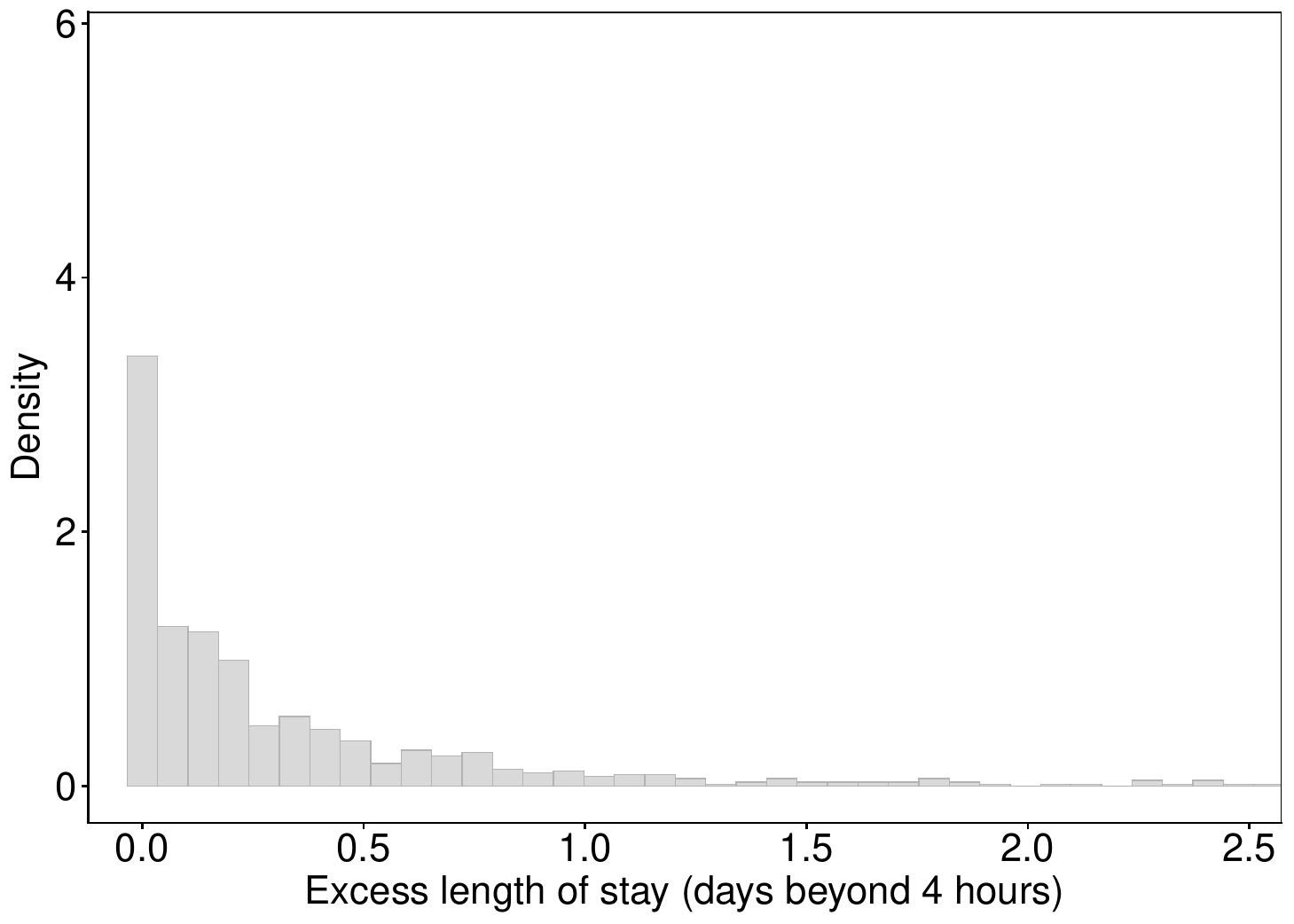}
\hfill
\includegraphics[trim={2mm 1mm 2mm 1mm}, clip, width=0.49\textwidth]{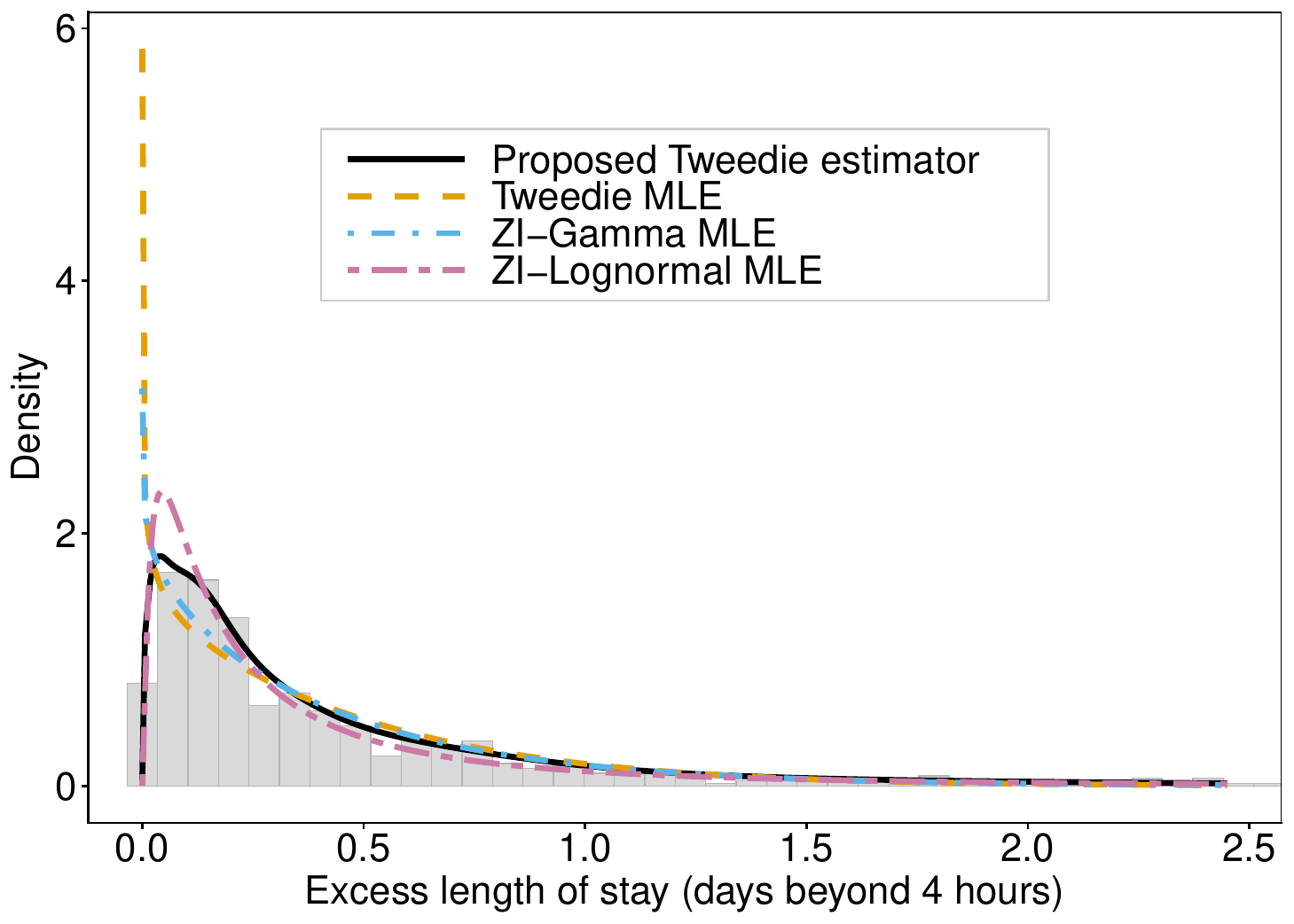}
\caption{Emergency department excess length of stay (days beyond 4 hours) for mental health visits. Left panel: histogram of all observations, including the zero mass. Right panel: density estimates on the positive half-line. The solid black curve is the proposed Tweedie kernel estimator; the dashed, dot-dashed, and two-dash curves correspond to fitted Tweedie, zero-inflated Gamma, and zero-inflated Lognormal models, respectively. The background histogram displays the positive observations only, scaled to the marginal target function $g(x) = (1 - p_0)f(x)$.}
\label{fig:histogram}
\end{figure}

\section{Concluding remarks}\label{sec:conclusion}

In this paper, we studied nonparametric estimation for semicontinuous distributions by introducing a Tweedie kernel estimator for the full mixed density on $[0,\infty)$. The proposed construction exploits the compound Poisson--Gamma structure of Tweedie laws with power parameter $p\in(1,2)$, so that the point mass at zero and the continuous positive component are handled within a single smoothing framework. At the origin, the estimator reduces exactly to the empirical proportion of zeros, whereas on $(0,\infty)$ it provides asymmetric positive-support smoothing based on the full sample. This unified treatment is especially relevant in the near-zero region, where boundary effects are strongest and where the interaction between exact zeros and small positive observations makes the density particularly difficult to recover.

The theoretical analysis shows that the estimator admits tractable pointwise asymptotics. In particular, pointwise bias and variance expansions are obtained, asymptotic $\mathrm{MSE}$ and $\mathrm{MISE}$ formulae are derived together with the corresponding optimal bandwidth rates, and asymptotic normality is established under regularity conditions adapted to the behavior of the positive component near the origin. A profile least-squares cross-validation procedure is also proposed for the joint selection of the bandwidth and the Tweedie power parameter, thereby allowing the amount of smoothing and the kernel shape to be chosen in a data-driven manner.

The finite-sample results indicate that the proposed estimator is consistently competitive and can offer clear advantages in challenging semicontinuous settings, particularly when the positive component exhibits a pronounced boundary spike, a heavy right tail, or substantial misspecification of the positive-component shape. The emergency department application further illustrates that the estimator can serve as a flexible nonparametric benchmark for assessing parametric alternatives and for revealing structural features of the data that may be difficult to capture with more restrictive models.

Several directions remain open. One concerns the theoretical analysis of fully data-driven tuning, including sharper asymptotic results for the joint selection of the bandwidth and the Tweedie power parameter. Another concerns extensions beyond the pointwise framework considered here, for example to stronger uniform and $L_1$ results~\citep{BouezmarniRolin2007}, to multivariate or regression-type semicontinuous settings, and to inference for functionals of the mixed density. A further direction is to extend the proposed estimator to censored survival data. More broadly, the results here suggest that Tweedie-based smoothing provides a promising framework for nonparametric analysis of distributions that combine nonnegative support with an intrinsic mixed discrete--continuous structure.

\appendix

\begin{appendices}

\renewcommand{\thesection}{Appendix~\Alph{section}}
\section{Auxiliary results}\label{app:auxiliary.results}
\renewcommand{\thesection}{\Alph{section}}

Throughout this section, let $k_h(\, \cdot \, ; x)$ denote the Lebesgue subdensity of the absolutely continuous part of the Tweedie law on $(0,\infty)$ with mean $\mu = x$, dispersion $\phi = h$, and variance function $V(\mu) = \mu^p$, namely
\begin{equation}\label{eq:k.h}
k_h(t; x) = e^{-\lambda_x} \sum_{j=1}^{\infty} \frac{\lambda_x ^j}{j!} \, \frac{t^{j\alpha - 1}e^{-t/\beta_x }}{\Gamma(j\alpha)\beta_x ^{j\alpha}}, \qquad t > 0,
\end{equation}
where, as before,
\begin{equation}\label{eq:alpha.lambda.beta}
\lambda_x = \frac{x^{2 - p}}{h(2 - p)},\qquad \alpha = \frac{2 - p}{p - 1},\qquad \beta_x = h(p - 1)x^{p - 1}.
\end{equation}

Lemma~\ref{lem:stirling} is a direct consequence of the small-dispersion asymptotics for exponential dispersion models and the associated saddlepoint approximation for the Tweedie distribution; see~\citet[][Sections~2.3 and~4.1]{Jorgensen1987ExponentialDispersion} and~\citet[Section~3.5, p.~103]{Jorgensen1997theory}.

\begin{lemma}[Uniform saddlepoint approximation for Tweedie kernels]\label{lem:stirling}
Fix $\eta\in(0,1)$ and $p\in(1,2)$, and let $I_{\eta} = [\eta,\eta^{-1}]$. Define the unit deviance
\[
d_p(u,x) = 2\int_x ^u \frac{u - t}{t^p} \, \rd t = 2\left\{\frac{u^{2 - p}}{(1 - p)(2 - p)} - \frac{u x^{1 - p}}{1 - p} + \frac{x^{2 - p}}{2 - p}\right\}, \qquad u > 0,\ x > 0.
\]
Then, for every fixed compact set $K\subseteq (0,\infty)$,
\[
\sup_{x\in I_{\eta}}\sup_{u\in K} \left| \frac{k_h(u; x)}{(2\pi h \, u^p)^{-1/2}\exp\left(-d_p(u,x)/(2h)\right)} - 1 \right| \longrightarrow 0, \qquad h\to 0.
\]
\end{lemma}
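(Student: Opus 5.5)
The plan is to prove the uniform saddlepoint approximation directly from the series representation of $k_h(u;x)$ in \eqref{eq:k.h}. The sum over $j$ is a Poisson-weighted mixture, and for small $h$ it is dominated by indices $j$ near $\lambda_x \asymp h^{-1}$. Applying Stirling's formula to $j!$ and to $\Gamma(j\alpha)$ then turns the sum into a Riemann sum for a Laplace-type integral.

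\textbf{Step 1: reduction to the maximal term.} Writing $a_j$ for the $j$th summand, one uses Stirling uniformly for $j\ge1$:
\[
\log j! = j\log j - j + \tfrac12\log(2\pi j) + \OO(1/j),
\]
and similarly for $\log\Gamma(j\alpha)$. This gives $\log a_j = -\lambda_x + h^{-1}\psi(s;u,x) + \rho_j$. Here $s = jh$ is the rescaled index, $\psi$ is an explicit smooth function of $s$ (it involves $s\log s$ and $\alpha s\log(\alpha s)$ together with $s\log\{x^{2-p}/(2-p)\}$ and $\alpha s\log\{u/((p-1)x^{p-1})\}$), and $\rho_j$ collects the $\tfrac12\log$ terms, which vary on the $s$-scale, plus $\OO(h)$ errors.

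\textbf{Step 2: the saddle point.} For $u,x$ in compact sets of $(0,\infty)$, the function $s\mapsto\psi(s;u,x)$ is strictly concave with a unique maximizer $s^\ast(u,x)$. This maximizer is bounded away from $0$ and $\infty$ uniformly, and $\partial_s^2\psi(s^\ast)$ is bounded away from $0$; both facts follow from continuity and compactness. Laplace's method for sums gives
\[
\sum_j a_j = \bigl(1+\oo(1)\bigr)\, h^{-1}\sqrt{2\pi h/|\psi''(s^\ast)|}\;\exp\{-\lambda_x + h^{-1}\psi(s^\ast) + \rho(s^\ast)\}.
\]
The sum is first replaced by an integral over $s$, since the summand varies smoothly at lattice spacing $h$. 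The ranges $|s-s^\ast|>\epsilon$ are discarded because there the terms are exponentially smaller, by concavity. The ranges of very small and very large $j$ are controlled by crude bounds: $a_j\le$ the Poisson probability times the bounded gamma density on $K$, and Poisson tails decay exponentially in $h^{-1}$.

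\textbf{Step 3: identification with the deviance.} One solves $\partial_s\psi=0$ explicitly. The condition reads $\log\{\lambda_x/j\} + \alpha\log\{u/(\beta_x j\alpha)\} = 0$, up to rescaling, and evaluating it gives $-\lambda_x h + \psi(s^\ast) - u/\beta_x\cdot h = -d_p(u,x)/2$. The factor $e^{-u/\beta_x}$ is carried inside $\psi$. This is the algebraic identity that produces the unit deviance; it is checked by direct substitution using $\alpha(p-1)=2-p$. The prefactors from Stirling combined with $\sqrt{2\pi h/|\psi''|}$ should simplify to $(2\pi h u^p)^{-1/2}$, exactly as in the classical Tweedie saddlepoint derivation. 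One then cites \citet{Jorgensen1997theory}, Section~3.5, for the pointwise form, and notes that every error term above is controlled by quantities continuous in $(u,x)$ on the compact set $I_\eta\times K$. This yields uniformity.

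\textbf{Main obstacle.} The delicate part is making the Laplace-for-sums step uniform in $(u,x)$: the tail contributions and the sum-to-integral error must be $\oo(1)$ relative to the main term, uniformly over the compact set. The approach is to bound $\psi(s)-\psi(s^\ast)\le -c\min\{(s-s^\ast)^2,|s-s^\ast|\}$ with a uniform constant $c>0$, using strict concavity plus compactness, and then dominate the tails by geometric sums.
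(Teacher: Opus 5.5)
Your route is genuinely different from the paper's. The paper gives no derivation: it attributes the lemma to J{\o}rgensen's small-dispersion saddlepoint theory for exponential dispersion models and leaves the uniformity over $I_\eta\times K$ to those references. You instead derive the approximation directly from the compound Poisson--gamma series, using Stirling's formula and Laplace's method for sums, and the computation does close.

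Set $\Lambda = x^{2-p}/(2-p)$, $U = u/\{(p-1)x^{p-1}\}$ and $s = jh$. The full exponent is $\Psi(s) = -(\Lambda+U) + s\log(\Lambda/s) + \alpha s\log\{U/(\alpha s)\} + (1+\alpha)s$. Then:
\begin{itemize}
\item $\Psi''(s) = -(1+\alpha)/s$, which is free of $(u,x)$;
\item the maximizer is $s^\ast = u^{2-p}/(2-p)$, and $\Psi(s^\ast) = -d_p(u,x)/2$;
\item the two $\tfrac12\log$ corrections cancel to the constant $\tfrac12\log\alpha - \log(2\pi)$;
\item the prefactor $\{\sqrt{\alpha}/(2\pi u h)\}\sqrt{2\pi h s^\ast/(1+\alpha)}$ equals $(2\pi h u^p)^{-1/2}$.
\end{itemize}
So the dominant indices sit near $s^\ast/h = \lambda_u$, not near $\lambda_x$ as your opening sentence says.

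What your approach buys is a self-contained proof with explicit uniformity. It can be shortened further. As the paper uses in the proof of Proposition~\ref{prop:near-zero}, $\lambda_x\beta_x^{-\alpha}$ does not depend on $x$. Hence the ratio in the lemma is a function of $A = \lambda_x(u/\beta_x)^{\alpha}$ alone, and the lemma reduces to the large-$A$ asymptotics of $H_\alpha$. Uniformity, even over all $x>0$, then follows from $\inf_{u\in K}A\to\infty$.

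One sub-step would fail as written: the crude bound for very small $j$. The target $(2\pi h u^p)^{-1/2}e^{-d_p(u,x)/(2h)}$ is itself exponentially small when $u\neq x$. The discarded ranges must therefore be $\oo$ of it, not merely exponentially small in $h^{-1}$.
\begin{itemize}
\item ``Poisson probability times a bounded gamma density'' gives at best the rate $\Lambda = x^{2-p}/(2-p)$ for small $j$.
\item $\Lambda < d_p(u,x)/2$ exactly when $(u/x)^{p-1} > 1/(2-p)$, which occurs on $I_\eta\times K$ for suitable $\eta$ and $K$.
\item You must keep the factor $e^{-u/\beta_x}$; this gives the rate $\Lambda + U = d_p(u,x)/2 + (1+\alpha)s^\ast$, which does suffice.
\end{itemize}

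The cleanest fix is already implicit in your Step~1. Binet's form $\log\Gamma(z) = (z-\tfrac12)\log z - z + \tfrac12\log(2\pi) + \theta_z/(12z)$, with $\theta_z\in(0,1)$, holds for every $z>0$. Hence $\log a_j = -\log u + \tfrac12\log\alpha - \log(2\pi) + h^{-1}\Psi(jh) + \OO(1)$ uniformly in $j\ge1$. Your bound $\Psi(s)-\Psi(s^\ast)\le -c\min\{(s-s^\ast)^2,|s-s^\ast|\}$ is uniform because $\Psi''$ is explicit and $s^\ast$ ranges over a compact set. It then disposes of all tails at once: at most $\OO(h^{-1})$ terms below $s^\ast-\epsilon$, plus a geometric sum above $s^\ast+\epsilon$. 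No separate Poisson argument is needed. Also, the gamma densities are only $\OO(h^{-1})$ on $K$, not bounded, though that is harmless.
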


\begin{corollary}[Gaussian approximation for Tweedie kernels]\label{cor:local-gaussian}
Under the assumptions of Lemma~\ref{lem:stirling}, for every $M > 0$,
\[
\sup_{x\in I_{\eta}}\sup_{\substack{u > 0 \\ |u - x| \leq M \sqrt{h}}} \left| \frac{k_h(u; x)}{(2\pi h x^p)^{-1/2} \exp\left(-(u - x)^2 / (2h x^p)\right)} - 1 \right| \longrightarrow 0, \qquad h\to 0.
\]
Equivalently,
\[
k_h(u; x) = \frac{1}{\sqrt{2\pi h x^p}} \exp\left(-\frac{(u - x)^2}{2h x^p}\right)\{1 + \oo(1)\}, \qquad h\to 0,
\]
uniformly for $x\in I_{\eta}$ and $u > 0$ such that $|u - x|\leq M \sqrt{h}$.
\end{corollary}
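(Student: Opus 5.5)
The plan is to derive Corollary~\ref{cor:local-gaussian} directly from Lemma~\ref{lem:stirling} by Taylor-expanding the unit deviance and the prefactor $u^{-p/2}$ around $u = x$ in the window $|u - x|\le M\sqrt{h}$. The first step is to reduce the region to a fixed compact set so that Lemma~\ref{lem:stirling} applies. For $x\in I_\eta$ and $h$ small enough that $M\sqrt{h}\le \eta/2$, every admissible $u$ lies in $K = [\eta/2, \eta^{-1}+\eta/2]\subseteq(0,\infty)$. The constraint $u > 0$ is then automatic. On this window we may therefore write
\[
k_h(u;x) = (2\pi h u^p)^{-1/2}\exp\{-d_p(u,x)/(2h)\}\{1+\oo(1)\},
\]
uniformly in $x\in I_\eta$ and $u$ in the window.

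Next, compare the prefactors. Since $u/x = 1 + \OO(\sqrt{h})$ uniformly (because $x\ge\eta$ and $|u-x|\le M\sqrt h$), we have $(u^p/x^p)^{-1/2} = 1+\oo(1)$ uniformly.

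Then handle the exponent. From the integral representation $d_p(u,x) = 2\int_x^u (u-t)t^{-p}\,\rd t$, differentiation gives $\partial_u d_p(u,x)|_{u=x} = 0$, $\partial_u^2 d_p(u,x) = 2u^{-p}$, and $\partial_u^3 d_p(u,x) = -2p u^{-p-1}$. Taylor's theorem with Lagrange remainder yields
\[
d_p(u,x) = \frac{(u-x)^2}{x^p} + R(u,x), \qquad |R(u,x)| \le \frac{p}{3}\,(\eta/2)^{-p-1}|u-x|^3 .
\]
Alternatively, one can bound $t^{-p} - x^{-p}$ directly inside the integral, which gives the same estimate. Consequently $|R|/(2h) \le C M^3 h^{3/2}/h = \OO(\sqrt{h})$ uniformly, so $\exp\{-d_p/(2h)\} = \exp\{-(u-x)^2/(2hx^p)\}\{1+\oo(1)\}$.

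Multiplying the three uniform $1+\oo(1)$ factors gives the claimed ratio convergence, and the ``equivalently'' form is just a restatement. The only delicate point, and the main obstacle, is keeping every bound uniform in both $x\in I_\eta$ and $u$. This is ensured by the lower bound $\eta/2$ on all intermediate points, which controls the third derivative of the deviance and the prefactor ratio, together with the fixed compact set $K$ that allows Lemma~\ref{lem:stirling} to be invoked.
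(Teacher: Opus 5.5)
Your proposal is correct and follows the derivation the paper intends. The paper states this corollary without a separate proof, as an immediate consequence of Lemma~\ref{lem:stirling} together with the local expansion $d_p(u,x) = (u-x)^2/x^p + \oo((u-x)^2)$ that it uses elsewhere. Your three steps make that derivation precise, uniformly in $x\in I_\eta$: restricting $u$ to the compact set $[\eta/2,\eta^{-1}+\eta/2]$, showing the prefactor ratio $(x/u)^{p/2}$ tends to $1$, and bounding the cubic Taylor remainder of the deviance so that it contributes $\OO(M^3 h^{1/2})$ after division by $2h$.
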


\begin{lemma}\label{lem:wright-bound}
Let $\alpha > 0$, and define
\begin{equation}\label{eq:def.H.alpha}
H_{\alpha}(A) = \sum_{j=1}^{\infty} \frac{A^j}{j! \, \Gamma(j\alpha)}, \qquad A\geq 0.
\end{equation}
Then there exist positive constants $C_{\alpha}$ and $\rho_{\alpha}$ such that
\[
H_{\alpha}(A)\leq C_{\alpha} \, A \, \exp\bigl(\rho_{\alpha} A^{1/(1 + \alpha)}\bigr), \qquad A\ge0.
\]
\end{lemma}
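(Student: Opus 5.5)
We need to prove $H_\alpha(A)\le C_\alpha A\exp(\rho_\alpha A^{1/(1+\alpha)})$ for all $A\ge 0$. The plan is to factor out one power of $A$, bound the remaining terms by a series whose growth can be controlled through the size of $(j!)^{1+\alpha}$, and then treat small and large $A$ separately.

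First write
\[
H_\alpha(A)=A\sum_{j=1}^\infty \frac{A^{j-1}}{j!\,\Gamma(j\alpha)}=A\sum_{m=0}^\infty \frac{A^{m}}{(m+1)!\,\Gamma((m+1)\alpha)} .
\]
It suffices to show that the series $S(A)=\sum_{m\ge0} a_m A^m$, with $a_m=1/\{(m+1)!\,\Gamma((m+1)\alpha)\}$, satisfies $S(A)\le C\exp(\rho A^{1/(1+\alpha)})$. For $A\in[0,1]$ this is immediate: $S(A)\le S(1)<\infty$, because $\Gamma(j\alpha)$ grows superexponentially, so the coefficients are summable. The key estimate is a lower bound on the coefficients of the form
\[
(m+1)!\,\Gamma((m+1)\alpha)\ \ge\ c\,\kappa^{-m}(m!)^{1+\alpha}
\]
for suitable constants $c,\kappa>0$. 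This follows from Stirling's formula, which gives $\log\Gamma(j\alpha)=j\alpha\log j+O(j)$ and $\log j!=j\log j+O(j)$. Hence $\log\{j!\,\Gamma(j\alpha)\}=(1+\alpha)j\log j-O(j)$, and a geometric factor $\kappa^{m}$ absorbs the $O(j)$ terms and the shift from $m$ to $m+1$. One must take care that $\Gamma(j\alpha)$ can be large for small $j\alpha$ (near its pole). That only helps the lower bound, and since $\Gamma$ is bounded below by a positive constant on $(0,\infty)$, the finitely many small $j$ pose no difficulty.

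Given this bound,
\[
S(A)\le c^{-1}\sum_{m\ge0}\frac{(\kappa A)^m}{(m!)^{1+\alpha}}
=c^{-1}\sum_{m\ge0}\left(\frac{B^m}{m!}\right)^{1+\alpha},\qquad B=(\kappa A)^{1/(1+\alpha)} .
\]
Using $\sum_m x_m^{q}\le(\sum_m x_m)^{q}$ for $x_m\ge0$ and $q=1+\alpha\ge1$, the right-hand side is at most $c^{-1}(e^{B})^{1+\alpha}=c^{-1}\exp\{(1+\alpha)\kappa^{1/(1+\alpha)}A^{1/(1+\alpha)}\}$. This gives the claim with $\rho_\alpha=(1+\alpha)\kappa^{1/(1+\alpha)}$ and a suitable $C_\alpha$, valid for all $A\ge0$, so the separate small-$A$ case is in fact subsumed.

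The main obstacle is making the Stirling-type lower bound fully uniform in $j\ge1$ for an arbitrary $\alpha>0$, including when $\alpha<1$ and $j\alpha$ is small. My first approach would be the standard inequality $\Gamma(y)\ge \sqrt{2\pi}\,y^{y-1/2}e^{-y}$ for $y\ge1$, together with $\min_{y>0}\Gamma(y)>0.88$ for the finitely many $j$ with $j\alpha<1$. Then I would compare $j^{j\alpha}\alpha^{j\alpha}$ with $(j!)^{\alpha}\le j^{j\alpha}$, absorbing $\alpha^{j\alpha}e^{-j\alpha}$ and the polynomial factors into $\kappa^{j}$.
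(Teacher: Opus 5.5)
Your proof is correct, and it takes a different route from the paper in the summation step.

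The paper also begins with a Stirling lower bound, $j!\,\Gamma(j\alpha)\ge c_\alpha\,\alpha^{\alpha j}e^{-(1+\alpha)j}j^{(1+\alpha)j}$. It then bounds each term separately by maximizing $t\log y-(1+\alpha)t\log t$ over $t>0$, with $y=e^{2+\alpha}A/\alpha^{\alpha}$. This gives $A^j/\{j!\,\Gamma(j\alpha)\}\le C_\alpha e^{-j}\exp(\rho_\alpha A^{1/(1+\alpha)})$, and summing the geometric factor $e^{-j}$ yields $H_\alpha(A)\le C_\alpha\exp(\rho_\alpha A^{1/(1+\alpha)})$. The factor $A$ is recovered only through a case split: $A^j\le A$ when $A\in[0,1]$, and $1\le A$ when $A\ge1$.

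You instead factor out $A$ at the start and compare the remaining coefficients with $(m!)^{-(1+\alpha)}$. You then sum in closed form using $\sum_m x_m^{1+\alpha}\le(\sum_m x_m)^{1+\alpha}$ and $\sum_m B^m/m!=e^B$. What each route buys:
\begin{itemize}
\item Your route avoids the continuous optimization over the index and the case split. It gives a single bound valid for all $A\ge0$ with the explicit constant $\rho_\alpha=(1+\alpha)\kappa^{1/(1+\alpha)}$.
\item The paper's term-wise maximization shows that the dominant index is $j\asymp A^{1/(1+\alpha)}$, which explains where the exponent $1/(1+\alpha)$ comes from.
\end{itemize}

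Your coefficient bound $(m+1)!\,\Gamma((m+1)\alpha)\ge c\,\kappa^{-m}(m!)^{1+\alpha}$ goes through as you sketch. In fact $\Gamma(y)\ge\sqrt{2\pi}\,y^{y-1/2}e^{-y}$ holds for every $y>0$ (Binet), so the separate treatment of $j\alpha<1$ is unnecessary. Combined with $(m!)^{\alpha}\le(m+1)^{(m+1)\alpha}$ and $(m+1)!\ge m!$, the ratio $(m+1)!\,\Gamma((m+1)\alpha)/(m!)^{1+\alpha}$ is at least $\sqrt{2\pi}\,((m+1)\alpha)^{-1/2}(\alpha/e)^{(m+1)\alpha}$. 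Using $m+1\le 2^m$, this is bounded below by a geometric sequence.
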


\begin{proof}[Proof of Lemma~\ref{lem:wright-bound}]
Throughout the proof, $C_{\alpha}$ denotes a positive constant depending only on $\alpha$, whose values may change from line to line. For $A\in[0,1]$, since $A^j\leq A$ for all $j\ge1$,
\[
H_{\alpha}(A) \leq A\sum_{j=1}^{\infty} \frac{1}{j! \, \Gamma(j\alpha)} \leq C_{\alpha} A.
\]
Now let $A\ge1$. By Stirling's formula, there exists a constant $c_{\alpha} > 0$ such that, for all integers $j\ge1$,
\[
j! \, \Gamma(j\alpha) \geq c_{\alpha} \, \alpha^{\alpha j}e^{-(1 + \alpha)j}j^{(1 + \alpha)j}.
\]
Hence
\[
\frac{A^j}{j! \, \Gamma(j\alpha)}
\leq C_{\alpha} \left(\frac{e^{1 + \alpha}A}{\alpha^{\alpha} j^{1 + \alpha}}\right)^j
= C_{\alpha} e^{-j} \exp\left\{j \log\left(\frac{e^{2 + \alpha}A}{\alpha^{\alpha}}\right) - (1 + \alpha) j \log j\right\}.
\]
Note that, for $y > 0$,
\[
\sup_{t > 0} \bigl\{t \log y - (1 + \alpha) \, t \log t\bigr\} = (1 + \alpha)e^{-1}y^{1/(1 + \alpha)}.
\]
Applying this with $y = e^{2 + \alpha}A/\alpha^{\alpha}$ yields
\[
\frac{A^j}{j! \, \Gamma(j\alpha)} \leq C_{\alpha} e^{-j} \exp\bigl(\rho_{\alpha} A^{1/(1 + \alpha)}\bigr), \qquad j\ge1,\ A\ge1,
\]
for some $\rho_{\alpha} > 0$. Summing over $j\ge1$ gives
\[
H_{\alpha}(A) \leq C_{\alpha} \exp\bigl(\rho_{\alpha} A^{1/(1 + \alpha)}\bigr) \leq C_{\alpha} A \exp\bigl(\rho_{\alpha} A^{1/(1 + \alpha)}\bigr), \qquad A\ge1.
\]
Combining the cases $A\in[0,1]$ and $A\ge1$ proves the claim.
\end{proof}

\begin{proposition}[Near-zero bound for the Tweedie kernel]\label{prop:near-zero}
Fix $x\in(0,\infty)$ and $p\in(1,2)$, and let $\alpha = (2 - p)/(p - 1)$. Then there exist constants $\varepsilon = \varepsilon(x,p)\in (0,x/2)$, $a_0 = a_0(p) > 0$, $c_0 = c_0(x,p) > 0$, and $C_0 = C_0(p) > 0$ such that, for all sufficiently small $h$,
\[
k_h(t; x) \leq C_0 \, h^{-a_0} \, e^{-c_0/h} \, t^{\alpha - 1}, \qquad 0 < t \leq \varepsilon.
\]
\end{proposition}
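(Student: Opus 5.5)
The plan is to factor the series defining $k_h(t;x)$ in \eqref{eq:k.h} into the function $H_{\alpha}$ of Lemma~\ref{lem:wright-bound}, and then check that for $t$ small the exponential growth of $H_{\alpha}$ is beaten by the factor $e^{-\lambda_x}$.

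\textbf{Step 1: rewrite $k_h$ through $H_{\alpha}$.} Pulling $t^{-1}e^{-t/\beta_x}$ out of the sum gives, for $t>0$,
\[
k_h(t;x) = e^{-\lambda_x}\, e^{-t/\beta_x}\, t^{-1} \sum_{j=1}^{\infty}\frac{1}{j!\,\Gamma(j\alpha)}\Bigl\{\lambda_x\Bigl(\frac{t}{\beta_x}\Bigr)^{\alpha}\Bigr\}^{j}
= e^{-\lambda_x}\, e^{-t/\beta_x}\, t^{-1} H_{\alpha}(A_t),
\]
where $A_t = \lambda_x (t/\beta_x)^{\alpha}$.

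\textbf{Step 2: apply Lemma~\ref{lem:wright-bound}.} The lemma gives $H_{\alpha}(A_t)\le C_{\alpha} A_t \exp(\rho_{\alpha}A_t^{1/(1+\alpha)})$. The linear factor $A_t$ combines with $t^{-1}$ to produce exactly $\lambda_x\beta_x^{-\alpha}t^{\alpha-1}$, which is the required power of $t$. Bounding $e^{-t/\beta_x}\le 1$ then yields
\[
k_h(t;x)\le C_{\alpha}\,\lambda_x\beta_x^{-\alpha}\, t^{\alpha-1}\exp\bigl\{-\lambda_x+\rho_{\alpha}A_t^{1/(1+\alpha)}\bigr\}.
\]

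\textbf{Step 3: track the powers of $h$.} From \eqref{eq:alpha.lambda.beta},
\[
\lambda_x\beta_x^{-\alpha} = \frac{x^{2-p}}{(2-p)(p-1)^{\alpha}x^{\alpha(p-1)}}\,h^{-1-\alpha},
\]
so the polynomial prefactor is of order $h^{-a_0}$ with $a_0 = 1+\alpha$. This exponent depends only on $p$. The constant multiplying it also depends on $x$; I would simply absorb that into $C_0$, or otherwise fold it into the exponential factor.

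For the exponent, since $\lambda_x\propto h^{-1}$ and $\beta_x^{-\alpha}\propto h^{-\alpha}$,
\[
A_t^{1/(1+\alpha)} = \lambda_x^{1/(1+\alpha)}\beta_x^{-\alpha/(1+\alpha)}\,t^{\alpha/(1+\alpha)} = D_{x,p}\,t^{\alpha/(1+\alpha)}\,h^{-1}
\]
for an explicit constant $D_{x,p}$. Hence, for $0<t\le\varepsilon$,
\[
-\lambda_x+\rho_{\alpha}A_t^{1/(1+\alpha)}\le -\frac{1}{h}\Bigl\{\frac{x^{2-p}}{2-p}-\rho_{\alpha}D_{x,p}\,\varepsilon^{\alpha/(1+\alpha)}\Bigr\}.
\]

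\textbf{Step 4: choose $\varepsilon$.} Take $\varepsilon\in(0,x/2)$ small enough that $\rho_{\alpha}D_{x,p}\varepsilon^{\alpha/(1+\alpha)}\le x^{2-p}/\{2(2-p)\}$. With $c_0 = x^{2-p}/\{2(2-p)\}$ this gives
\[
k_h(t;x)\le C_0\,h^{-a_0}e^{-c_0/h}\,t^{\alpha-1},\qquad 0<t\le\varepsilon,
\]
and the bound in fact holds for every $h>0$, not only small $h$.

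\textbf{Main obstacle.} The delicate point is Step 3: $A_t$ itself grows like $h^{-1-\alpha}$, so the Wright-type bound must contribute only $\exp(O(h^{-1}))$, not worse. This works only because of the sharp exponent $1/(1+\alpha)$ in Lemma~\ref{lem:wright-bound}: $(h^{-1-\alpha})^{1/(1+\alpha)} = h^{-1}$, matching the rate of $\lambda_x$. Smallness of $\varepsilon$ then makes the competing constant strictly smaller than $x^{2-p}/(2-p)$. I would verify this exponent bookkeeping carefully before anything else.
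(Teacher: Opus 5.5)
Your proof is correct and is essentially the paper's argument. Both use the factorization $k_h(t;x)=e^{-\lambda_x-t/\beta_x}t^{-1}H_{\alpha}(A_t)$, apply Lemma~\ref{lem:wright-bound}, note that $A_t^{1/(1+\alpha)}$ scales like $h^{-1}$, and choose $\varepsilon$ to get $a_0=1+\alpha$ and $c_0=x^{2-p}/\{2(2-p)\}$; you are also right that the bound holds for every $h>0$.

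You missed one simplification. Because $\alpha(p-1)=2-p$, you get $\lambda_x\beta_x^{-\alpha}=(2-p)^{-1}(p-1)^{-\alpha}h^{-(1+\alpha)}$ and $A_t^{1/(1+\alpha)}=(2-p)^{-(p-1)}(p-1)^{-(2-p)}\,t^{2-p}/h$, and neither depends on $x$. So $C_0=C_0(p)$ follows directly. Absorbing an $x$-dependent factor into $C_0$, as you first suggest, would not match the statement.
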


\begin{proof}[Proof of Proposition~\ref{prop:near-zero}]
By \eqref{eq:k.h} and the definition of $H_{\alpha}$ in \eqref{eq:def.H.alpha},
\[
k_h(t; x) = e^{-\lambda_x - t/\beta_x } \, t^{-1} \, H_{\alpha}(A_h(t)), \qquad A_h(t) = \lambda_x \bigg(\frac{t}{\beta_x}\bigg)^{\alpha}, \qquad t > 0.
\]
Applying Lemma~\ref{lem:wright-bound} yields
\[
k_h(t; x) \leq C_{\alpha} \, e^{-\lambda_x - t/\beta_x } \, t^{-1} \, A_h(t) \, \exp\bigl(\rho_{\alpha} A_h(t)^{1/(1 + \alpha)}\bigr),
\]
for some constant $C_{\alpha} > 0$ depending only on $\alpha$. Since $A_h(t) = \lambda_x \beta_x ^{-\alpha}t^{\alpha}$, this becomes
\begin{equation}\label{eq:near-zero-pre}
k_h(t; x) \leq C_{\alpha} \, \lambda_x \beta_x ^{-\alpha}t^{\alpha - 1} \exp\left\{-\lambda_x - \frac{t}{\beta_x } + \rho_{\alpha} A_h(t)^{1/(1 + \alpha)}\right\}.
\end{equation}
A direct calculation using \eqref{eq:alpha.lambda.beta} gives
\[
A_h(t)^{1/(1 + \alpha)} = c_p \, \frac{t^{2 - p}}{h}, \qquad c_p = (2 - p)^{-(p - 1)}(p - 1)^{-(2 - p)}.
\]
Therefore the exponential factor in \eqref{eq:near-zero-pre} is
\[
\exp\left\{-\frac{x^{2 - p}}{(2 - p)h} - \frac{t}{(p - 1)hx^{p - 1}} + \rho_{\alpha} c_p \, \frac{t^{2 - p}}{h}\right\}.
\]
Choose $\varepsilon\in(0,x/2)$ so small that $\rho_{\alpha} c_p \, \varepsilon^{2 - p} \leq x^{2 - p}/\{2(2 - p)\}$. Then, for every $0 < t\leq \varepsilon$,
\[
-\frac{x^{2 - p}}{(2 - p)h} - \frac{t}{(p - 1)hx^{p - 1}} + \rho_{\alpha} c_p \, \frac{t^{2 - p}}{h} \leq - \frac{x^{2 - p}}{2(2 - p)h}.
\]
Moreover, $\lambda_x \beta_x ^{-\alpha} = (2 - p)^{-1} (p - 1)^{-\alpha} h^{-(1 + \alpha)}$. Hence
\[
k_h(t; x) \leq C_0 \, h^{-(1 + \alpha)}e^{-c_0/h}t^{\alpha - 1}, \qquad 0 < t\leq \varepsilon,
\]
with $a_0 = 1 + \alpha = 1/(p - 1)$, $c_0 = x^{2 - p}/\{2(2 - p)\}$, and some $C_0 > 0$ depending only on $p$, as claimed.
\end{proof}

\begin{proposition}[Away-from-zero exponential upper bound]\label{prop:away-zero}
Fix $p\in(1,2)$, $x\in(0,\infty)$ and $\varepsilon\in(0,x/2)$. Then there exists a constant $C_1 = C_1(x,p,\varepsilon) > 0$ such that, for all sufficiently small $h$,
\[
k_h(t; x)\leq C_1 \, h^{-1/2} \exp\left\{-\frac{d_p(t,x)}{4h}\right\}, \qquad t\geq \varepsilon.
\]
\end{proposition}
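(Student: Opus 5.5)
We need a uniform bound for $k_h(t;x)$ on $[\varepsilon,\infty)$, with a Gaussian-type prefactor $h^{-1/2}$ and half the deviance in the exponent. The plan is to work directly from the series representation $k_h(t;x)=e^{-\lambda_x - t/\beta_x}\,t^{-1}H_\alpha(A_h(t))$, as in the proof of Proposition~\ref{prop:near-zero}. However, Lemma~\ref{lem:wright-bound} is too crude here, so we replace it with a sharp Laplace-type bound for the series.

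\textbf{Sharp bound on $H_\alpha$.} Stirling's formula with explicit error bounds, applied termwise to $A^j/\{j!\,\Gamma(j\alpha)\}$, gives
\[
H_\alpha(A)\le C\,A^{c}\exp\{(1+\alpha)\alpha^{-\alpha/(1+\alpha)}A^{1/(1+\alpha)}\}
\]
for $A\ge1$, where $C,c$ depend only on $\alpha$. The mechanism is that the maximal term sits at $j^\ast\asymp A^{1/(1+\alpha)}$, and the number of non-negligible terms is $O((j^\ast)^{1/2})$. With the constant in the exponent taken exactly, we substitute $A_h(t)=\lambda_x\beta_x^{-\alpha}t^\alpha$. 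A direct computation, the same one that produced $c_p$ in the proof of Proposition~\ref{prop:near-zero} but now with the sharp constant, shows that the exponent becomes exactly
\[
-\lambda_x-\frac{t}{\beta_x}+\frac{t^{2-p}}{(1-p)(2-p)h}\cdot(-1)\cdots=-\frac{d_p(t,x)}{2h}.
\]
This identity is precisely the saddlepoint identity behind Lemma~\ref{lem:stirling}: the exponent $-\lambda_x - t/\beta_x + \kappa A^{1/(1+\alpha)}$ equals $-d_p(t,x)/(2h)$. Hence
\[
k_h(t;x)\le C\,t^{-1}\,A_h(t)^{c}\,e^{-d_p(t,x)/(2h)}
\]
whenever $A_h(t)\ge1$. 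For $t\ge\varepsilon$ and small $h$ we have $A_h(t)\asymp t^\alpha h^{-(1+\alpha)}\ge1$, so this case covers the whole range.

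\textbf{Absorbing the polynomial factors.} It remains to control $t^{-1}A_h(t)^c$, which is at most $C'h^{-c(1+\alpha)}\max(t,\varepsilon)^{c\alpha-1}$. This must be reduced to $h^{-1/2}$ by spending half of the exponent, $e^{-d_p(t,x)/(4h)}$. Two regimes arise.
\begin{itemize}
\item[(a)] For $|t-x|\ge\eta$ with a fixed small $\eta$, convexity of $d_p(\cdot,x)$ gives $d_p(t,x)\ge c_1\min(|t-x|^2,\,|t-x|)$, because $d_p$ grows like $t\,x^{1-p}$ at infinity. Therefore $e^{-d_p/(4h)}$ beats any power of $h^{-1}$ and any polynomial in $t$, uniformly on $t\ge\varepsilon$, $|t-x|\ge\eta$.
\item[(b)] For $|t-x|<\eta$, the crude polynomial bound is too lossy. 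Here we instead invoke Lemma~\ref{lem:stirling} with $K=[x-\eta,x+\eta]$, which gives $k_h(t;x)\le 2(2\pi h t^p)^{-1/2}e^{-d_p/(2h)}\le C h^{-1/2}e^{-d_p/(4h)}$.
\end{itemize}
Combining (a) and (b) yields the claim with $C_1$ depending on $x,p,\varepsilon$.

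\textbf{Main obstacle.} The hard step is the sharp Stirling/Laplace bound for $H_\alpha$. Its exponential constant must be exact, so that the exponent reproduces $-d_p/(2h)$ rather than merely being of order $1/h$. The prefactor growth must also be at most polynomial in $A$ and uniform. To get this, we would bound $\log$ of each term by its concave continuous extension $\psi(s)=s\log A - s\log s - s\alpha\log(s\alpha)+(1+\alpha)s+O(\log s)$, maximize at $s^\ast$, and sum the terms via $\sum_j e^{\psi(j)}\le C\,(s^\ast)^{c}e^{\psi(s^\ast)}$. Alternatively, one could cite the known asymptotics of Wright's generalized Bessel function $\sum_j z^j/\{j!\,\Gamma(j\alpha)\}$, which is what Jørgensen's saddlepoint result rests on.
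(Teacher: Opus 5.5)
Your proof is correct, but it takes a different route from the paper's.

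\textbf{The paper's route.} The paper never sharpens Lemma~\ref{lem:wright-bound}. It splits $[\varepsilon,\infty)$ into a compact piece $[\varepsilon,L]$ and a tail $[L,\infty)$. On $[\varepsilon,L]$, Lemma~\ref{lem:stirling} directly gives $k_h\le 2(2\pi h t^p)^{-1/2}e^{-d_p(t,x)/(2h)}$. On $[L,\infty)$, it uses the crude Wright bound exactly as stated. This works because the non-sharp constant $\rho_\alpha$ only perturbs the coefficient of the sublinear term $t^{2-p}$ in the exponent. The crude exponent $q(t)$ and $d_p(t,x)/2$ are both dominated by the same linear term $t/\{(p-1)x^{p-1}\}$. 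So for $L$ large one gets $q(t)-\gamma t\ge d_p(t,x)/3$, which absorbs the factor $h^{-(1+\alpha)}t^{\alpha-1}$. Your remark that Lemma~\ref{lem:wright-bound} is ``too crude here'' is therefore true only if you insist on using it on the whole half-line. The step you call the main obstacle is avoidable.

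\textbf{Your route.} You prove an exact-constant Laplace bound $H_\alpha(A)\le CA^c\exp(\kappa A^{1/(1+\alpha)})$ with $\kappa=(1+\alpha)\alpha^{-\alpha/(1+\alpha)}$. The identity you need does hold. One checks $\kappa c_p=1/\{(p-1)(2-p)\}$, so $-\lambda_x-t/\beta_x+\kappa A_h(t)^{1/(1+\alpha)}=-d_p(t,x)/(2h)$ exactly. Your displayed version of this step is garbled, but its conclusion is right.

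The summation also goes through:
\begin{enumerate}[(i)]
\item The lower bound $\Gamma(y)\ge\sqrt{2\pi/y}\,(y/e)^y$ is valid for all $y>0$, so small $j\alpha$ causes no trouble, and each term is at most a constant times $e^{\psi(j)}$.
\item Since $\psi'(s)=(1+\alpha)\log(j^\ast/s)$, the terms decay geometrically beyond $e\,j^\ast$.
\item Hence even a crude count of $O(j^\ast)$ non-negligible terms yields a polynomial prefactor; your $O((j^\ast)^{1/2})$ count is sharper than needed.
\end{enumerate}

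\textbf{What each buys.} Your approach gives a stronger intermediate result: a global saddlepoint-type upper bound $k_h(t;x)\le C\,t^{-1}A_h(t)^c e^{-d_p(t,x)/(2h)}$ on all of $[\varepsilon,\infty)$, with the exact deviance in the exponent. Lemma~\ref{lem:stirling} is then needed only on a small neighbourhood of $x$, to recover the $h^{-1/2}$ prefactor where $d_p$ is too small to absorb polynomial losses. The paper's argument is shorter because it reuses the existing lemmas. It needs the exact exponent only on a compact set, where the saddlepoint lemma already supplies it, and elsewhere relies on the linear growth of $d_p$.
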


\begin{proof}[Proof of Proposition~\ref{prop:away-zero}]
Split the proof into a compact part and a tail part.

\begin{enumerate}[(i)]
\item The compact interval $[\varepsilon,L]$. Let $L > x\vee1$ be chosen sufficiently large, as specified below. On the compact set $[\varepsilon,L]\subseteq (0,\infty)$, Lemma~\ref{lem:stirling} yields
\[
k_h(t; x) = (2\pi h \, t^p)^{-1/2} \exp\left\{-\frac{d_p(t,x)}{2h}\right\}\{1 + \oo(1)\},
\]
uniformly in $t\in[\varepsilon,L]$. Hence, for all sufficiently small $h$,
\[
k_h(t; x) \leq 2(2\pi h \, t^p)^{-1/2} \exp\left\{-\frac{d_p(t,x)}{2h}\right\} \leq C_{\varepsilon,L} \, h^{-1/2} \exp\left\{-\frac{d_p(t,x)}{4h}\right\},
\]
for some constant $C_{\varepsilon,L} > 0$, uniformly for $t\in[\varepsilon,L]$.

\item The tail region $t\geq L$. By Lemma~\ref{lem:wright-bound}, the argument used in the proof of Proposition~\ref{prop:near-zero} gives
\begin{equation}\label{eq:tail-pre}
k_h(t; x) \leq C \, h^{-(1 + \alpha)} \, t^{\alpha - 1} \exp\left\{-\frac{q(t)}{h}\right\}, \qquad t > 0,
\end{equation}
for some constant $C > 0$, where
\[
q(t) = \frac{x^{2 - p}}{2 - p} + \frac{t}{(p - 1)x^{p - 1}} - b_{\alpha} t^{2 - p}, \qquad b_{\alpha} = \rho_{\alpha} c_p > 0.
\]
On the other hand, from the explicit formula for the Tweedie deviance,
\[
\frac{d_p(t,x)}{2} = \frac{x^{2 - p}}{2 - p} + \frac{t}{(p - 1)x^{p - 1}} - \frac{t^{2 - p}}{(p - 1)(2 - p)}.
\]
Since $2 - p\in(0,1)$, the linear term dominates the sublinear term $t^{2 - p}$ as $t\to \infty$. Therefore, choose $L$ sufficiently large so that
\[
q(t) - \gamma t \geq \frac13 \, d_p(t,x), \qquad t\geq L,
\]
where $\gamma = 1/\{4(p - 1)x^{p - 1}\}$. Also, for some constant $C_{\gamma} > 0$, for $t\geq L$ and $h\le1$,
\[
t^{\alpha - 1}\leq C_{\gamma} e^{\gamma t}\leq C_{\gamma} e^{\gamma t/h}.
\]
Substituting this into \eqref{eq:tail-pre}, we obtain
\[
k_h(t; x) \leq C \, h^{-(1 + \alpha)} \exp\left\{-\frac{q(t) - \gamma t}{h}\right\} \leq C \, h^{-(1 + \alpha)} \exp\left\{-\frac{d_p(t,x)}{3h}\right\}, \qquad t\geq L.
\]
Since $d_p(t,x)\geq d_p(L,x) > 0$ for $t\geq L$, the exponential term absorbs the extra polynomial factor in $h$: for all sufficiently small $h$,
\[
h^{-(1 + \alpha)} \exp\left\{-\frac{d_p(t,x)}{3h}\right\} \leq \widetilde{C} \, h^{-1/2} \exp\left\{-\frac{d_p(t,x)}{4h}\right\}, \qquad t\geq L,
\]
for some constant $\smash{\widetilde{C} > 0}$.
\end{enumerate}
Combining this with the bound on $[\varepsilon,L]$ completes the proof.
\end{proof}

\begin{lemma}\label{lem:moment-condition}
Fix $p\in(1,2)$, $x\in(0,\infty)$, and let $\delta > 0$. Suppose that Assumption~\ref{ass:smooth} holds and that Assumption~\ref{ass:boundary} holds with $r = 2 + \delta$. Then
\[
\EE\{K_h(X; x)^{2 + \delta}\} = \OO(h^{-(1 + \delta)/2}), \qquad h\to 0.
\]
\end{lemma}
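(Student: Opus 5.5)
We need to bound $\EE\{K_h(X;x)^{2+\delta}\}$ for fixed $x>0$, with $r=2+\delta$. Since $X$ is semicontinuous, the expectation splits into the atom and the positive part:
\[
\EE\{K_h(X;x)^{r}\} = p_0\, e^{-r\lambda_x} + (1-p_0)\int_0^\infty k_h(u;x)^{r} f(u)\,\rd u .
\]
The atom term is $p_0\exp\{-r x^{2-p}/((2-p)h)\}$, which is exponentially small and hence $\oo(h^{-(1+\delta)/2})$. For the integral, I would fix $\varepsilon\in(0,x/2)$ small enough that both Proposition~\ref{prop:near-zero} and Assumption~\ref{ass:boundary} apply; shrinking $\varepsilon$ preserves the integrability. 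I would then split $(0,\infty)$ into the near-zero region $(0,\varepsilon]$, an intermediate region $\{u\ge\varepsilon: |u-x|>\eta\}$, and a central neighborhood $|u-x|\le\eta$. Here $\eta>0$ is chosen small enough that Assumption~\ref{ass:smooth} gives $f\le C$ on $[x-\eta,x+\eta]$ and $x-\eta>\varepsilon$.

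\emph{Near-zero region.} Proposition~\ref{prop:near-zero} gives $k_h(u;x)^r\le C_0^r h^{-ra_0}e^{-rc_0/h}u^{r(\alpha-1)}$. Therefore
\[
\int_0^\varepsilon k_h^r f\,\rd u\le C h^{-ra_0}e^{-rc_0/h}\int_0^\varepsilon u^{r(\alpha-1)}f(u)\,\rd u ,
\]
and the integral on the right is finite by Assumption~\ref{ass:boundary}. This term is exponentially small. This is exactly where Assumption~\ref{ass:boundary} is needed.

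\emph{Intermediate region.} By Proposition~\ref{prop:away-zero}, $k_h(u;x)^r\le C_1^r h^{-r/2}\exp\{-r d_p(u,x)/(4h)\}$ for $u\ge\varepsilon$. The deviance $d_p(\cdot,x)$ is continuous, vanishes only at $u=x$, and grows linearly as $u\to\infty$. Hence $\inf_{u\ge\varepsilon,|u-x|>\eta} d_p(u,x)=:d_\eta>0$, and the integrand is at most $C h^{-r/2}e^{-rd_\eta/(4h)}$. Since $f$ integrates to one, this region also contributes an exponentially small amount.

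\emph{Central region.} This is the only region producing the polynomial rate. On $|u-x|\le\eta$ I again use Proposition~\ref{prop:away-zero}, together with $f\le C$. A Taylor expansion gives $d_p(u,x)=(u-x)^2/x^p+\OO(|u-x|^3)$, so after shrinking $\eta$ one has $d_p(u,x)\ge c(u-x)^2$ for some $c>0$. This yields
\[
\int_{|u-x|\le\eta}k_h^r f\,\rd u\le C h^{-r/2}\int_{\mathbb{R}} e^{-rc(u-x)^2/(4h)}\rd u = \OO(h^{-r/2}h^{1/2}) = \OO(h^{-(1+\delta)/2}).
\]
Alternatively, for the sharp constant one could use Corollary~\ref{cor:local-gaussian} on $|u-x|\le M\sqrt h$, but the crude Gaussian-type bound suffices here.

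Summing the four contributions gives the claim. The main obstacle is the near-zero region, where the kernel is not controlled by the saddlepoint form and the factor $u^{r(\alpha-1)}$ may be nonintegrable for $p>3/2$. Proposition~\ref{prop:near-zero} combined with Assumption~\ref{ass:boundary} resolves this. Only routine care is needed to choose $\varepsilon$ and $\eta$ compatibly.
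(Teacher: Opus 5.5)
Your proposal is correct and follows the paper's proof essentially step for step. It uses the same split into the atom, the near-zero region handled by Proposition~\ref{prop:near-zero} and Assumption~\ref{ass:boundary}, a neighborhood of $x$ handled by Proposition~\ref{prop:away-zero} together with the quadratic lower bound $d_p(u,x)\ge c(u-x)^2$, and a far region where $\inf d_p>0$ makes the contribution exponentially small. The only difference is a harmless streamlining: you merge the paper's regions $\mathcal{R}_1$ (where it invokes Corollary~\ref{cor:local-gaussian}) and $\mathcal{R}_2$ into the single neighborhood $|u-x|\le\eta$. This is legitimate because $x-\eta>\varepsilon$ places that whole neighborhood within the range of Proposition~\ref{prop:away-zero}, so the local Gaussian approximation is not needed for an $\OO$ bound.
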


\begin{proof}[Proof of Lemma~\ref{lem:moment-condition}]
Write $r = 2 + \delta$. Throughout the proof, $C$ and $c$ denote positive constants, depending only on fixed quantities, whose values may change from line to line. Since $X$ is semicontinuous,
\[
\PP(X = 0) = p_0, \qquad X\mid(X > 0)\sim f(u) \, \rd u,
\]
and therefore
\[
\EE\{K_h(X; x)^r\} = p_0 K_h(0; x)^r + (1 - p_0)\int_0^{\infty} k_h(u; x)^r f(u) \, \rd u.
\]
Because
\[
K_h(0; x) = \pi_h(x) = \exp\left\{-\frac{x^{2 - p}}{(2 - p)h}\right\},
\]
the atomic term is exponentially small: $p_0 K_h(0; x)^r = \OO(e^{-c/h}) = \oo(h^m)$ for every $m > 0$.

It remains to bound the integral term. Fix $\varepsilon\in(0,x/2)$ small enough that Assumption~\ref{ass:boundary} holds on $(0,\varepsilon)$ with exponent $r$ and Proposition~\ref{prop:near-zero} applies on $(0,\varepsilon)$. Also choose $\delta_0\in(0,x/2)$ so small that $[x - \delta_0,x + \delta_0]$ is contained in the smoothness neighborhood of $x$ from Assumption~\ref{ass:smooth}, and choose $c_{\ast} > 0$ sufficiently small so that
\[
d_p(u,x)\geq c_{\ast} (u - x)^2, \qquad |u - x|\leq \delta_0;
\]
this follows from the local expansion
\[
d_p(u,x) = \frac{(u - x)^2}{x^p} + \oo\bigl((u - x)^2\bigr) \qquad\text{as }u\to x.
\]
For all sufficiently small $h$, assume $\sqrt{h}<\delta_0$. Then
\[
\int_0^{\infty} k_h(u; x)^r f(u) \, \rd u = \left(\int_{\mathcal{R}_0} + \int_{\mathcal{R}_1} + \int_{\mathcal{R}_2} + \int_{\mathcal{R}_3}\right) k_h(u; x)^r f(u) \, \rd u,
\]
where
\[
\begin{aligned}
\mathcal{R}_0 &= (0,\varepsilon), \\
\mathcal{R}_1 &= \{u\in (0,\infty) : |u - x|\leq \sqrt{h}\}, \\
\mathcal{R}_2 &= \{u\in (0,\infty) : \sqrt{h} < |u - x| \leq \delta_0\}, \\
\mathcal{R}_3 &= \{u\in [\varepsilon,\infty) : |u - x| > \delta_0\}.
\end{aligned}
\]
Let us bound these four terms separately.
\begin{enumerate}[(i)]
\item The near-zero region $\mathcal{R}_0 = (0,\varepsilon)$. By Proposition~\ref{prop:near-zero} and Assumption~\ref{ass:boundary},
\[
\int_{\mathcal{R}_0} k_h(u; x)^r f(u) \, \rd u \leq C h^{-a_0 r} e^{-r c_0/h} \int_0^{\varepsilon} u^{r(\alpha - 1)} f(u) \, \rd u = \OO(h^{-a_0 r} e^{-r c_0/h}) = \oo(h^{-(r - 1)/2}).
\]
%%%
\item The local region $\mathcal{R}_1 = \{u\in (0,\infty) : |u - x|\leq \sqrt{h}\}$. Since $f$ is continuous at the fixed point $x > 0$, it is bounded on some neighborhood of $x$ by Assumption~\ref{ass:smooth}. By Corollary~\ref{cor:local-gaussian}, uniformly for $u\in \mathcal{R}_1$,
\[
k_h(u; x) = \frac{1}{\sqrt{2\pi h x^p}} \exp\left(-\frac{(u - x)^2}{2h x^p}\right)\{1 + \oo(1)\}.
\]
Therefore, for all sufficiently small $h$,
\[
\int_{\mathcal{R}_1} k_h(u; x)^r f(u) \, \rd u \leq C h^{-r/2} \int_{|u - x|\leq \sqrt{h}} \exp\left\{-\frac{r(u - x)^2}{C h}\right\} \, \rd u.
\]
With the change of variables $u = x + \sqrt{h} \, t$,
\[
\int_{\mathcal{R}_1} k_h(u; x)^r f(u) \, \rd u \leq C h^{-r/2} \sqrt{h} \int_{|t|\leq 1} e^{-c t^2} \, \rd t = \OO(h^{-(r - 1)/2}).
\]
%%%
\item The intermediate region $\mathcal{R}_2 = \{u\in (0,\infty) : \sqrt{h} < |u - x| \leq \delta_0\}$. Since $\delta_0 < x/2$, one has $u\geq x - \delta_0\geq x/2 > \varepsilon$ on the region $\mathcal{R}_2$. Hence Proposition~\ref{prop:away-zero} applies and gives
\[
k_h(u; x) \leq C h^{-1/2} \exp\left\{-\frac{d_p(u,x)}{4h}\right\}.
\]
Using the quadratic lower bound on $d_p(u,x)$ for $|u - x|\leq \delta_0$ yields
\[
k_h(u; x) \leq C h^{-1/2} \exp\left\{-\frac{c_{\ast} (u - x)^2}{4h}\right\}.
\]
Since $f$ is bounded on $[x - \delta_0,x + \delta_0]$,
\[
\int_{\mathcal{R}_2} k_h(u; x)^r f(u) \, \rd u \leq C h^{-r/2} \int_{ \sqrt{h} < |u - x|\leq \delta_0} \exp\left\{-\frac{r c_{\ast} (u - x)^2}{4h}\right\} \, \rd u.
\]
Again setting $u = x + \sqrt{h} \, t$, we obtain
\[
\int_{\mathcal{R}_2} k_h(u; x)^r f(u) \, \rd u \leq C h^{-r/2} \sqrt{h} \int_{|t| > 1} e^{-c t^2} \, \rd t = \OO(h^{-(r - 1)/2}).
\]
%%%
\item The far region $\mathcal{R}_3 = \{u\in [\varepsilon,\infty) : |u - x| > \delta_0\}$. On the set $\mathcal{R}_3$, the function $d_p(u,x)$ is continuous, strictly positive, and tends to $\infty$ as $u\to \infty$. Let $c_{\delta_0,\varepsilon} = \inf_{u\in \mathcal{R}_3} d_p(u,x) > 0$. By Proposition~\ref{prop:away-zero},
\[
k_h(u; x) \leq C h^{-1/2} \exp\left\{-\frac{d_p(u,x)}{4h}\right\} \leq C h^{-1/2} e^{-c_{\delta_0,\varepsilon}/(4h)}, \qquad u\in \mathcal{R}_3.
\]
Therefore,
\[
\int_{\mathcal{R}_3} k_h(u; x)^r f(u) \, \rd u \leq C h^{-r/2} e^{-r c_{\delta_0,\varepsilon}/(4h)} \int_{\mathcal{R}_3} f(u) \, \rd u = \oo(h^{-(r - 1)/2}).
\]
\end{enumerate}
Combining the four bounds, we conclude that
\[
\int_0^{\infty} k_h(u; x)^r f(u) \, \rd u = \OO(h^{-(r - 1)/2}).
\]
Since the atomic term is exponentially small, it follows that
\[
\EE\{K_h(X; x)^r\} = \OO(h^{-(r - 1)/2}),
\]
which proves the lemma.
\end{proof}

\renewcommand{\thesection}{Appendix~\Alph{section}}
\section{Proofs of the main results}\label{app:proofs.main.results}
\renewcommand{\thesection}{\Alph{section}}

\subsection{Proof of \texorpdfstring{\hyperref[thm:bias_tkde]{Theorem~\ref{thm:bias_tkde}}}{Theorem~\ref{thm:bias_tkde}}}

Fix $x\in (0,\infty)$, let $\alpha = (2 - p)/(p - 1)$, let $r = 2 + \delta$ be the exponent in Assumption~\ref{ass:boundary}, and let $Z\sim \mathrm{Tw}_p(\mu = x,\phi = h)$. Since $X_1,\dots,X_n$ are i.i.d.,
\[
\EE\{\widehat{g}_h(x)\}
= \EE\{K_h(X; x)\}
= p_0 \, K_h(0; x) + (1 - p_0)\int_{0}^{\infty} K_h(u; x) \, f(u) \, \rd u .
\]
For $p\in(1,2)$, the Tweedie law of $Z$ satisfies
\[
\PP(Z = 0) = \pi_h(x) = \exp\left\{-\frac{x^{2 - p}}{(2 - p)h}\right\},
\]
and admits a Lebesgue subdensity $k_h(\cdot \, ; x)$ on $(0,\infty)$ such that, for any measurable $\varphi$ for which the expectation exists,
\[
\EE\{\varphi(Z)\} = \varphi(0) \, \pi_h(x) + \int_{0}^{\infty}\varphi(u) \, k_h(u; x) \, \rd u.
\]
Hence $K_h(0; x) = \pi_h(x)$, and for $u > 0$ the quantity $K_h(u; x)$ is interpreted as the subdensity value $k_h(u; x)$. Extending $f$ to $[0, \infty)$ by defining $f(0) = 0$, we have
\begin{equation}\label{eq:Eg_decomp}
\begin{aligned}
\EE\{\widehat{g}_h(x)\}
&= p_0 \, \pi_h(x) + (1 - p_0)\int_{0}^{\infty} k_h(u; x) \, f(u) \, \rd u \\
&= p_0 \, \pi_h(x) + (1 - p_0) \, \EE\{f(Z)\}.
\end{aligned}
\end{equation}
Since $x$ is fixed,
\[
p_0 \, \pi_h(x) = p_0\exp\left\{-\frac{x^{2 - p}}{(2 - p)h}\right\} = \oo(h), \qquad h\to 0.
\]

Choose $\varepsilon_0\in(0,x/2\wedge 1)$ small enough that Assumption~\ref{ass:boundary} holds on $(0,\varepsilon_0]$ and Proposition~\ref{prop:near-zero} applies on $(0,\varepsilon_0]$. Next, fix $\eta > 0$, and choose $\rho\in(0,(x - \varepsilon_0)/2)$ so small that $f\in C^2([x - \rho,x + \rho])$ and
\[
\omega_x (\rho) : = \sup_{|u - x|\leq \rho}|f''(u) - f''(x)| \leq \eta.
\]
Set
\[
B_0 = (0,\varepsilon_0], \qquad B_1 = [x - \rho,x + \rho], \qquad B_2 = (\varepsilon_0,\infty)\setminus B_1.
\]
Because $f(0) = 0$,
\[
\EE\{f(Z)\} = \EE\{f(Z)\ind(Z\in B_0)\} + \EE\{f(Z)\ind(Z\in B_1)\} + \EE\{f(Z)\ind(Z\in B_2)\}.
\]

Let us first control the near-zero contribution. By Proposition~\ref{prop:near-zero},
\[
\EE\{f(Z)\ind(Z\in B_0)\} = \int_0^{\varepsilon_0} k_h(u; x) \, f(u) \, \rd u \leq C_0 h^{-a_0}e^{-c_0/h}\int_0^{\varepsilon_0} u^{\alpha - 1}f(u) \, \rd u.
\]
The integral on the right is finite. Indeed, if $\alpha\ge1$, then $u^{\alpha - 1}\leq \varepsilon_0^{\alpha - 1}$ on $(0,\varepsilon_0]$, whereas if $\alpha < 1$, then $r > 1$ and $\varepsilon_0\le1$ give $u^{\alpha - 1}\leq u^{r(\alpha - 1)}$ for $0 < u\leq \varepsilon_0$, so Assumption~\ref{ass:boundary} implies $\int_0^{\varepsilon_0}u^{\alpha - 1}f(u) \, \rd u < \infty$. Therefore
\begin{equation}\label{eq:bias-near-zero}
\EE\{f(Z)\ind(Z\in B_0)\} = \oo(h).
\end{equation}

Next, by Proposition~\ref{prop:away-zero},
\[
\EE\{f(Z)\ind(Z\in B_2)\} = \int_{B_2} k_h(u; x) \, f(u) \, \rd u \leq C_1 h^{-1/2}\int_{B_2} \exp\left\{-\frac{d_p(u,x)}{4h}\right\}f(u) \, \rd u.
\]
Since $d_p(\cdot,x)$ is continuous on $[\varepsilon_0,\infty)$, satisfies $d_p(u,x) > 0$ for $u\neq x$, and $d_p(u,x)\to \infty$ as $u\to \infty$,
\[
m_{\rho}: = \inf_{u\in B_2} d_p(u,x) > 0.
\]
Hence
\begin{equation}\label{eq:bias-away}
\EE\{f(Z)\ind(Z\in B_2)\} \leq C_1 h^{-1/2}e^{-m_{\rho}/(4h)}\int_{B_2}f(u) \, \rd u = \oo(h).
\end{equation}

On $B_1$, Taylor's theorem gives
\[
f(u) = f(x) + f'(x)(u - x) + \frac{1}{2} f''(x)(u - x)^2 + R_x (u), \qquad u\in B_1,
\]
where
\[
|R_x (u)| \leq \frac{1}{2} \, \omega_x (\rho)(u - x)^2.
\]
Writing $\Delta_h = Z - x$, one has
\begin{equation}\label{eq:bias-local}
\begin{aligned}
\EE\{f(Z)\ind(Z\in B_1)\}
&= f(x)\PP(Z\in B_1) + f'(x) \, \EE\{\Delta_h\ind(Z\in B_1)\} \\
& \qquad + \frac{1}{2} f''(x) \, \EE\{\Delta_h^2\ind(Z\in B_1)\} + \EE\{R_x (Z)\ind(Z\in B_1)\}.
\end{aligned}
\end{equation}

For $m = 0,1,2$, Proposition~\ref{prop:near-zero} and the bound $|\Delta_h|\leq x$ on $B_0$ yield
\begin{equation}\label{eq:nearzero-moments}
\EE\{|\Delta_h|^m\ind(Z\in B_0)\} \leq x^m\int_0^{\varepsilon_0} k_h(u; x) \, \rd u \leq C h^{-a_0}e^{-c_0/h}\int_0^{\varepsilon_0}u^{\alpha - 1} \, \rd u = \oo(h).
\end{equation}

Because $2 - p\in(0,1)$, the explicit formula for $d_p(u,x)$ shows that $d_p(u,x)$ grows linearly as $u\to \infty$. Hence there exist $L > (x + \rho)\vee1$ and $c_L > 0$ such that
\[
d_p(u,x)\geq c_L u, \qquad u\geq L.
\]
Then, for $m = 0,1,2$, Proposition~\ref{prop:away-zero} gives
\begin{align*}
\EE\{|\Delta_h|^m\ind(Z\in B_2)\}
&= \int_{B_2} |u - x|^m k_h(u; x) \, \rd u \\
&\leq C_1 h^{-1/2}\int_{B_2\cap[\varepsilon_0,L]} |u - x|^m \exp\left\{-\frac{d_p(u,x)}{4h}\right\} \, \rd u \\
& \qquad + C_1 h^{-1/2}\int_{L}^{\infty} u^m \exp\left\{-\frac{c_L u}{4h}\right\} \, \rd u.
\end{align*}
Since the closure of $B_2\cap[\varepsilon_0,L]$ is compact and does not contain $x$,
\[
m_{\rho,L}: = \inf_{u\in \overline{B_2\cap[\varepsilon_0,L]}} d_p(u,x) > 0,
\]
so the first term is $\OO\bigl(h^{-1/2}e^{-m_{\rho,L}/(4h)}\bigr) = \oo(h)$. For the second term, the change of variables $v = c_L u/(4h)$ gives
\[
h^{-1/2}\int_{L}^{\infty} u^m \exp\left\{-\frac{c_L u}{4h}\right\} \, \rd u = \OO\bigl(h^{m + 1/2}e^{-c_L L/(8h)}\bigr) = \oo(h).
\]
Consequently,
\begin{equation}\label{eq:far-moments}
\EE\{|\Delta_h|^m\ind(Z\in B_2)\} = \oo(h), \qquad m = 0,1,2.
\end{equation}

Using
\[
\PP(Z = 0) = \pi_h(x) = \oo(h), \qquad \EE(\Delta_h) = 0, \qquad \EE(\Delta_h^2) = \Var(Z) = h x^p,
\]
and \eqref{eq:nearzero-moments}--\eqref{eq:far-moments}, one has
\[
\PP(Z\in B_1) = 1 - \oo(h), \qquad \EE\{\Delta_h\ind(Z\in B_1)\} = \oo(h), \qquad \EE\{\Delta_h^2\ind(Z\in B_1)\} = h x^p + \oo(h).
\]
Moreover,
\[
\bigl|\EE\{R_x (Z)\ind(Z\in B_1)\}\bigr| \leq \frac{1}{2} \, \omega_x (\rho) \, \EE\{\Delta_h^2\ind(Z\in B_1)\} \leq \frac{1}{2} \, \eta \, \{h x^p + \oo(h)\}.
\]
Substituting these bounds into \eqref{eq:bias-local} yields
\[
\EE\{f(Z)\ind(Z\in B_1)\} = f(x) + \frac{1}{2} h x^p f''(x) + \OO(\eta h) + \oo(h).
\]
Combining this with \eqref{eq:bias-near-zero} and \eqref{eq:bias-away} yields
\[
\EE\{f(Z)\} = f(x) + \frac{1}{2} h x^p f''(x) + \OO(\eta h) + \oo(h).
\]
Since $\eta > 0$ was arbitrary,
\begin{equation}\label{eq:Ef_bias}
\EE\{f(Z)\} = f(x) + \frac{1}{2} h x^p f''(x) + \oo(h).
\end{equation}
Finally, combining \eqref{eq:Eg_decomp} and \eqref{eq:Ef_bias}, and using $g(x) = (1 - p_0)f(x)$ and $g''(x) = (1 - p_0)f''(x)$ for $x > 0$, one has
\[
\EE\{\widehat{g}_h(x)\} - g(x) = \frac{1}{2} \, h \, x^p g''(x) + \oo(h),
\]
which completes the proof. \qed

\subsection{Proof of \texorpdfstring{\hyperref[thm:var]{Theorem~\ref{thm:var}}}{Theorem~\ref{thm:var}}}

Recall that
\[
\widehat{g}_h(x) = \frac{1}{n} \sum_{i=1}^n K_h(X_i; x),
\]
where $K_h(\cdot; x)$ is the Tweedie kernel with mean parameter $\mu = x$, dispersion $h$, and power $p\in(1,2)$. Set
\[
\alpha = \frac{2 - p}{p - 1}.
\]
Throughout the proof, $C$ denotes a positive constant, depending only on fixed quantities and independent of $h$ and $M$, whose value may change from line to line. Since $X_1,\dots,X_n$ are i.i.d.,
\begin{equation}\label{eq:var-basic-polished}
\Var\{\widehat{g}_h(x)\} = \frac{1}{n} \Var\{K_h(X; x)\} = \frac{1}{n} \big(\EE[K_h(X; x)^2] - \EE[K_h(X; x)]^2\big).
\end{equation}
For $p\in(1,2)$, the Tweedie kernel has an atom at $0$ of size
\[
\pi_h(x) = \exp\left\{-\frac{x^{2 - p}}{(2 - p)h}\right\},
\]
and a Lebesgue subdensity $k_h(\cdot; x)$ on $(0,\infty)$. Thus, if $Z$ has this Tweedie law, then for any measurable $\varphi$ for which the expectation exists,
\[
\EE\{\varphi(Z)\} = \varphi(0)\pi_h(x) + \int_0^{\infty} \varphi(u)k_h(u; x) \, \rd u.
\]
We interpret $K_h(u; x) = k_h(u; x)$ as this subdensity value for $u > 0$, and $K_h(0; x) = \pi_h(x)$. Since $x > 0$ is fixed, $\pi_h(x)$ is exponentially small in $h^{-1}$; in particular, for every $m > 0$,
\[
\pi_h(x) = \oo(h^m), \qquad h\to 0.
\]
Hence all terms involving $\pi_h(x)$ are negligible relative to $h^{-1/2}$. Then
\begin{equation}\label{eq:EKh2-start}
\EE[K_h(X; x)^2] = p_0 \, \pi_h(x)^2 + (1 - p_0)\int_0^{\infty} k_h(u; x)^2 f(u) \, \rd u.
\end{equation}
The first term is $\oo(h^m)$ for every $m > 0$, so it remains to evaluate the integral term.

Fix $x\in(0,\infty)$. Let $r = 2 + \delta$ be the exponent in Assumption~\ref{ass:boundary}, and choose $\varepsilon\in(0,x/2)$ so that
\[
\int_0^{\varepsilon} u^{r(\alpha - 1)} f(u) \, \rd u < \infty
\]
and the conclusion of Proposition~\ref{prop:near-zero} holds on $(0,\varepsilon]$. Choose $\rho\in(0,x/2)$ sufficiently small so that $[x - \rho,x + \rho]$ is contained in the smoothness neighborhood of $x$ from Assumption~\ref{ass:smooth}. For fixed $M > 0$, and for all sufficiently small $h$ such that $M\sqrt{h} < \rho$, write
\[
\int_0^{\infty} k_h(u; x)^2 f(u) \, \rd u = \left(\int_{|u - x|\leq M \sqrt{h}} + \int_{M \sqrt{h} < |u - x|\leq \rho} + \int_{|u - x| > \rho}\right) k_h(u; x)^2 f(u) \, \rd u.
\]
\begin{enumerate}[(i)]
\item Let us first consider the local region $|u - x|\leq M \sqrt{h}$. Since $f$ is continuous at $x$, uniformly for $|u - x|\leq M \sqrt{h}$, $f(u) = f(x) + \oo(1)$. Moreover, by Corollary~\ref{cor:local-gaussian},
\[
k_h(u; x) = \frac{1}{\sqrt{2\pi h x^p}} \exp\left\{-\frac{(u - x)^2}{2h x^p}\right\}\{1 + \oo(1)\}, \qquad h\to 0,
\]
uniformly for $|u - x|\leq M \sqrt{h}$. Hence
\[
k_h(u; x)^2 = \frac{1}{2\pi h x^p} \exp\left\{-\frac{(u - x)^2}{h x^p}\right\}\{1 + \oo(1)\},
\]
uniformly on the same region, and therefore
\[
\int_{|u - x|\leq M \sqrt{h}} k_h(u; x)^2 f(u) \, \rd u = \frac{f(x)}{2\pi h x^p} \int_{|u - x|\leq M \sqrt{h}} \exp\left\{-\frac{(u - x)^2}{h x^p}\right\} \, \rd u + \oo(h^{-1/2}).
\]
With the substitution $u = x + \sqrt{h} \, t$,
\[
\int_{|u - x|\leq M \sqrt{h}} k_h(u; x)^2 f(u) \, \rd u = \frac{f(x)}{2\pi \sqrt{h} \, x^p} \int_{|t|\leq M}\exp\left(-\frac{t^2}{x^p}\right) \, \rd t + \oo(h^{-1/2}).
\]

\item Next consider the intermediate region $M \sqrt{h} < |u - x|\leq \rho$. By the local expansion of $d_p(u,x)$, there exists $c_1 > 0$ such that
\[
d_p(u,x)\geq c_1 (u - x)^2, \qquad |u - x|\leq \rho.
\]
This follows from the fact that
\[
d_p(u,x) = \frac{(u - x)^2}{x^p} + \oo\bigl((u - x)^2\bigr) \qquad\text{as }u\to x.
\]
By Lemma~\ref{lem:stirling}, uniformly for $u\in[x - \rho,x + \rho]$,
\[
k_h(u; x) = (2\pi h \, u^p)^{-1/2} \exp\left\{-\frac{d_p(u,x)}{2h}\right\}\{1 + \oo(1)\}.
\]
Hence, for all sufficiently small $h$,
\[
\int_{M \sqrt{h} < |u - x|\leq \rho} k_h(u; x)^2 f(u) \, \rd u \leq C h^{-1} \int_{M \sqrt{h} < |u - x|\leq \rho} \exp\left\{-\frac{c_1 (u - x)^2}{h}\right\} \, \rd u.
\]
Again setting $u = x + \sqrt{h} \, t$, one has
\[
\int_{M \sqrt{h} < |u - x|\leq \rho} k_h(u; x)^2 f(u) \, \rd u \leq C h^{-1/2}\int_{|t| > M} e^{-c_1 t^2} \, \rd t.
\]
Therefore,
\[
\limsup_{h\to 0} h^{1/2}\int_{M \sqrt{h} < |u - x|\leq \rho} k_h(u; x)^2 f(u) \, \rd u \leq C\int_{|t| > M} e^{-c_1 t^2} \, \rd t,
\]
and the right-hand side tends to $0$ as $M\to \infty$.

\item It remains to handle the region $|u - x| > \rho$. Since $\varepsilon < x/2$ and $\rho < x/2$, we have $\varepsilon < x - \rho$, so $(0,\varepsilon]\subset\{u:|u - x| > \rho\}$. Choose $L > x + \rho$. Then $[L,\infty)\subset\{u:|u - x| > \rho\}$, and hence
\[
\int_{|u - x| > \rho} k_h(u; x)^2 f(u) \, \rd u = \left(\int_0^{\varepsilon} + \int_{\substack{\varepsilon < u\leq L\\ |u - x| > \rho}} + \int_L^{\infty}\right) k_h(u; x)^2 f(u) \, \rd u.
\]
where $L > x + \rho$ is chosen sufficiently large, as specified below. By H\"older's inequality,
\[
\left(\int_0^{\varepsilon} u^{2(\alpha - 1)}f(u) \, \rd u\right)^{1/2} \leq \left(\int_0^{\varepsilon} u^{r(\alpha - 1)}f(u) \, \rd u\right)^{1/r} \left(\int_0^{\varepsilon} f(u) \, \rd u\right)^{1/2 - 1/r} < \infty.
\]
Therefore, by Proposition~\ref{prop:near-zero},
\[
\int_0^{\varepsilon} k_h(u; x)^2 f(u) \, \rd u \leq C h^{-2a_0} e^{-2c_0/h} \int_0^{\varepsilon} u^{2(\alpha - 1)} f(u) \, \rd u = \oo(h^{-1/2}).
\]

On the compact set $A_{\rho,\varepsilon,L} = \{u\in[\varepsilon,L]:|u - x|\geq \rho\}$, the deviance $d_p(u,x)$ is continuous and strictly positive, so $c_{\rho,\varepsilon,L} = \inf_{u\in A_{\rho,\varepsilon,L}} d_p(u,x) > 0$. Applying Lemma~\ref{lem:stirling} uniformly on the compact set $A_{\rho,\varepsilon,L}$,
\[
\int_{\substack{\varepsilon < u\leq L\\ |u - x| > \rho}} k_h(u; x)^2 f(u) \, \rd u \leq C h^{-1} e^{-c_{\rho,\varepsilon,L}/h} = \oo(h^{-1/2}).
\]

For the tail term, by Proposition~\ref{prop:away-zero},
\[
k_h(u; x)^2 \leq \left(C h^{-1/2}\exp\left\{-\frac{d_p(u,x)}{4h}\right\}\right)^2 \leq C h^{-1}\exp\left\{-\frac{d_p(u,x)}{2h}\right\}, \qquad u\geq L,
\]
which is valid for sufficiently small $h$. Since $d_p(u,x)\to \infty$ as $u\to \infty$, choose $L$ so large that
\[
d_p(u,x)\geq 2c_L > 0, \qquad u\geq L.
\]
Therefore
\[
\int_L^{\infty} k_h(u; x)^2 f(u) \, \rd u \leq C h^{-1} e^{-c_L/h}\int_L^{\infty} f(u) \, \rd u = \oo(h^{-1/2}).
\]
Hence
\[
\int_{|u - x| > \rho} k_h(u; x)^2 f(u) \, \rd u = \oo(h^{-1/2}).
\]
\end{enumerate}
Combining the three pieces, for each fixed $M > 0$,
\[
\int_0^{\infty} k_h(u; x)^2 f(u) \, \rd u = \frac{f(x)}{2\pi \sqrt{h} \, x^p} \int_{|t|\leq M}\exp\left(-\frac{t^2}{x^p}\right) \, \rd t + \oo(h^{-1/2}) + R_h(M),
\]
where $R_h(M)$ is the contribution from the intermediate region and satisfies
\[
\limsup_{h\to 0} h^{1/2}|R_h(M)| \leq C\int_{|t| > M} e^{-c_1 t^2} \, \rd t.
\]
Letting $M\to \infty$ and using
\[
\int_{-\infty}^{\infty} \exp\left(-\frac{t^2}{x^p}\right) \, \rd t = \sqrt{\pi x^p},
\]
we obtain
\[
\int_0^{\infty} k_h(u; x)^2 f(u) \, \rd u = \frac{f(x)}{2\sqrt{\pi} \, x^{p/2}} \, h^{-1/2} + \oo(h^{-1/2}).
\]
Substituting this into \eqref{eq:EKh2-start} gives
\[
\EE[K_h(X; x)^2] = (1 - p_0)\frac{f(x)}{2\sqrt{\pi} \, x^{p/2}} \, h^{-1/2} + \oo(h^{-1/2}) = \frac{g(x)}{2\sqrt{\pi} \, x^{p/2}} \, h^{-1/2} + \oo(h^{-1/2}).
\]
Finally, from the already established bias expansion,
\[
\EE[K_h(X; x)] = \EE[\widehat{g}_h(x)] = g(x) + \OO(h),
\]
so
\[
\EE[K_h(X; x)]^2 = g(x)^2 + \OO(h) = \OO(1) = \oo(h^{-1/2}).
\]
Combining this with \eqref{eq:var-basic-polished} yields
\[
\Var\{\widehat{g}_h(x)\} = \frac{1}{n} \left\{\frac{g(x)}{2\sqrt{\pi} \, x^{p/2}} \, h^{-1/2} + \oo(h^{-1/2})\right\},
\]
which proves the result. \qed

\subsection{Proof of \texorpdfstring{\hyperref[thm:asymp-normal]{Theorem~\ref{thm:asymp-normal}}}{Theorem~\ref{thm:asymp-normal}}}

Define
\[
Y_{i,n} = K_h(X_i; x) - \EE\{K_h(X; x)\}, \qquad i = 1,\dots,n.
\]
Then $\EE(Y_{i,n}) = 0$, and for each $n$, the variables $Y_{1,n},\dots,Y_{n,n}$ are i.i.d. Moreover,
\[
\widehat{g}_h(x) - \EE\{\widehat{g}_h(x)\} = \frac{1}{n} \sum_{i=1}^n Y_{i,n}.
\]
Set
\[
\sigma^2(x) = \frac{g(x)}{2\sqrt{\pi} \, x^{p/2}}, \qquad s_n^2 = \sum_{i=1}^n \Var(Y_{i,n}) = n\Var(Y_{1,n}).
\]
By Theorem~\ref{thm:var},
\[
\Var\{\widehat{g}_h(x)\} = \frac{1}{n} \{\sigma^2(x) \, h^{-1/2} + \oo(h^{-1/2})\}, \qquad n\to \infty,
\]
and hence
\[
\Var(Y_{1,n}) = n \, \Var\{\widehat{g}_h(x)\} = \sigma^2(x) \, h^{-1/2} + \oo(h^{-1/2}).
\]
Next, let $r = 2 + \delta$, where $\delta > 0$ is as in Assumption~\ref{ass:boundary}. By the inequality
\[
|a - b|^r \leq 2^{r - 1}(|a|^r + |b|^r),
\]
one has
\[
\EE|Y_{1,n}|^r \leq 2^{r - 1} \left(\EE\left\{K_h(X; x)^r\right\} + \bigl|\EE\{K_h(X; x)\}\bigr|^r\right).
\]
By Lemma~\ref{lem:moment-condition}, $\EE\{K_h(X; x)^r\} = \OO(h^{-(r - 1)/2})$, while Theorem~\ref{thm:bias_tkde} implies
\[
\EE\{K_h(X; x)\} = \EE\{\widehat{g}_h(x)\} = g(x) + \OO(h) = \OO(1).
\]
Hence
\begin{equation}\label{eq:rth-moment-centered}
\EE|Y_{1,n}|^r = \OO(h^{-(r - 1)/2}).
\end{equation}

First suppose that $g(x) > 0$. Then $\sigma^2(x) > 0$, so the variance expansion above can be rewritten as
\[
\Var(Y_{1,n}) = \sigma^2(x) \, h^{-1/2}\{1 + \oo(1)\}, \qquad n\to \infty,
\]
and therefore
\begin{equation}\label{eq:sn2-clt}
s_n^2 = n \, \sigma^2(x) \, h^{-1/2}\{1 + \oo(1)\}.
\end{equation}
Now verify the Lyapunov condition:
\[
\frac{1}{s_n^r}\sum_{i=1}^n \EE|Y_{i,n}|^r = \frac{n \, \EE|Y_{1,n}|^r}{s_n^r}.
\]
Using \eqref{eq:sn2-clt} and \eqref{eq:rth-moment-centered},
\[
\frac{1}{s_n^r}\sum_{i=1}^n \EE|Y_{i,n}|^r
= \frac{n \, \OO(h^{-(r - 1)/2})} {\big(n \, \sigma^2(x) \, h^{-1/2}\{1 + \oo(1)\}\big)^{r/2}}
= \OO(n^{1 - r/2} \, h^{-(r - 1)/2 + r/4}).
\]
Since $r = 2 + \delta$, $1 - r/2 = -\delta/2$, $-(r-1)/2 + r/4 = -\delta/4$, we have
\[
\frac{1}{s_n^r}\sum_{i=1}^n \EE|Y_{i,n}|^r = \OO(n^{-\delta/2}h^{-\delta/4}) = \OO((n h^{1/2})^{-\delta/2}) \longrightarrow 0,
\]
because $n h^{1/2}\to \infty$ by Assumption~\ref{ass:parameter2}. Thus the Lyapunov condition holds.

Therefore, by the Lyapunov central limit theorem for triangular arrays~\citep[Theorem~27.3]{Billingsley1995},
\[
\frac{1}{s_n}\sum_{i=1}^n Y_{i,n} \stackrel{d}{\longrightarrow} \mathcal{N}(0,1).
\]
Now
\[
n^{1/2} h^{1/4} \big(\widehat{g}_h(x) - \EE\{\widehat{g}_h(x)\}\big) = \frac{h^{1/4}}{\sqrt n}\sum_{i=1}^n Y_{i,n} = \frac{h^{1/4}s_n}{\sqrt n} \, \frac{1}{s_n}\sum_{i=1}^n Y_{i,n}.
\]
By \eqref{eq:sn2-clt},
\[
\frac{h^{1/4}s_n}{\sqrt n} = \sqrt{h^{1/2}\Var(Y_{1,n})} \longrightarrow \sqrt{\sigma^2(x)}.
\]
Hence Slutsky's theorem gives
\[
n^{1/2} h^{1/4} \big(\widehat{g}_h(x) - \EE\{\widehat{g}_h(x)\}\big) \stackrel{d}{\longrightarrow} \mathcal{N}\bigl(0,\sigma^2(x)\bigr),
\]
where $\sigma^2(x) = g(x)/(2\sqrt{\pi}x^{p/2})$. If $g(x) = 0$, then $\sigma^2(x) = 0$, and the variance expansion above yields
\[
\Var(Y_{1,n}) = \oo(h^{-1/2}).
\]
Hence
\[
\Var\left[n^{1/2} h^{1/4} \big(\widehat{g}_h(x) - \EE\{\widehat{g}_h(x)\}\big)\right] = n h^{1/2}\Var\{\widehat{g}_h(x)\} = h^{1/2}\Var(Y_{1,n}) \longrightarrow 0.
\]
Since the random variable inside the variance has mean zero, it follows that
\[
n^{1/2} h^{1/4} \big(\widehat{g}_h(x) - \EE\{\widehat{g}_h(x)\}\big) \longrightarrow 0 \qquad \text{in }L^2,
\]
and hence converges in probability and in distribution to the degenerate distribution $\mathcal{N}(0,0)$.

Therefore, in all cases,
\[
n^{1/2} h^{1/4} \big(\widehat{g}_h(x) - \EE\{\widehat{g}_h(x)\}\big) \stackrel{d}{\longrightarrow} \mathcal{N}\left(0,\frac{g(x)}{2\sqrt{\pi} \, x^{p/2}}\right).
\]
Finally, by Theorem~\ref{thm:bias_tkde},
\[
\EE\{\widehat{g}_h(x)\} - g(x) = \frac{1}{2} \, h x^p g''(x) + \oo(h),
\]
so
\[
n^{1/2} h^{1/4} \bigl(\EE\{\widehat{g}_h(x)\} - g(x)\bigr) = \frac{1}{2} x^p g''(x) n^{1/2} h^{5/4} + \oo(n^{1/2} h^{5/4}).
\]

Moreover, if $n h^{5/2}\to \lambda\in[0,\infty)$, then
\[
n^{1/2} h^{1/4} \bigl(\EE\{\widehat{g}_h(x)\} - g(x)\bigr) \longrightarrow \frac{1}{2} x^p g''(x)\sqrt{\lambda}.
\]
Combining this with the centered limit above and applying Slutsky's theorem,
\[
n^{1/2} h^{1/4} \big(\widehat{g}_h(x) - g(x)\big) \stackrel{d}{\longrightarrow} \mathcal{N}\left(\frac{1}{2} x^p g''(x)\sqrt{\lambda}, \frac{g(x)}{2\sqrt{\pi} \, x^{p/2}}\right).
\]
This completes the proof. \qed

\renewcommand{\thesection}{Appendix~\Alph{section}}
\section{Additional simulation tables}\label{app:additional.tables}
\renewcommand{\thesection}{\Alph{section}}

\begin{table}[H]
\centering
\caption{Monte Carlo performance of density estimators on the positive half-line $(0,\infty)$ under simulation scenarios M.1 and M.2. For each sample size $n$ and zero mass $p_0$, the Monte Carlo means of the integrated squared error ($\mathrm{ISE}_{+}$) and integrated absolute error ($\mathrm{IAE}_{+}$) based on $500$ simulation replicates are reported. Results are shown for the proposed Tweedie kernel estimator, Gaussian KDE with plug-in and least-squares cross-validation (LSCV) bandwidth selection, and Gamma KDE.}
\label{tab:Model1-2}
\setlength{\tabcolsep}{2.7pt}
\renewcommand{\arraystretch}{0.90}
\scalebox{0.85}{
\begin{tabular}{c c c *{8}{r}}
\toprule[1.5pt]
\multirow{2}{*}{Scenario} & \multirow{2}{*}{$n$} & \multirow{2}{*}{$p_0$} & \multicolumn{2}{c}{Tweedie estimator} & \multicolumn{2}{c}{KDE plug-in} & \multicolumn{2}{c}{KDE LSCV} & \multicolumn{2}{c}{Gamma KDE} \\
\cmidrule(lr){4-5}
\cmidrule(lr){6-7}
\cmidrule(lr){8-9}
\cmidrule(lr){10-11}
& & & Mean $\mathrm{ISE}_{+}$ & Mean $\mathrm{IAE}_{+}$ & Mean $\mathrm{ISE}_{+}$ & Mean $\mathrm{IAE}_{+}$ & Mean $\mathrm{ISE}_{+}$ & Mean $\mathrm{IAE}_{+}$ & Mean $\mathrm{ISE}_{+}$ & Mean $\mathrm{IAE}_{+}$ \\
\midrule[1pt]
\multirow{9}{*}{M.1} & \multirow{3}{*}{100} & 0.15 & 0.0091 & 0.1684 & 0.0110 & 0.1873 & 0.0120 & 0.1984 & 0.0084 & 0.1592 \\
& & 0.30 & 0.0065 & 0.1500 & 0.0070 & 0.1680 & 0.0080 & 0.1781 & 0.0057 & 0.1435 \\
& & 0.45 &0.0041 & 0.1315 & 0.0039 & 0.1408 & 0.0046 & 0.1502 & 0.0034 & 0.1234 \\
\cmidrule(lr){2-11}
& \multirow{3}{*}{200} & 0.15 & 0.0052 & 0.1264 & 0.0074 & 0.1510 & 0.0071 & 0.1537 & 0.0054 & 0.1259 \\
& & 0.30 & 0.0038 & 0.1136 & 0.0044 & 0.1315 & 0.0047 & 0.1373 & 0.0037 & 0.1136 \\
& & 0.45 & 0.0023 & 0.0985 & 0.0023 & 0.1080 & 0.0027 & 0.1147 & 0.0021 & 0.0958 \\
\cmidrule(lr){2-11}
& \multirow{3}{*}{500} & 0.15 & 0.0026 & 0.0890 & 0.0041 & 0.1110 & 0.0036 & 0.1108 & 0.0032 & 0.0939 \\
& & 0.30 & 0.0017 & 0.0777 & 0.0022 & 0.0927 & 0.0022 & 0.0949& 0.0022 & 0.0863 \\
& & 0.45 & 0.0010 & 0.0672 & 0.0011 & 0.0760 & 0.0013 & 0.0797 & 0.0012 &0.0722 \\
\midrule[1pt]
\multirow{9}{*}{M.2} & \multirow{3}{*}{100} & 0.15 & 0.0971 & 0.1725 & 0.0844 & 0.1742 & 0.1041 & 0.1987 & 0.0768 & 0.1440 \\
& & 0.30 & 0.0759 &0.1529 & 0.0661 & 0.1549 & 0.0792 & 0.1730 & 0.0614 & 0.1311 \\
& & 0.45 & 0.0583 & 0.1339 & 0.0491 & 0.1349 & 0.0599 & 0.1491 & 0.0461 & 0.1152 \\
\cmidrule(lr){2-11}
& \multirow{3}{*}{200} & 0.15 & 0.0549 & 0.1303 & 0.0577 & 0.1404 & 0.0653 & 0.1586 & 0.0497 & 0.1102 \\
& & 0.30 & 0.0448 & 0.1154 & 0.0444 & 0.1244 & 0.0516 & 0.1410 & 0.0390 & 0.1001 \\
& & 0.45 & 0.0311 & 0.0977 & 0.0317 & 0.1063 & 0.0374 & 0.1206 & 0.0282 & 0.0873 \\
\cmidrule(lr){2-11}
& \multirow{3}{*}{500} & 0.15 & 0.0269 & 0.0923 & 0.0349 & 0.1046 & 0.0375 & 0.1199 & 0.0328 & 0.0787 \\
& & 0.30 & 0.0216 & 0.0828 & 0.0271 & 0.0931 & 0.0294 & 0.1059 & 0.0246 & 0.0708 \\
& & 0.45 & 0.0160 & 0.0707 & 0.0189 & 0.0791 & 0.0210 & 0.0897 & 0.0170 & 0.0614 \\
\bottomrule[1.5pt]
\end{tabular}}
\end{table}

\begin{table}[H]
\centering
\caption{Monte Carlo performance of density estimators on the positive half-line $(0,\infty)$ under simulation scenarios M.3 and M.4. For each sample size $n$ and zero mass $p_0$, the Monte Carlo means of the integrated squared error ($\mathrm{ISE}_{+}$) and integrated absolute error ($\mathrm{IAE}_{+}$) based on $500$ simulation replicates are reported. Results are shown for the proposed Tweedie kernel estimator, Gaussian KDE with plug-in and least-squares cross-validation (LSCV) bandwidth selection, and Gamma KDE.}
\label{tab:Model3-4}
\setlength{\tabcolsep}{2.7pt}
\renewcommand{\arraystretch}{0.90}
\scalebox{0.85}{
\begin{tabular}{c c c *{8}{r}}
\toprule[1.5pt]
\multirow{2}{*}{Scenario} & \multirow{2}{*}{$n$} & \multirow{2}{*}{$p_0$} & \multicolumn{2}{c}{Tweedie estimator} & \multicolumn{2}{c}{KDE plug-in} & \multicolumn{2}{c}{KDE LSCV} & \multicolumn{2}{c}{Gamma KDE} \\
\cmidrule(lr){4-5}
\cmidrule(lr){6-7}
\cmidrule(lr){8-9}
\cmidrule(lr){10-11}
& & & Mean $\mathrm{ISE}_{+}$ & Mean $\mathrm{IAE}_{+}$ & Mean $\mathrm{ISE}_{+}$ & Mean $\mathrm{IAE}_{+}$ & Mean $\mathrm{ISE}_{+}$ & Mean $\mathrm{IAE}_{+}$ & Mean $\mathrm{ISE}_{+}$ & Mean $\mathrm{IAE}_{+}$ \\
\midrule[1pt]
\multirow{9}{*}{M.3} & \multirow{3}{*}{100} & 0.15 & 0.0334 & 0.2401 & 0.2234 & 0.5763 & 0.0516 & 0.4813 & 0.1568 & 0.5227 \\
& & 0.30 & 0.0286 & 0.2259 & 0.1558 & 0.4880 & 0.0403 & 0.4233 & 0.1112 & 0.4450 \\
& & 0.45 & 0.0205 & 0.1995 & 0.0995 & 0.3970 & 0.0286 & 0.3573 & 0.0730 & 0.3646 \\
\cmidrule(lr){2-11}
& \multirow{3}{*}{200} & 0.15 & 0.0190 & 0.1812 & 0.1975 & 0.5161 & 0.0384 & 0.3960 & 0.1383 & 0.4773 \\
& & 0.30 & 0.0150 & 0.1689 & 0.1395 & 0.4389 & 0.0283 & 0.3470 & 0.0973 & 0.4033 \\
& & 0.45 & 0.0114 & 0.1480 & 0.0900 & 0.3590 & 0.0192 & 0.2898 & 0.0631 & 0.3290 \\
\cmidrule(lr){2-11}
& \multirow{3}{*}{500} & 0.15 & 0.0091 & 0.1306 & 0.1573 & 0.4381 & 0.0302 & 0.3054 & 0.1207 & 0.4272 \\
& & 0.30 & 0.0068 & 0.1219 & 0.1128 & 0.3748 & 0.0206 & 0.2639& 0.0843 & 0.3603 \\
& & 0.45 & 0.0053 & 0.1062 & 0.0740 & 0.3075 & 0.0136 & 0.2219 & 0.0539 &0.2917 \\
\midrule[1pt]
\multirow{9}{*}{M.4} & \multirow{3}{*}{100} & 0.15 & 0.0129 & 0.2171 & 0.0279 & 0.2915 & 0.0158 & 0.2633 & 0.0240 & 0.2922 \\
& & 0.30 & 0.0108 & 0.2037 & 0.0207 &0.2546 & 0.0125 & 0.2335 & 0.0172 & 0.2481 \\
& & 0.45 & 0.0079 & 0.1777 & 0.0142 & 0.2164 & 0.0092 & 0.1998 & 0.0113 & 0.2029 \\
\cmidrule(lr){2-11}
& \multirow{3}{*}{200} & 0.15 & 0.0072 & 0.1644 & 0.0199 & 0.2345 & 0.0096 & 0.2073 & 0.0205 & 0.2643 \\
& & 0.30 & 0.0060 & 0.1541 & 0.0149 & 0.2073 & 0.0077 & 0.1855 & 0.0146 & 0.2246 \\
& & 0.45 & 0.0044 & 0.1326 & 0.0104 & 0.1762 & 0.0057 & 0.1594 & 0.0095 & 0.1828 \\
\cmidrule(lr){2-11}
& \multirow{3}{*}{500} & 0.15 & 0.0032 & 0.1161 & 0.0112 & 0.1709 & 0.0047 & 0.1461 & 0.0166 & 0.2302 \\
& & 0.30 & 0.0026 & 0.1045 & 0.0087 & 0.1522 & 0.0037 & 0.1306 & 0.0119 & 0.1956 \\
& & 0.45 & 0.0019 & 0.0901 & 0.0063 & 0.1310 & 0.0027 & 0.1121 & 0.0078 & 0.1595 \\
\bottomrule[1.5pt]
\end{tabular}}
\end{table}

\end{appendices}

\section*{Author contributions}
\addcontentsline{toc}{section}{Author contributions}

CRediT: \textbf{G.\ Lyu}: Conceptualization, Methodology, Software, Writing -- original draft; \textbf{F.\ Ouimet}: Validation, Investigation, Writing -- review \& editing; \textbf{C.\ Feng}: Validation, Investigation, Writing -- review \& editing.

%\section*{Disclosure statement}
%\addcontentsline{toc}{section}{Disclosure statement}

%The authors declare no conflicts of interest.

\section*{Reproducibility}
\addcontentsline{toc}{section}{Reproducibility}

The \textsf{R} code implementing the proposed estimator and illustrating the simulation workflow is publicly available in the GitHub repository at~\url{https://github.com/GuanjieLyu/Tweedie-KDE}.

\section*{Funding}
\addcontentsline{toc}{section}{Funding}

%Funding in support of this work was provided by the Natural Sciences and Engineering Research Council of Canada through grants RGPIN-2026-04471 and DGECR-2026-00449 awarded to Fr\'ed\'eric Ouimet.

Funding for this work was provided in part by the Natural Sciences and Engineering Research Council of Canada (NSERC) through Discovery Grant RGPIN-2026-04471 and Discovery Launch Supplement DGECR-2026-00449 awarded to Fr\'ed\'eric Ouimet. Additional support for this work was provided by Research Nova Scotia through a New Health Investigator Grant and by NSERC through a Discovery Grant (RGPIN-2019-07212), both awarded to Cindy Feng. Guanjie Lyu and Cindy Feng also acknowledge support from the Mitacs Accelerate program.

\addcontentsline{toc}{section}{References}
\setlength{\bibsep}{0pt plus 0ex} % This line reduces the gap between references
\bibliographystyle{plainnat}
\bibliography{bib_clean}

\end{document}